





\documentclass[conference]{IEEEtran}
\IEEEoverridecommandlockouts
\def\BibTeX{{\rm B\kern-.05em{\sc i\kern-.025em b}\kern-.08em
    T\kern-.1667em\lower.7ex\hbox{E}\kern-.125emX}}

\usepackage[full]{optional}
\opt{narrowmargin}{
\usepackage{geometry} 
\newgeometry{top=0.75in, bottom=1in, left=0.625in, right=0.625in, twocolumn}
}
\usepackage{fancyhdr}
\opt{short}{\newcommand{\revise}[1]{\textcolor{blue}{#1}}} 
\opt{full}{\newcommand{\revise}[1]{{#1}}} 

\usepackage[utf8]{inputenc}
\usepackage{graphics}
\usepackage{blindtext, graphicx}
\usepackage{xcolor}
\usepackage{amsfonts,amssymb}
\usepackage{amsthm,amsmath}
\usepackage{xfrac}
\usepackage[spaces,hyphens]{xurl}
\usepackage[hidelinks]{hyperref}
\usepackage{makecell}
\usepackage{subfigure}
\usepackage{scalefnt}
\newtheorem{lemma}{Lemma}
\newtheorem{definition}{Definition}

\newtheorem{theorem}{Theorem}
\theoremstyle{definition}
\newtheorem{example}{Example}

\newenvironment{proofsk}{\proof[Proof Sketch]}{\endproof}

\newcommand{\argmax}{\operatornamewithlimits{argmax}}
\usepackage[ruled,vlined,linesnumbered,procnumbered]{algorithm2e}

\newcommand\xqed[1]{%
  \leavevmode\unskip\penalty9999 \hbox{}\nobreak\hfill
  \quad\hbox{#1}}
\newcommand\exend{\xqed{$\blacksquare$}}

\SetCommentSty{mycommfont}

\usepackage{tikz}
\usetikzlibrary{arrows}
\usetikzlibrary{positioning}
\usetikzlibrary{shapes.geometric}
\tikzset{
    >=stealth',
    shorten >=1pt,
    auto,
    thick,
    item node/.style={rectangle,draw,align=center,font=\ttfamily\small},   
    company node/.style={ellipse,draw,align=center,font=\ttfamily\small},
    event node/.style={trapezium,trapezium left angle=60,trapezium right angle=-60,draw, align=center,font=\ttfamily\small},
    node distance=1.4cm,
    edge/.style={->,font=\scshape\small},
}

\newcommand{\arwen}[1] 
{
{\textcolor{orange}{#1}}
}

\newcommand{\q}[1] 
{
{\textcolor{red}{#1}}
}

\newcommand{\hana}[1] 
{
{\textcolor{purple}{#1}}
}

\newcommand{\fullproblem}{Influence Maximization based on Dynamic Personal Perception}
\newcommand{\problem}{IMDPP}
\newcommand{\fullsproblem}{Simple IMDPP}
\newcommand{\sproblem}{SIMDPP}
\newcommand{\fullalgo}{\underline{Dy}namic perception for \underline{s}eeding \underline{i}n target \underline{m}arkets}
\newcommand{\algo}{Dysim}
\newcommand{\fullsalgo}{Simple Dysim}
\newcommand{\salgo}{SDysim}


\begin{document}
\setlength{\textfloatsep}{5pt}
\setlength{\floatsep}{0pt}

\title{Influence Maximization Based on Dynamic Personal Perception in Knowledge Graph}

\makeatletter
\newcommand{\linebreakand}{%
  \end{@IEEEauthorhalign}
  \hfill\mbox{}\par
  \mbox{}\hfill\begin{@IEEEauthorhalign}
}
\makeatother

\author{
\IEEEauthorblockN{Ya-Wen Teng}
\IEEEauthorblockA{\textit{Academia Sinica, Taiwan} \\
ywteng@citi.sinica.edu.tw}
\and
\IEEEauthorblockN{Yishuo Shi}
\IEEEauthorblockA{\textit{Wenzhou University, China} \\
yishuo@wzu.edu.cn}
\and
\IEEEauthorblockN{Chih-Hua Tai}
\IEEEauthorblockA{\textit{National Taipei University, Taiwan} \\
hanatai@mail.ntpu.edu.tw} 
\linebreakand
\IEEEauthorblockN{De-Nian Yang}
\IEEEauthorblockA{\textit{Academia Sinica, Taiwan} \\
dnyang@iis.sinica.edu.tw}
\and
\IEEEauthorblockN{Wang-Chien Lee}
\IEEEauthorblockA{\textit{Penn State University, U.S.A.} \\
wlee@cse.psu.edu}
\and
\IEEEauthorblockN{Ming-Syan Chen}
\IEEEauthorblockA{\textit{National Taiwan University, Taiwan} \\
mschen@ntu.edu.tw}
}
\maketitle
\begin{abstract}
Viral marketing on social networks, also known as \textit{Influence Maximization (IM)}, aims to select $k$ users for the promotion of a target item by maximizing the total spread of their influence. However, most previous works on IM do not explore the dynamic user perception of promoted items in the process. In this paper, by exploiting the knowledge graph (KG) to capture dynamic user perception, we formulate the problem of \textit{\fullproblem\ (\problem)} that considers user preferences and social influence reflecting the impact of relevant item adoptions. We prove the hardness of \problem\ and design an approximation algorithm, named \textit{\fullalgo\ (\algo)}, by exploring the concepts of dynamic reachability, target markets, and substantial influence to select and promote a sequence of relevant items. We evaluate the performance of \algo\ in comparison with the state-of-the-art approaches using real social networks with real KGs. The experimental results show that \algo\ effectively achieves up to 6.7 times of influence spread in large datasets over the state-of-the-art approaches.
\end{abstract}

\begin{IEEEkeywords}
influence maximization, multiple promotions, item relationships, dynamic personal perceptions
\end{IEEEkeywords}


\section{Introduction}
\label{sec:intro}

Social influence \cite{Kempe2003,han2018discount,Nguyen2016Cost} refers to the impact of a social environment on people's behavior. By exploiting the social influence of users, a wide spectrum of applications (e.g., item promotion and viral marketing) have been formulated as various research problems, such as \textit{influence maximization} (IM) \cite{Kempe2003}, revenue maximization (RM) \cite{han2018discount}, and profit maximization (PM) \cite{Nguyen2016Cost}. Among them, IM selects $k$ users as the seeds to promote \textit{one} target item to maximize the number of influenced users. Nevertheless, in real life, companies often promote relevant items in \textit{multiple} events, e.g., Apple Inc.\ usually promotes iPhones, AirPods, and iPads in September, followed by a series of subsequent promotions.\footnote{\url{https://www.apple.com/apple-events/}.} In this work, we address a new IM problem formulated for a sequence of promotions on relevant items.\footnote{After the influence propagation of the seed group for the first promotion finishes, the second follows, and so on.} 

\revise{\label{para:support}For multiple promotions, exploring the dynamic changes in personal perceptions on promoted items is important, since users' perceptions of item relationships may vary according to the changes in users' demand indicated by research in the marketing field \cite{shocker2004product}.} First, the \textit{complementary} and \textit{substitutable} relationships between items affect users' preferences on items \revise{\cite{shocker2004product,venkatesh2003optimal,gerlach2014never}}. In economics, \textit{cross elasticity of demand} \cite{frank1991microeconomics} indicates that adopting complementary items of an item increases the preference for it, while adopting its substitutable items has the opposite effect. For example, users who own iPhones with no headphone jack may be interested in AirPods (due to its complementary relationship with iPhones),\footnote{A real example is in \url{https://amzn.to/2Zl23oT}.} while users who have iPhones may have less interest in iPads (due to their substitutable relationship). Second, the association between items may trigger extra adoptions without promotions \revise{\cite{zhang2020complements,koukova2012multiformat}}. For example, AirPods may be directly adopted together with iPhones due to their complementary relationship. 

Third, the perceptions of these relationships between items are usually personal and dynamic \revise{\cite{shocker2004product,shi2019semrec,gu2019relevance}}, as the items got newly adopted usually bring fresh experiences to users. For example, users who care more about large screens than mobility may treat iPhones as substitutable items of iPads; when these iPad users start to care about the mobility, they may tend to regard iPhones as complementary items of iPads. In turn, the changes in personal perceptions of item relationships lead to changes in users' preferences. Fourth, the dynamic personal perceptions of item relationships also affect users' social influence strength over friends, since friends adopting similar items and sharing similar perceptions tend to become closer \revise{\opt{short}{\cite{ma2015latent,yavacs2014impact}}\opt{full}{\cite{ma2015latent,yavacs2014impact,gu2017co,farajtabar2015coevolve}}}. \revise{To address IM in a sequence of promotions on relevant items, it is essential to carefully examine dynamic personal perceptions of item relationships, together with their ripple effect on personal preferences for items, social influence strength, and item associations.}

Knowledge graph (KG) (along with weighted meta-graphs) to capture the relationships (e.g., the complementary and substitutable relationships) has been well-explored in recommendation systems \opt{short}{\cite{shi2019semrec}}\opt{full}{\cite{zhao2017improving,shi2019semrec,gu2019relevance,Huang2016Meta,zhang2016collaborative}}. As illustrated in Fig.~\ref{Fig:first_examp}, KG represents facts (e.g., \textsc{item} \textsf{iPhone} and \textsc{item} \textsf{AirPods} \textsc{support} the \textsc{feature} \textsf{Bluetooth} in Fig.~\ref{Fig:new_ex_kg}), while meta-graphs capture relationships in the KG (e.g., $m_1$ in Fig.~\ref{Fig:new_ex_meta} describes two \textsc{item}s \textsc{support}ing the same \textsc{feature} are complementary). Note that these meta-graphs can be used to reflect the perception of item relationships, in forms of \textit{personal item network}, for each individual. The personal weighting on each meta-graph describes the significance of this meta-graph to an individual (e.g., the values next to $m_1,\ldots,m_3$ in Fig.~\ref{Fig:new_ex_network1}), while the relevance scores between items describe the strength of their relationships in the mind of this individual \cite{shi2019semrec,gu2019relevance} (e.g., the values on dotted edges in Fig.~\ref{Fig:new_ex_network1}). By adjusting the weightings on meta-graphs according to previous adoptions \cite{shi2019semrec,gu2019relevance}, dynamic personal perceptions of item relationships in individual users can be updated (in Fig.~\ref{Fig:new_ex_network2}). In this paper, we aim to leverage dynamic personal item networks for a sequence of IM promotions.

\begin{figure}
    \centering
    \subfigure[]{
        \centering
        \includegraphics[width=0.235\textwidth]{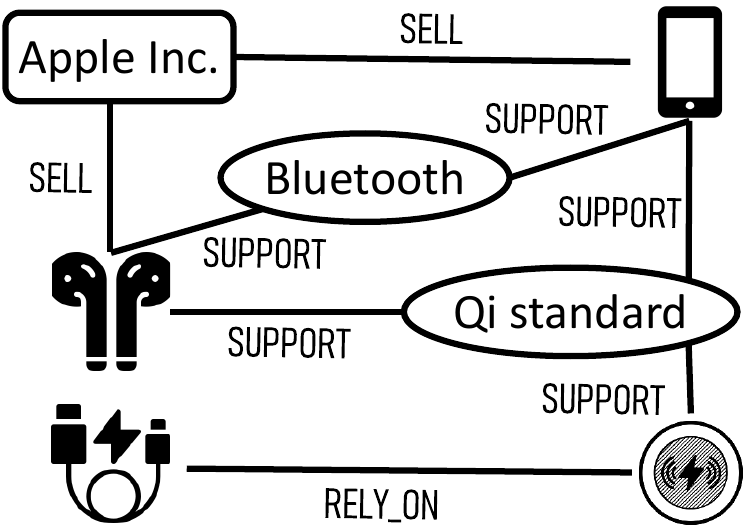}
        \label{Fig:new_ex_kg}
    }%
    \subfigure[]{
        \centering
        \includegraphics[width=0.235\textwidth]{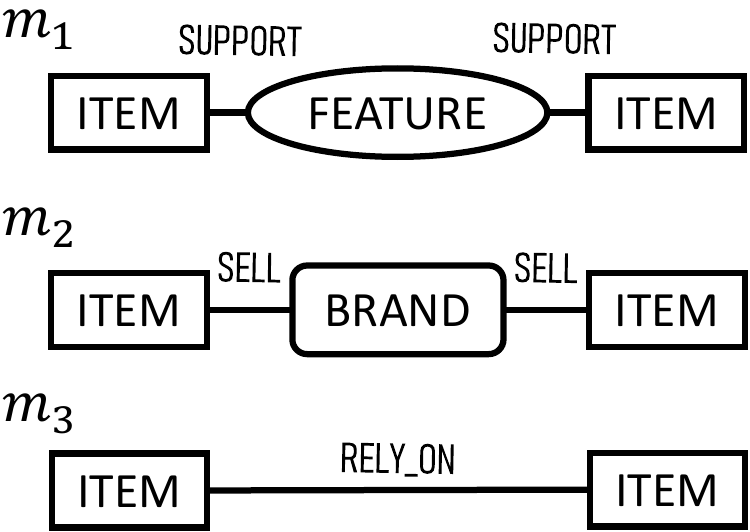}
        \label{Fig:new_ex_meta}
    }
    \subfigure[]{
        \centering
        \includegraphics[width=0.235\textwidth]{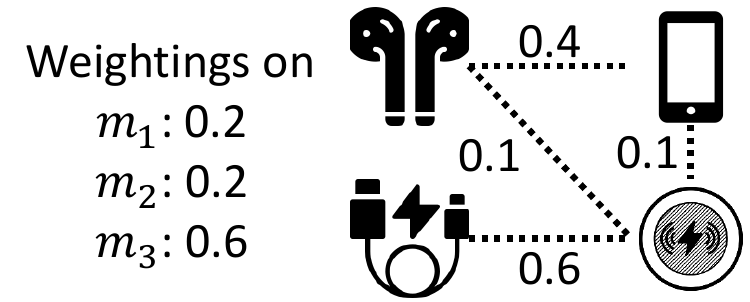}
        \label{Fig:new_ex_network1}
    }%
    \subfigure[]{
        \centering
        \includegraphics[width=0.235\textwidth]{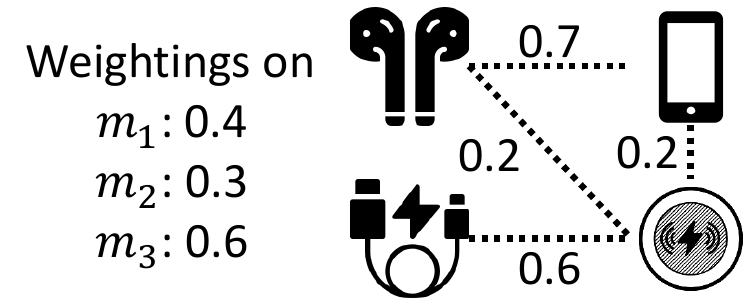}
        \label{Fig:new_ex_network2}
    }
    \caption{(a)~A tiny KG describing facts about the iPhone, AirPods, wireless charger, and charging cable. 
    (b)~Three meta-graphs specifying the complementary relationship. (c)~Bob's initial personal item network, where a dotted edge denotes a complementary relationship. 
    (d)~Update of Bob's personal item network: after adopting iPhone and AirPods, Bob's weightings on $m_1$ and $m_2$ grow, which increases the relevance scores between iPhone, AirPods, and the wireless charger.}
    \label{Fig:first_examp}
\end{figure}

Following up the example in Fig.~\ref{Fig:first_examp}, Fig.~\ref{Fig:ex} illustrates the IM process considering dynamic personal perceptions of item relationships, personal preferences for items, social influence strength, and item associations. As shown, the number of hearts indicates Bob's preference for a not-yet-adopted item, and a solid arrow represents the social influence between users (thickness implies strength). After Bob is promoted iPhone by Alice, Bob's purchase decision depends not only on the influence strength from Alice but also on his own preference for iPhone (in Fig.~\ref{Fig:ex_a}). Meanwhile, item associations usually trigger extra adoptions of relevant items, such as AirPods, according to Bob's item network (in Fig.~\ref{Fig:new_ex_network1}). After Bob purchases iPhone and AirPods, his perception of the complementary relationship changes (i.e., he becomes to regard items supporting common features or belonging to the same brand as complementary), which increases the relevance between iPhone, AirPods, and the wireless charger (in Fig.~\ref{Fig:new_ex_network2}). After that, as Bob has adopted iPhone and AirPods, and their relevance to the wireless charger increases, Bob's preference for the wireless charger grows accordingly.\footnote{A real example is in \url{https://amzn.to/3fW7JLC}.} Moreover, if Cindy acts as a seed to promote the wireless charger, as Bob and Cindy have similar adopted items (in Fig.~\ref{Fig:ex_b}) (indicating Bob shares a similar perception of item relationships with Cindy and tends to behave similarly with Cindy), the influence strength from Cindy to Bob thus becomes stronger. It is easier for Cindy to promote the wireless charger to Bob now, since both Cindy's influence strength to Bob and Bob's preference for the wireless charger increase.

To incorporate factors depicted in the example above, several new challenges arise.
(i)~\textit{Propagation of item impact (i.e., impact due to item adoption)}: Item adoptions change users' personal perceptions of item relationships, their preferences for other items, their strength of social influence among friends, and the item associations. In other words, the promotion of an item may affect the adoptions of subsequent items and thereby the planning for the next promotions. The order of items being promoted matters. 
(ii)~\textit{Antagonism of the substitutable relationship}: Promoting an item after adopting a substitutable item is not beneficial when the first item has met the users' needs. It is thus vital to avoid promoting substitutable items to the same users in consecutive promotions.
(iii)~\textit{Determination of promotional timing}: As the promotions are dependent on previous ones, a seed in early promotions should facilitate subsequent promotions, while a seed in later promotions should focus on potential adoptions benefited from previous promotions. Therefore, determining the proper promotional timing for each seed is essential.

\begin{figure}
    \centering
    \subfigure[]{
        \centering
        \includegraphics[width=0.235\textwidth]{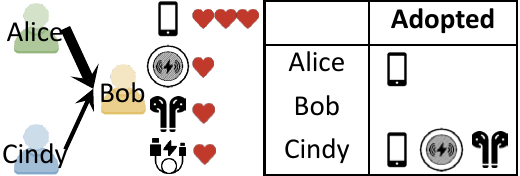}
        \label{Fig:ex_a}
    }%
    \centering
    \subfigure[]{
        \centering
        \includegraphics[width=0.235\textwidth]{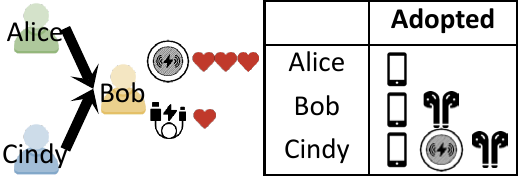}
        \label{Fig:ex_b}
    }
    \caption{Illustration of the \problem\ problem. (a) The states before Bob adopts iPhone and AirPods. (b) The states after Bob adopts iPhone and AirPods.}
    \label{Fig:ex}
\end{figure}

In this work, we formulate a new problem, named \textit{\fullproblem} (referred to as \problem). In contrast to most previous works \opt{short}{\cite{huang2020competitive}}\opt{full}{\cite{huang2020competitive,lu2015competition,lu2013bang}} focusing on one item, given the social network, KG, and meta-graphs for different item relationships, \problem\ targets on multiple promotions to maximize the overall spread of influence by choosing items and selecting seed users for promotion at proper timings under a total budget, where users have different costs as seeds \cite{Nguyen2016Cost}, and each promotion allows multiple items to be promoted. We exploit personal item networks to capture dynamic personal perceptions of complementary and substitutable relationships between items. The adoptions of items dynamically adjust users' weightings on meta-graphs, reflecting dynamic personal perceptions and updating personal item networks. Also, users' preferences for other items, their social influence strength over friends, and the item associations change accordingly, in turn affecting other users' adoptions, their personal weightings on meta-graphs, their preferences for other items, their social influence strength, and the item associations in a ripple effect.
\opt{short}{\revise{For the ease of understanding, we first present the fundamental problem of \problem, referred to as \textit{\fullsproblem\ (\sproblem)}, by focusing on the important factors of dynamic personal perceptions of item relationships and their fundamental ripple effect on dynamic preferences for items, i.e., neglecting dynamic social influence strength and item associations.}}

We prove that \revise{\opt{short}{\sproblem\ and \problem\ are}\opt{full}{\problem\ is}} NP-hard and inapproximable within $O(\frac{1}{\left|V\right|^{1-\epsilon}})$, where $\left|V\right|$ is the number of users and $\epsilon$ is an arbitrarily small constant. We design an approximation algorithm, named \textit{\fullalgo\ (\algo)}, to tackle the above challenges of \problem. For the first challenge in the propagation of item impacts, \algo\ introduces \textit{dynamic reachability} to evaluate the impacts from previously promoted items on the currently chosen item, as well as the potential impact from the current item on any candidate item in subsequent promotions. For the second challenge in the antagonism of the substitutable relationship, \algo\ identifies \textit{target markets} to promote complementary items to socially close users in consecutive promotions. For the third challenge in determining the promotional timing, \algo\ introduces \textit{substantial influence} to evaluate both immediate and subsequent adoptions under the impact of a candidate seed (assigned at some promotional timing). 
\revise{\opt{short}{For \sproblem, where dynamic social influence strength and item associations are not considered, we develop an approximation algorithm, namely \textit{\fullsalgo\ (\salgo)}, in this paper as well.}
We evaluate the performance of \opt{short}{\salgo\ and }\algo\ on real social networks with KGs, i.e., Amazon, Yelp, Douban, and Gowalla. 
\opt{short}{Due to the space constraint, the details of \problem\ and \algo\ are presented in the full-length version \cite{online} of this paper.}} 
The contributions of this work include:

\begin{itemize}
\item \revise{To the best of our knowledge, \problem\ is the first attempt to study the IM problem under a sequence of promotions on relevant items, where the personal perceptions of item relationships are dynamically captured from users' previously adopted items by KG and meta-graphs, and the changes in preferences for items, social influence strength, and item associations are considered as a ripple effect in the diffusion process.}

\item We prove that \revise{\opt{short}{\sproblem\ and \problem\ are}\opt{full}{\problem\ is}} inapproximable within $O(\frac{1}{\left|V\right|^{1-\epsilon}})$ even for a simple case with only the complementary relationship and only one promotion. 

\item \revise{We design an approximation algorithm \algo, which plans a distinct effective promotional strategy for each target market to avoid antagonism between substitutable items. \algo\ carefully examines the dynamic reachability of items to prioritize the promotion of relevant items, and evaluates the substantial influence of candidate seeds to properly determine the promotional timing.}

\item Via real social networks and real KGs, experimental results demonstrate that \revise{\opt{short}{\salgo\ and }\algo\ effectively achieve up to \opt{short}{10.95 times and }6.7 times\opt{short}{, respectively,} of the influence spread over the state-of-the-arts.}
\end{itemize}

\section{Related work}
\label{sec:related_work}

Influence maximization (IM) aims at maximizing the number of influenced users by selecting seed users. It was first formulated as a discrete optimization problem and proved as NP-hard by Kempe et al.\ \cite{Kempe2003}.
\opt{full}{In the work \cite{Kempe2003}, they prove the influence maximization is NP-hard under the triggering models, i.e., the Linear Threshold (LT) and Independent Cascade (IC) diffusion models. The LT model considers a user's threshold and the weighted fraction of her all influenced friends as the criterion to be influenced while the IC model treats every influence independently with a probability. An influenced user can then propagate influence to her friends, and the influence thus diffuses until no user is newly influenced. }Since then, various issues in IM have been actively studied. To address the inefficiency in computing influence spread, some exploit the submodular property and certain heuristics \opt{short}{\cite{chen2010scalable}}\opt{full}{\cite{goyal2011celf++,chen2010scalable}}. Recent works further introduce the reverse influence sampling to approximate the influence with guarantees \opt{short}{\cite{guo2020influence}}\opt{full}{\cite{guo2020influence,Borgs2014Maximizing}}.\opt{short}{\footnote{More introduction on diffusion models and other IM problems is presented in \cite{online}.}} \opt{full}{In addition, extensive research efforts study the maximization of influence on target users \cite{Nguyen2016Cost,guo2013personalized}, at specified locations \cite{song2016targeted,wang2016efficient}, for specific contents \cite{chen2015online,li2015real}, or at specific time \cite{zhuang2013influence,wang2017real}. IM for target users \cite{Nguyen2016Cost,guo2013personalized} focuses on a subset of users and even allows users to contribute different benefits to the company. The location-based IM \cite{song2016targeted,wang2016efficient} takes locations of users and the promoted event into account. As the geographical distance is an important constraint for users, the probability to influence a user is affected by her distance to the promoted event's location.
Different from \cite{wang2016efficient} focusing on the IC model, a variant of the IC model is employed in \cite{song2016targeted}, which considers the login event so that the influence can be propagated to a friend only after she is online with a login probability. In \cite{wang2014novel,wang2015modeling}, the reachability-based diffusion process is regarded as generating a branching tree from the seed to be the egocentric influence rings of the root node (i.e., the seed). The influence is assumed to decay along the path, and it is less likely to reach the nodes that are farther away from the root.}
\revise{Recently, Huang et al.\ \cite{huang2020competitive} point out that users' adopting probabilities of the promoted item should depend on users' previously adopted complementary and substitutable items (which is modeled as dynamic preferences for items in \problem). However, \cite{huang2020competitive} targets only on a specified item in only one promotion with fixed item relationships, whereas \problem\ explores multiple promotions on relevant items and carefully examines the dynamic user perceptions of item relationships. Although various issues, e.g., target audience, scalability, and complementary/substitutable items, are studied, previous works \cite{Kempe2003,Nguyen2016Cost,huang2020competitive,chen2010scalable,guo2020influence} promote only one target item in only a single promotion, instead of multiple target items in multiple promotions, and the phenomenon of dynamic personal perceptions of item relationships together with its ripple effect are not considered.
By contrast, \problem\ aims at a sequence of promotions on \textit{relevant items} modeled by Knowledge Graph, where the item relationships and promotional timings are able to alter users adoption decisions.}

\revise{Some studies investigate IM on promoting multiple target items}, e.g., making exclusive adoption among items \cite{teng2018revenue}, avoiding spamming seeds by overwhelming promotions \cite{datta2010viral}, learning diffusion probabilities of different items \cite{hung2016social}, and maximizing utility-based adoption among desired items \cite{banerjee2019maximizing}. However, they focus on a single promotion and do not consider multiple promotions to promote a sequence of relevant items modeled by KG and meta-graphs. Moreover, they study the problems under simpler scenarios without capturing the dynamic changes in personal perceptions of item relationships, personal preferences for items, social influence strength, and item associations. \revise{By contrast, in \problem, the adoption of items dynamically changes the personal perceptions of complementary and substitutable relationships between items. The changed perceptions of item relationships affect users' preferences for items and users' social influence strength over time, in turn affecting other users' adoptions, their preferences for other items, and their social influence strength as a ripple effect. Therefore, the above works have limitations to \problem, since the promotional timing is critical as users' perceptions of item relationships, preferences for items, and influence strength on friends are dynamic.}

\revise{Research on adaptive IM \cite{sun2018multi,han2018efficient,peng2019adaptive} aims to select the seeds adaptively based on the adoptions in the previous influence diffusion. However, although multiple promotional timings are considered, they consider only one item in the IM problem and ignore multiple target items, item relationships, and dynamic preference for items. Moreover, adaptive IM requires a predefined budget allocation to different promotions, and it does not have the adaptive monotonicity and the adaptive submodularity (or even the adaptive bounded weak-submodularity).\footnote{\revise{The adaptive monotonicity is a property that conditional expected marginal adoptions of any item are non-negative. The adaptive submodularity (or the adaptive bounded weak-submodularity) is the property that conditional expected marginal adoptions of any fixed item do not increase (or boundedly increase) as more items are selected and their states are observed.}}
By contrast, \problem\ does not require a predefined budget allocation to promotions and can be solved by \algo\ with an approximation guarantee (detailed in \opt{short}{\cite{online}}\opt{full}{Sec.~\ref{sec:theoretical}}). }

Knowledge graph (KG) is employed to describe facts in a wide spectrum of applications, e.g., relevance measures and search \cite{gu2019relevance,Huang2016Meta}, and recommendation \opt{short}{\cite{shi2019semrec}}\opt{full}{\cite{shi2019semrec,zhang2016collaborative}}. Shi et al.\ \cite{shi2019semrec} present a new similarity measure through personal weighted meta-paths to include different semantics of similarity. Users' own preferences can thus be derived from these meta-paths. Gu et al.\ \cite{gu2019relevance} point out that a user may have different perceptions of similarity due to the change in her interests. They propose to automatically pick up meta-paths to best characterize the similarity by user-provided examples. Huang et al.\ \cite{Huang2016Meta} further extend meta-paths to meta-graphs to measure similarity with more complex connections. Note that these works focus on predicting the ratings of unknown items for users, which is essentially different from the IM problem. Inspired by the above research, we first attempt to incorporate the above relevance measurements and adopt the above meta-graphs with dynamic personal weightings in influence diffusion of multiple relevant items.

\section{\opt{short}{Fundamental }Problem Formulation}
\label{sec:problem}

\label{para:diffusion}
\revise{To study various issues in multiple promotions of relevant items, we first introduce \opt{short}{two}\opt{full}{four} important factors, which can be easily incorporated into existing diffusion models\opt{short}{, e.g., triggering models \cite{Kempe2003},}\opt{full}{\footnote{For example, triggering models \cite{Kempe2003,sun2018multi} and reachability-based influence-diffusion models \cite{wang2015modeling}.}} by extending the diffusion process, to consider dynamic changes in personal perceptions of item relationships \cite{shocker2004product,shi2019semrec,gu2019relevance} and \opt{short}{their fundamental ripple effect on personal preferences for items \cite{shocker2004product,venkatesh2003optimal,gerlach2014never,frank1991microeconomics}.\footnote{\revise{The complete problem that considers the comprehensive ripple effect on social influence strength \cite{ma2015latent,yavacs2014impact,timmor2008being} and item associations \cite{zhang2020complements,koukova2012multiformat} is presented in the full version \cite{online}.}}}\opt{full}{the ripple effect on personal preference for items \cite{shocker2004product,venkatesh2003optimal,gerlach2014never,frank1991microeconomics}, social influence strength \cite{ma2015latent,yavacs2014impact,timmor2008being}, and item associations \cite{zhang2020complements,koukova2012multiformat}.}}
(1)~\opt{short}{\textit{Relevance measurement}:}\opt{full}{\textit{Relevance measurement} \cite{shi2019semrec,gu2019relevance,Huang2016Meta,zhang2016collaborative}:} KG is leveraged to measure the relevance between two items and find personal item networks, by learning the personal weightings on meta-graphs from users' previously adopted items.\revise{\footnote{\label{para:alternative}\revise{Instead of KG, some lightweight alternatives, such as Tagging algorithm~\cite{zheng2009substitutes}, Sceptre~\cite{mcauley2015inferring}, PMSC~\cite{wang2018path}, and DecGCN~\cite{liu2020decoupled} can also be adopted to learn item relationships. Since the above works derive the item relationships according to all users' adoption history, the item network is no longer personalized. When they are adopted in our problem, all personal item networks will be identical.}}}
(2)~\opt{short}{\textit{Preference estimation}:}\opt{full}{\textit{Preference estimation} \cite{zhao2017improving,banerjee2019maximizing,xin2019relational}:} Users' preferences for not-yet-adopted items are derived and updated based on previously adopted items and personal item networks.
\opt{full}{(3)~\textit{Influence learning} \cite{zhang2019learning,qiu2018deepinf,wu2019dual}: The strength of social influence between two users is inferred and updated according to their similarity in adopted items and personal item networks, since friends with similar backgrounds and intentions usually become closer and more easily influence each other \cite{gu2017co,farajtabar2015coevolve}.
(4)~\textit{Item associations} \cite{zhao2017improving,hung2016social,zhang2016collaborative,xin2019relational}: When users are promoted and preferring the promoted item, extra adoptions of relevant items are usually triggered due to item associations based on users' personal item networks and the probability of being promoted and preferring the promoted item.}
The dependency of these factors is illustrated in \revise{Fig.~\ref{Fig:simple_factors}}, while the discussion of deriving and updating them is detailed in \opt{short}{\cite{online}}\opt{full}{Sec.~\ref{sec:extend_problem}}. For ease of discussion, we summarize the notations in \revise{Table~\ref{T:simple_notation}}.

\opt{short}{\begin{figure}
    \centering
    \includegraphics[width=0.48\textwidth]{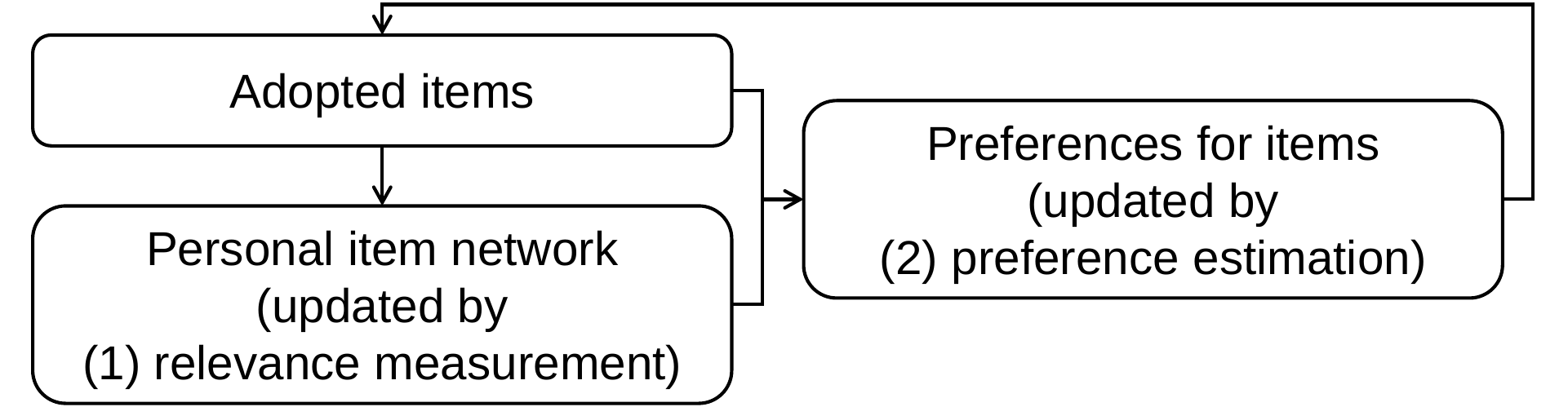}
    \caption{\revise{Illustration of the two dependent factors.}}
    \label{Fig:simple_factors}
\end{figure}
}
\opt{full}{
\begin{figure}
    \centering
    \includegraphics[width=0.48\textwidth]{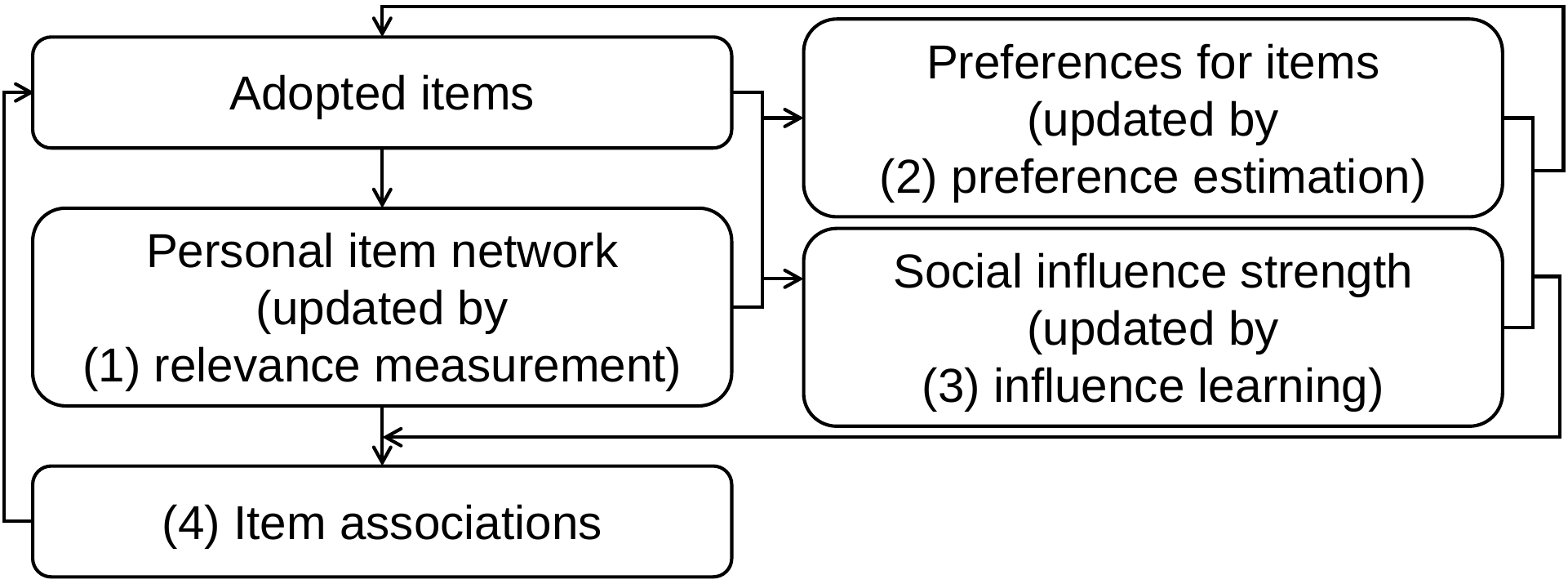}
    \caption{\revise{Illustration of the two dependent factors.}}
    \label{Fig:simple_factors}
\end{figure}
}

Accordingly, we elaborate on the \textit{diffusion process} as follows. A campaign includes $T$ promotions. The $t$-th promotion contains multiple steps $\zeta_{t}=0,1,\ldots$, where each step represents an influence propagation from users adopting items to their friends that haven't adopted those items yet.\footnote{Note that $\zeta_t -1 \equiv \zeta_{t-1}^\text{last}$ if $\zeta_t=0$, where $\zeta_{t-1}^\text{last}$ is the last step of the $(t-1)$-th promotion.} As a promotion depends on previous promotions, the initial state of a user in the $t$-th promotion \revise{(i.e., adopted item, perceptions, \opt{short}{and preferences}\opt{full}{preference, and influence strength} at $\zeta_{t}=0$)} is the same as the state at the end of the $(t-1)$-th promotion, while the seeded users in the $t$-th promotion newly adopt the promoted items at $\zeta_{t}=0$. When the diffusion starts at step $\zeta_{t} \geq 1$, a user $u$ may be promoted $x$ by any friend $u'$ who newly adopted $x$ at $\zeta_t -1$ only if $u$ has not adopted $x$ yet. \revise{The probability that $u$ will adopt $x$ is derived according to the social influence strength from $u'$ (denoted as $p_{u',u}$) and $u$'s preference for $x$ (denoted as $P_\text{pref}(u,x,\zeta_t-1)$), i.e., $p_{u',u} \times P_\text{pref}(u,x,\zeta_t-1)$\opt{short}{.}\opt{full}{ \cite{zhou2014preference}. In addition, being promoted $x$, $u$ may further adopt $x$'s relevant item $y$ (not previously adopted by $u$) due to \textit{(4) item associations}, according to the probability of $u$ being promoted and preferring $x$ and the relationships between $x$ and $y$.\footnote{Note that the extra adoptions of $x$'s relevant items can be independent of the adoption of $x$ since item associations are inspired by the promotion of $x$ rather than the purchase decision of $x$.}}} 
Then, at the end of this step, $u$'s personal perceptions of item relationships (i.e., personal item network) are updated by \textit{(1) relevance measurement} (detailed in \opt{short}{\cite{online}}\opt{full}{Sec.~\ref{sec:extend_problem}}) if $u$ newly adopts any item, while her preferences for not-yet-adopted items \opt{full}{and the influence strength }also change accordingly by \textit{(2) preference estimation} \opt{full}{and \textit{(3) influence learning} }(detailed in \opt{short}{\cite{online}).}\opt{full}{Sec.~\ref{sec:extend_problem}), respectively.} If there is any new adoption at $\zeta_t$, the next step $\zeta_t +1$ starts with users having new adoptions at $\zeta_t$ to promote their newly adopted items to their friends (who have not yet adopted those items). In other words, the diffusion of $t$-th promotion stops when no new adoptions happen since users cannot be promoted the adopted items again. Thus, the diffusion of the $(t+1)$-th promotion follows.

Based on the above diffusion process for relevant items in multiple promotions, we aim to choose a number of items, seed suitable users, and decide the proper timing, such that the influence spread (defined below) is maximized. Formally, $S=\{(u,x,t)\}$ is a seed group, where a seed $(u,x,t)$ indicates that an item $x$ is chosen for promotion starting at a seeded user $u$ in the decided $t$-th promotion.\footnote{An item $x$ can be assigned to multiple seeded users at multiple promotions; each promotion can promote multiple chosen items by multiple seeded users.} Let $S_t \subseteq S$ denote a subgroup of seeds chosen for the $t$-th promotion. We first define the influence spread and then formulate the problem as follows.

\opt{short}{
\renewcommand\theadalign{tl}
\begin{table}[t]
\caption{\revise{Summary of notations in Sections~\ref{sec:problem}-\ref{sec:algo}.}}
\label{T:simple_notation}
\input{simple_notation.tex}
\end{table}
}

\renewcommand\theadalign{tl}
\begin{table}[t]
\caption{Summary of notations.}
\vspace{-3mm}
\centering
\small
\begin{tabular}{|p{2.4cm}p{5.6cm}|}
\hline
\textbf{Notation} & \textbf{Description} \\ \hline
\hline\multicolumn{2}{|l|}{Sec.~\ref{sec:problem}} \\ \hline
$\zeta_t$ & Step $\zeta$ of the $t$-th promotion \\
$P_\text{pref}(u,y,\zeta_t)$ & $u$'s preference for $y$ updated at $\zeta_t$ \\
$P_\text{act}(u,v,\zeta_t)$ & $u$'s influence strength on $v$ updated at $\zeta_t$ \\
$P_\text{ext}(u,u',x,y,\zeta_t)$ & $u$'s extra adoption probability of $y$ (when being promoted $x$ by $u'$) at $\zeta_t$ \\
$(u,x,t)$; $(u,x)$ & Seed; nominee \\
$S=\{(u,x,t)\}$; $S_t$ & Seed group; subgroup of seeds in the $t$-th promotion \\
$w_x$; $\mathcal{W}$ & Importance of $x$; set of item importance \\
$\sigma^{G_\text{SN}}(S)$; $T$ & Importance-aware influence; number of promotions \\
$\{m^\textsf{C}\}$ / $\{m^\textsf{S}\}$ & Sets of meta-graphs for describing complementary/substitutable relationships \\
\hline\multicolumn{2}{|l|}{Sec.~\ref{sec:algo}} \\ \hline
$\tau$; $\mathcal{G}$ & Target market; set of target markets with common users\\
$\bar{r}^{\mathtt{C}}_{x,y}$ / $\bar{r}^{\mathtt{S}}_{x,y}$ & 
Average complementary/substitutable relevance between $x$ and $y$ per user (the timing is specified from context) \\
$DR^{\mathcal{W},\tau}(S^\mathcal{G},x)$ & $x$'s dynamic reachability of $\tau$'s users given $S^\mathcal{G}$ and $\mathcal{W}$\\
$SI^{\tau}(S^\mathcal{G},(u,x,t),T)$ & Substantial influence of $(u,x,t)$ in $\tau$ given $S^\mathcal{G}$ and $T$\\
$\hat{t}$ & Latest promotional timing in $S^\mathcal{G}$ \\
\hline\multicolumn{2}{|l|}{Sec.~\ref{sec:extend}} \\ \hline
$W_\text{meta}(u,m,\zeta_t)$ & $u$'s weighting of meta-graph $m$ at $\zeta_t$ \\
$s(x,y \mid m)$ & Relevance between $x$ and $y$ defined by meta-graph $m$ \\ 
$r^{\texttt{C}}(u,x,y,\zeta_t)$ / $r^{\texttt{S}}(u,x,y,\zeta_t)$ & The complementary/substitutable relevance between $x$ and $y$ in $u$'s perception updated at $\zeta_t$ \\
$G_\text{PIN}(u,\zeta_t)$ & $u$'s personal item network updated at $\zeta_t$ \\
$A(u,\zeta_t)$ & $u$'s adoption set of items updated at $\zeta_t$ \\
$PI^{\mathcal{W},\tau}(S^\mathcal{G},x,d)$; $RI^{w_x,\tau}(S^\mathcal{G},x,d)$; $d^\tau$ & $x$'s proactive impact of $\tau$'s users propagated within $d$ hops given $S^\mathcal{G}$ and $\mathcal{W}$; $x$'s reactive impact of $\tau$'s users propagated within $d$ hops given $S^\mathcal{G}$; diameter of $\tau$\\
$\mathcal{L}^{\mathtt{C},\tau}(x,y,S^\mathcal{G})$ / $\mathcal{L}^{\mathtt{S},\tau}(x,y,S^\mathcal{G})$ & Likelihood of regarding $x$ and $y$ as complementary/substitutable for each user in $\tau$ given $S^{\mathcal{G}}$ \\
$MA^{\tau}(S^\mathcal{G}, (u,x,t))$; $ML^{\tau}(S^\mathcal{G}, (u,x,t))$ & Marginal adoption of $(u,x,t)$ in $\tau$ given $S^\mathcal{G}$; marginal likelihood of $(u,x,t)$ in $\tau$ given $S^\mathcal{G}$ \\
$\pi^{\tau}(S^\mathcal{G})$ & Likelihood of not-yet-adopted items being adopted in $\tau$ in future promotions given $S^\mathcal{G}$ \\
\hline
\end{tabular}
\label{T:simple_notation}
\end{table}

\revise{
\begin{definition}[\opt{short}{Influence}\opt{full}{Importance-aware influence} function]
\label{def:inf}
\opt{short}{Let $T$ denote the number of promotions.}\opt{full}{Let $w_x$ be the importance of an item $x$ and $T$ denote the number of promotions.\footnote{Importance reflects how much items are valued in this campaign, e.g., higher-quality items in product lines usually have a greater importance \cite{moorthy1984market}.}}
For a seed group $S$, the influence spread in the social network $G_\text{SN}=(V,E)$, denoted as $\sigma^{G_\text{SN}}(S)$, is the expected adoptions \opt{full}{(weighted by item importance) }in all $T$ promotions, i.e., \opt{short}{$\sigma^{G_\text{SN}}(S) = \sum\limits_{t=1}^T \sigma_t^{G_\text{SN}}(S_t \mid S_1, S_2, \ldots, S_{t-1}) = \sum\limits_{t=1}^T \sum\limits_{x \in I} n_x(S_t \mid S_1, S_2, \ldots, S_{t-1})$}\opt{full}{$\sigma^{G_\text{SN}}(S) = \sum\limits_{t=1}^T \sigma_t^{G_\text{SN}}(S_t \mid S_1, S_2, \ldots, S_{t-1}) = \sum\limits_{t=1}^T \sum\limits_{x \in I} w_x n_x(S_t \mid S_1, S_2, \ldots, S_{t-1})$}, where $n_x(S_t \mid S_1, S_2, \ldots, S_{t-1})$ is the expected new adoptions of $x$ for $S_t$ in the $t$-th promotion conditioned on $S_1, \ldots, S_{t-1}$ in previous promotions.\footnote{\revise{\label{para:monte_carlo}Following \cite{Kempe2003,chen2010scalable}, $\sigma$ is estimated by the Monte Carlo method, which simulates the influence diffusion of seeds according to the probabilities.}} (When $G_\text{SN}$ is clear from context, we write $\sigma(S)$ for short.)
\end{definition}
}

\begin{definition}[\revise{\opt{short}{\fullsproblem\ (\sproblem)}\opt{full}{\fullproblem\ (\problem)}}]
Let $m^\textsf{C}$ and $m^\textsf{S}$ denote the meta-graphs for describing the complementary and substitutable relationships between items, respectively. Based on the diffusion process described earlier, given \revise{a social network $G_\text{SN}=(V,E)$\opt{short}{ with the influence strength $p_{u,v}$ for all $u,v \in V$}}, a KG $G_\text{KG}=(\mathcal{V},\mathcal{E},\Phi,\Psi)$, two sets of meta-graphs $\{m^\textsf{C}\}$ and $\{m^\textsf{S}\}$, a target item set $I=\{x\}$\opt{full}{ together with its importance set $\mathcal{W}=\{w_x\}$}, the cost $c_{u,x}$ of hiring a user $u \in V$ to promote an item $x \in I$, the budget $b$, and the total number of promotions $T$, the \opt{short}{\sproblem}\opt{full}{\problem} problem is to find the seed group $S = \bigcup_{t=1}^T S_t$ such that the influence spread $\sigma(S)$ is maximized within the budget constraint $b$, i.e., $\sum_{t=1}^{T}\sum\limits_{(u,x,t) \in S_t}{c_{u,x}} \leq b$.
\end{definition}

\label{para:problem}

\begin{theorem} \label{prob-hard}
\opt{short}{\sproblem}\opt{full}{\problem} cannot be approximated within $O(\frac{1}{|V|^{1-\epsilon}})$ in polynomial time unless $P=NP$, even with only the complementary relationship and $P_\text{ext} \equiv 0$ in only one promotion.
\end{theorem}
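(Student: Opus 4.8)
The plan is to establish the bound by a gap-preserving reduction from \textsc{Maximum Independent Set}. By the classical theorem of H{\aa}stad, in the derandomised form of Zuckerman, for every constant $\delta>0$ it is NP-hard, given an $n$-vertex graph $H=(V_H,E_H)$, to decide whether its independence number $\alpha(H)$ is at least $n^{1-\delta}$ or at most $n^{\delta}$. Hence it suffices to construct in polynomial time a \problem\ instance $\mathcal{I}(H)$ that uses only complementary meta-graphs, has $P_\text{ext}\equiv 0$ and $T=1$, lives on $|V|=\Theta(n)$ users, and whose optimum $\mathrm{OPT}:=\max_{S\text{ feasible}}\sigma(S)$ equals $\Theta(\alpha(H))$ up to a fixed scaling factor: a polynomial-time algorithm producing a seed group of value $\ge|V|^{1-\epsilon}\cdot\mathrm{OPT}$ would then separate the two cases above once $\delta=\delta(\epsilon)$ is chosen small enough, contradicting $P\ne NP$.

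For the construction I would take, per vertex $v_i$ of $H$, a \emph{candidate} user $h_i$ and a \emph{reward} user $\rho_i$, a low-importance ``key'' item $\kappa_i$ hireable only at $h_i$ (unit cost, budget large enough to seed any set of candidates), and a single unit-importance ``reward'' item $y$ launched by one fixed free seed $(s,y,1)$ whose diffusion is routed to every $\rho_i$ through a constant-length chain. The entire edge set of $H$ is pushed into the KG and the complementary meta-graphs $\{m^\textsf{C}\}$ — structures that do \emph{not} count toward $|V|$ — and I instantiate the relevance, preference, and influence-learning updates so that along the routing a reward user $\rho_i$ adopts $y$ iff $h_i$ is seeded (``primed'') and no $H$-neighbour of $v_i$ is seeded (``unobstructed''): a conflicting seeded neighbour forces a rival key down $\rho_i$'s priming path and collapses, at the reward step, the similarity that carries $P_\text{act}$. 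Since every $\kappa_i$ has negligible importance, $\sigma(S)$ then equals (up to the scaling factor) the number of primed, unobstructed reward users, i.e.\ the number of seeded candidates forming an independent set of $H$; this is at most $\alpha(H)$, with equality when the seeded candidates are themselves an independent set. Thus $\mathrm{OPT}=\Theta(\alpha(H))$ and $|V|=\Theta(n)$, as required.

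Completeness and soundness are then immediate: if $\alpha(H)\ge n^{1-\delta}$, seeding the candidates of a maximum independent set gives $\mathrm{OPT}=\Omega(n^{1-\delta})$; if $\alpha(H)\le n^{\delta}$, then $\mathrm{OPT}=O(n^{\delta})$. Taking $\delta=\epsilon/4$ makes the ratio of the two cases $\Omega(n^{1-\epsilon/2})$, which exceeds $|V|^{1-\epsilon}=\Theta(n^{1-\epsilon})$ for all large $n$; hence an $O(1/|V|^{1-\epsilon})$-approximation would decide the above NP-hard gap problem, which proves the theorem. The instance respects the stated restriction, since $\{m^\textsf{S}\}=\emptyset$, $P_\text{ext}\equiv 0$, and there is a single promotion.

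I expect the only real difficulty to be the conflict clause of the construction: making ``a conflicting seeded neighbour kill $\rho_i$'s adoption of $y$'' hold while the meta-graphs describe \emph{only} the complementary relationship, which can raise but never lower a preference. The anti-monotone effect therefore has to be produced by the influence-learning rule — two users with distinct adopted-item sets become dissimilar and hence influence each other weakly — synchronised with the step-by-step diffusion so that the rival key reaches $\rho_i$ before the reward item does; making this collapse sharp (essentially $P_\text{act}\to 0$ under a conflict) while keeping the gadget a bona fide special case of the diffusion of Section~\ref{sec:problem} and of size $\Theta(n)$ is the crux. The remaining parts — the parameter bookkeeping and the monotone half of the analysis — are routine.
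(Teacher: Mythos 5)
Your overall strategy---a gap reduction from an NP-hard gap version of \textsc{Maximum Independent Set}---is a legitimate and genuinely different route from the paper, which instead uses a gap-\emph{introducing} reduction from the decision version of \textsc{Set Cover} (there is no gap in the source problem; the gap of $|U|^c$ versus $O(|U|+k)$ is manufactured by the gadget itself, via a chain of $|U|$ sequential ``unlockings'' that gates access to $|U|^c$ copied paths). However, your proposal has a genuine gap, and you have correctly identified where it is: the conflict clause. Everything in your reduction hinges on an \emph{anti-monotone} gadget---seeding a neighbour $h_j$ of $h_i$ must \emph{destroy} $\rho_i$'s adoption of the reward item $y$. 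You do not construct this gadget; you only state what it would have to do and defer it to the influence-learning rule. That deferral is not a routine detail. With only the complementary relationship, $P_\text{ext}\equiv 0$, and a single promotion, every mechanism the model gives you that is under explicit parametric control (preferences rising after adoptions of complementary items, influence propagating along edges) is monotone: adding a seed can only add adoptions and raise preferences. The paper's own Lemma~\ref{lem-submodular} shows $\sigma$ is monotone increasing for a single promotion with static parameters, so the entire anti-monotone effect must be extracted from the \emph{dynamic} update of $P_\text{act}$, which the paper specifies only qualitatively (``similar users influence each other more''). Nothing in the model guarantees you can drive $P_\text{act}$ to (near) zero under a conflict, sharply, at exactly the right diffusion step, while remaining a bona fide special case of the stated diffusion process. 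Without that, the value of a seed set is not the number of \emph{unobstructed} seeded candidates, the identification $\mathrm{OPT}=\Theta(\alpha(H))$ fails, and the soundness direction collapses.

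By contrast, the paper's construction needs no anti-monotonicity at all. It encodes the covering requirement conjunctively and monotonically: each element node can initially adopt only item $x_1$; adopting $x_1$ raises its preference for $x_{i+1}$ from $0$ to $1$; a path $P_C$ of length $|U|$ can be traversed only if \emph{every} element node is activated in the right order; and only the final item $x_{|U|+1}$ has nonzero importance, so the $|U|^c$ copies of the terminal path contribute influence $|U|^c$ exactly when a set cover of size at most $k$ exists, and at most $2|U|+k+2$ otherwise. If you want to salvage your MIS route, you would need to either (i) exhibit a concrete, model-compliant influence-learning rule with the required collapse property and prove the timing synchronisation works step by step, or (ii) abandon the NAND structure and re-encode independence monotonically, which is not straightforward. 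As written, the proof is incomplete at its central step.
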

\opt{short}{\revise{\begin{proofsk}\label{proofsk1}\input{3_revision_problem_proofsketch}\end{proofsk}}}
\opt{full}{
\begin{proof}
We prove the theorem with the gap-introducing reduction from the decision problem of Set Cover. Given a set cover instance, by constructing a corresponding special case of \problem, we can prove that if there is a set cover solution with at most $k$ sets, there is a feasible solution of \problem\ with the total influence at least $|U|^{c}+2|U|+2$, where $|U|$ is the size of the ground set of set cover instance, and $c$ is a large constant. Otherwise, if there does not exist a set cover solution with at most $k$ sets, the optimal value of \problem\ is at most $2|U|+k+2$. 
Then, we assign $c$ suitably to satisfy $|V|^{1-\epsilon}\leq \frac{|U|^{c}+2|U|+2}{2|U|+k+2}$, where $|V|$ and $\epsilon$ are related to $|U|$ and $c$. Consequently, if there is a $|V|^{1-\epsilon}$ approximation algorithm of \problem, we can solve the decision problem of set cover in polynomial time, which implies $P=NP$, which is a contradiction.

Specifically, given a ground set $U$ consisting of elements and a sub-collection $\mathcal S$ on $2^U$ consisting of sets, an element is covered by a set if the set contains the element. The decision problem of Set Cover asks whether there are at most $k$ sets to cover all the elements.
Given a Set Cover instance $(U, \mathcal S)$, we construct an instance of \problem\ as follows.
First, we construct the topology structure of the special case.
Let each element and each set correspond to a user node in the social network. We number each element node by $1,2,\ldots,\left| U \right|$. Denote the sets of element nodes and set nodes by $V_e=\{ v_{e_1},v_{e_2},\ldots,v_{e_{\left|U\right|}}\}$ and $V_S=\{ v_{S_1},v_{S_2},\ldots,v_{S_{\left|\mathcal S\right|}}\}$, respectively.
For each pair of set node $v_{S_j}$ and element node $v_{e_i}$, we construct a directed edge from $v_{S_j}$ to $v_{e_i}$ if set $S_j$ covers element $e_i$. We add a new node $v_b$ to connect each element node $v_{e_i} \in V_e$ by $\left|U\right|$ new disjoint directed paths with length $2$ from $v_b$ to each $v_{e_i}$ ($1\leq i\leq \left|U\right|$). We also add a new directed path $P_{C}$ with length $\left|U\right|-1$ and number each node of the path from one end node to another by $1,2,\ldots,\left|U\right|$. Denote the node set of path $P_{C}$ by $V_{P_C}=\{ v_{p_1},v_{p_2},\ldots,v_{p_{\left|U\right|}}\}$. Then, we copy $P_C$ by $\left|U\right|^c$ times, where $c$ is a large constant. Moreover, we sort the $\left|U\right|^c$ copied paths by an arbitrary ordering and label them by $1,2,\ldots,\left|U\right|^c$. We add a new node $v_a$ to connect the same end node $v_{p_1}$ of all the $\left|U\right|^c$ paths. Afterwards, we connect each element node $v_{e_i}$ ($1\leq i\leq \left|U\right|-1$) to the corresponding node $v_{p_i}$ of all the $\left|U\right|^c$ paths, by $\left|U\right|^c$ new disjoint directed paths from $v_{e_i}$ to $v_{p_{i}}$ with corresponding length $2i-1$. In particular, when $i=\left|U\right|$, we connect the element node $v_{e_{\left|U\right|}}$ to the corresponding end node $v_{p_{\left|U\right|}}$ of the first path by a directed path from $v_{e_{\left|U\right|}}$ to $v_{p_{\left|U\right|}}$ with length $2\left|U\right|-1$. Last, we connect each end node $v_{p_{\left|U\right|}}$ of the $\left|U\right|^c$ paths by exact one directed path $P_{T}$ from the end node $v_{p_{\left|U\right|}}$ of the first path to the corresponding end node of the last path. Note that $P_{T}$ has $\left|U\right|^c$ nodes since it consists of $\left|U\right|^c$ end nodes $v_{p_{\left|U\right|}}$ of $\left|U\right|^c$ copies of path $P_C$. The illustration of the constructed special case is shown in Figure~\ref{hardness}.

Second, we set the parameter of each edge and node in both the social network and KG.
Assume the influence strength of each edge in social network is equal to $1$, i.e., $P_\text{act}\equiv 1$, and the budget $b=k$.
We construct $\left|U\right|+1$ items numbered by $1,2,\ldots,\left|U\right|+1$. Denote the item set by $I=\{x_1,x_2,\ldots,x_{\left|U\right|+1}\}$.
For each user, except users corresponding to nodes $v_a$, $v_b$, and set nodes in $V_{S}$, we set their costs for promoting any item as a large constant $M_0>k$. Moreover, we set both $v_a$'s and $v_b$'s cost for promoting all the items as $0$ and each set node $v_{S_j}$'s ($v_{S_j}\in V_S$) cost for promoting all the items as $1$. In other words, nodes $v_a$ and $v_b$ can be chosen as seeds with all the items for free. We set $w_{x_i}=0$ for $i=1,\ldots,\left|U\right|$ and $w_{x_{\left|U\right|+1}}=1$. That is, only promoting item $x_{\left|U\right|+1}$ directly or promoting users to adopt item $x_{\left|U\right|+1}$ can bring influence.
Assume that there is only the complementary relationship between items.
For each element node $v_{e_i}$, we set $P_\text{pref}(v_{e_i},x_1)=1$;\footnote{For simplicity, in this paper, when $\zeta$ and $t$ are clear or have little impact on the proof, we ignore them in the presentation of $P_\text{ext}$, $P_\text{pref}$ and $P_\text{act}$.} otherwise, $P_\text{pref}(v_{e_i},x_j)=0$ where $j\neq 1$. That is, each element node $v_{e_i}$ can only initially adopt item $x_1$. Assume after $v_{e_i}$ adopts item $x_1$, its preference for item $x_{i+1}$ increases from $0$ to $1$. For each mid-node $v_i$ of corresponding directed path $P_{v_bv_iv_{e_i}}$ from $v_b$ to element node $v_{e_i}$ ($i=1,2,\ldots,\left|U\right|$), we set $P_\text{pref}(v_i, x_{i+1})=1$ and the preferences for other items as $0$. For each mid-node $v_j$ of the corresponding directed path with length $2i-1$ from element node $v_{e_i}$ ($i=1,2,\ldots,\left|U\right|$) to the corresponding node $v_{p_i}$ of $P_{C}$, we set $P_\text{pref}(v_j, x_{i+1})=1$ and the preferences for other items as $0$. For each node $v_{p_i}$ ($i=1,2,\ldots,\left|U\right|$) of $P_{C}$, we set $P_\text{pref}(v_{p_i}, x_{i})=1$, and $P_\text{pref}(v_{p_i}, x_{j})=0$ for $j \neq i$; after $v_{p_i}$ adopts item $x_i$, $P_\text{pref}(v_{p_i}, x_{i+1})=1$. 

Next, we show the relation between any feasible solution of Set Cover and that of the constructed special case of \problem.
For any feasible solution of Set Cover, we construct a feasible solution of \problem\ by 1) letting the chosen set nodes promote items arbitrarily and 2) seeding users corresponding to nodes $v_a$ and $v_b$ to promote all the items 
without violating the budget constraint $k$.
On the other hand, for any feasible solution of \problem, we construct a corresponding feasible solution of Set Cover by 1) discarding those seeding nodes that are not set nodes in the solution of \problem\ and 2) regarding the seeding set nodes in the solution of \problem\ as the corresponding chosen sets. Note that the constructed solution is still a feasible solution of Set Cover, i.e., the number of chosen set is not larger than $k$.

Finally, we derive the gap.
If there is a Set Cover solution with at most $k$ sets, in order to maximize the influence of adopting item $x_{\left|U\right|+1}$, we let each set node in $V_S$ choose item $x_{1}$ as seeds, and let nodes $v_a$, $v_b$ choose all the items as seeds satisfying the budget constraint $k$. As a consequence, all the nodes in path $P_{T}$ adopt item $x_{\left|U\right|+1}$, implying that at least $\left|U\right|^{c}+2\left|U\right|+2$ users adopt item $x_{\left|U\right|+1}$ with total influence at least $\left|U\right|^{c}+2\left|U\right|+2$. 
To see this, assume that the initial seeding step is $0$ and each step is one-hop influence propagation.
After initial seeding, at step $1$, each set node $v_{S_i}$ ($i=1,2,\ldots,\left|U\right|$) adopts item $x_{1}$, and their preferences for item $x_{i+1}$ are increased from $0$ to $1$. At step $2$, $v_b$ propagates influence to make $v_{S_i}$ adopt item $x_{i+1}$ at step $2$. On the other hand, $v_{p_1}$ first adopts item $x_1$ and its preference for item $x_2$ is increased from $0$ to $1$ at step $1$. Afterwards, $v_{S_1}$ continues to propagate influence to make $v_{p_1}$ adopt item $x_{2}$ at step $3$ because $P_\text{pref}(v_{p_1}, x_{2})=1$ at step $2$. Similarly, each $v_{S_i}$ ($i=2,3,\ldots,\left|U\right|$) continues to propagate influence to make $v_{p_i}$ of directed path $P_C$ adopt item $x_{i+1}$ at step $2i+1$ after each $v_{p_i}$ adopts item $x_{i}$ at step $2i$. Moreover, all the end nodes $v_{p_{\left|U\right|}}$ of $P_C$ (i.e. all the nodes of $P_T$) will adopt item $x_{\left|U\right|}$ at step $2\left|U\right|$. Notice that, at step $2\left|U\right|+1$, the preferences for item $x_{\left|U\right|+1}$ of all the end nodes $v_{p_{\left|U\right|}}$ in $P_C$ (i.e. all the nodes of $P_T$) have been increased from $0$ to $1$.    
Consequently, the first end node of $P_T$ adopts item $x_{\left|U\right|+1}$ at step $2\left|U\right|+1$ after its in-neighbor adopts item $x_{\left|U\right|+1}$ at step $2\left|U\right|$. Then, all the left nodes of $P_T$ adopt item $x_{\left|U\right|+1}$ at step $2\left|U\right|+2, 2\left|U\right|+3,\ldots,2\left|U\right|+\left|U\right|^c$.

Otherwise, if there does not exist a Set Cover solution with at most $k$ sets to maximize the influence of adopting item $x_{\left|U\right|+1}$, then at most $2\left|U\right|+k+2$ users (consisting of $v_a$, two paths $P{v_bv_{\left|U\right|}v_{e_{\left|U\right|}}}$ and $P{v_{e_{\left|U\right|}},\ldots,v_{p_{\left|U\right|}}}$, and any $k$ set nodes in $v_S$) adopt item $x_{\left|U\right|+1}$ with total influence at most $2\left|U\right|+k+2$. By setting $c$ sufficiently large, we construct the special case of \problem\ with user node size $\left|V\right|$ such that $\left|U\right|^{c}+2\left|U\right|+2\geq \left|V\right|^{1-\frac{\epsilon}{2}}$ and $2\left|U\right|+k+2\leq \left|V\right|^{\frac{\epsilon}{2}}$, respectively, where $\epsilon$ is an arbitrarily small constant. 
Notice that the total nodes of the newly constructed graph can be partitioned into two parts: one is the nodes in the $\left|U\right|^c$ paths; the other is the left nodes. It can be seen that the number of the left nodes is polynomial on $\left|U\right|$. Thus, assume that the total left nodes is $\left|U\right|^{c_1}$, where $c_1$ is a constant. Since each path has $\left|U\right|$ nodes, the total number of nodes of $\left|U\right|^c$ paths is $\left|U\right|^{c+1}$. Then, $\left|V\right|=\left|U\right|^{1+c}+\left|U\right|^{c_1}=\left|U\right|\cdot \left|U\right|^{c}+\left|U\right|^{c_1}$. Thus, given an arbitrarily small constant $\epsilon$, there exists a large enough constant $c$ such that $\left|U\right|^{c}+2\left|U\right|+2 \geq \left|V\right|^{1-\frac{\epsilon}{2}}$. Note that $2\left|U\right|+k+2=O(\left|V\right|^{\frac{1}{1+c}})\leq \left|V\right|^{\frac{\epsilon}{2}}$ when $c$ is sufficiently large. Therefore, if there is a $\left|V\right|^{1-\epsilon}=\frac{\left|V\right|^{1-\frac{\epsilon}{2}}}{\left|V\right|^{\frac{\epsilon}{2}}} \leq \frac{\left|U\right|^{c}+2\left|U\right|+2}{2\left|U\right|+k+2}$ approximation algorithm, then we can solve the decision problem of Set Cover in polynomial time, which implies $P=NP$, leading to a contradiction.
\end{proof}

\begin{figure}[t]
            \centering
            \includegraphics[height=2in,width=1.05\linewidth]{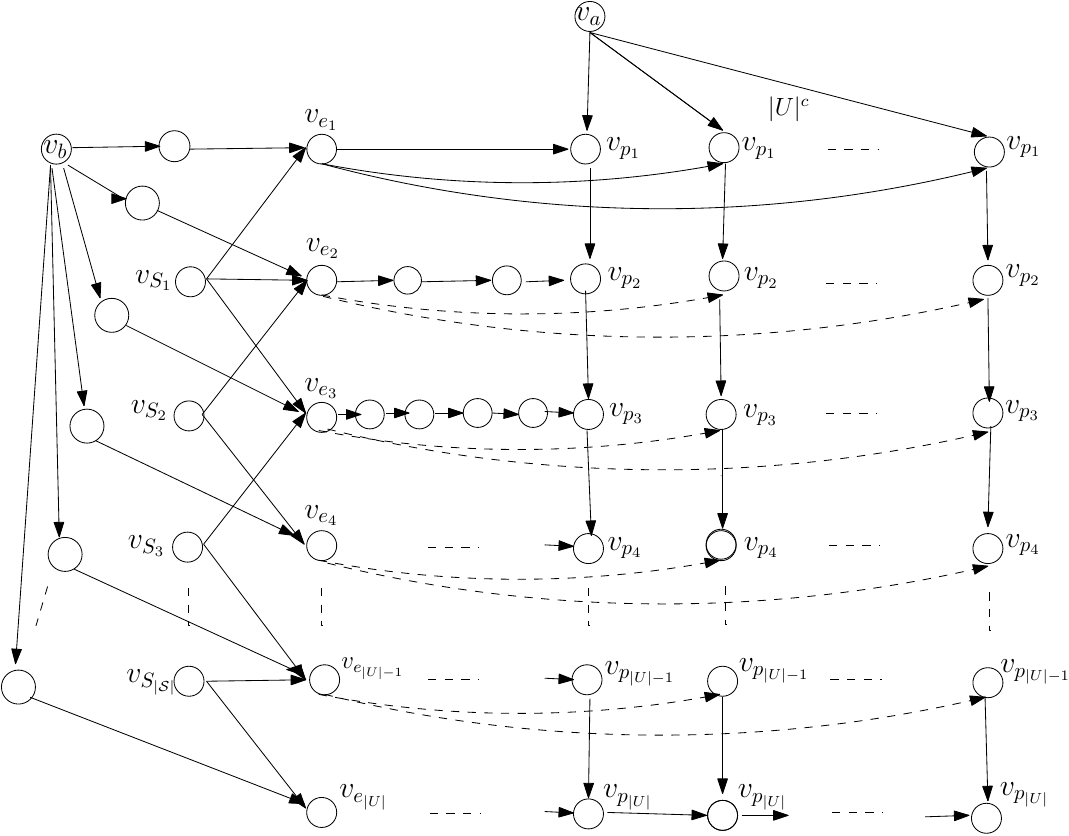}
    \caption{\small{Illustrative example of hardness proof.}}
    \label{hardness}
\end{figure}

In addition, when $P_\text{pref}\equiv 1$ and $P_\text{ext}\equiv 0$, assume there is only one item, which implies a single user can be regarded as a seed directly. A user's adoption decision of an item is thus equivalent to checking whether the user is influenced, which simply depends on the influence strength $P_\text{act}$.
Therefore, \problem\ is reduced to the conventional influence maximization problem that is to find $k$ users as seeds to maximize the influence spread. Since the influence maximization problem cannot be approximated within $1-\frac{1}{e}+\epsilon$, the special case of {\problem} cannot, either.
}

\section{Approximation Algorithm}
\label{sec:algo}
\subsection{Algorithm Overview}
\label{sec:algo_overview}

To efficiently solve \opt{short}{\sproblem}\opt{full}{\problem}, we design an approximation algorithm, namely \opt{short}{\fullsalgo\ (\salgo)}\opt{full}{\fullalgo\ (\algo)}, which embodies a number of ideas. (i) To tackle the challenge in the propagation of item impacts, \opt{short}{\salgo}\opt{full}{\algo} introduces \textit{Dynamic Reachability (DR)} to measure the impact made by an item promotion and the impacts resulted from the promotions of other items based on users' dynamic perceptions of item relationships (detailed later in Eq.~\eqref{eq:dr}). Specifically, DR evaluates both \textit{proactive} and \textit{reactive impacts} for each item. For an item, the proactive impact is the probability for this item to result in an increase of users' preferences \textit{on other items}. The reactive impact is the probability to increase users' preferences \textit{on this item} resulted from other items promoted previously. The item with the highest DR is prioritized for promotion. Previous works \cite{teng2018revenue,hung2016social,banerjee2019maximizing} select users only and do not consider the items in IM.

(ii) To avoid antagonism between substitutable items, \opt{short}{\salgo}\opt{full}{\algo} identifies \textit{target markets}, each of which consists of socially close users to promote complementary items in consecutive promotions. Specifically, it identifies some \textit{nominees} (where a nominee is a user-item pair $(u,x)$) as candidate seeds, denoted by $(u,x,t)$, for an incoming promotion at time $t$ (decided later). Note that a target market targets on a cluster of nominees in order to promote complementary items to socially close users. Since different target markets may share some common users, it is important to avoid promoting substitutable items to them. Accordingly, \opt{short}{\salgo}\opt{full}{\algo} prioritizes the target market promoting items with the least substitutable relevance to items in the overlapping target markets (i.e., the target markets sharing many common users). By contrast, prior works \cite{teng2018revenue,hung2016social,banerjee2019maximizing} consider only one relationship and thereby may promote substitutable items to the same users.

\opt{full}{(iii) To find the promotional timing, \algo\ introduces \textit{Substantial Influence (SI)} for each candidate seed $(u,x,t)$ to evaluate immediate and subsequent adoptions if nominee $(u,x)$ is assigned as the seed in the $t$-th promotion. Specifically, SI derives the \textit{marginal adoption} and \textit{marginal likelihood} of $(u,x,t)$, where the former is the difference of the total adoptions with and without $(u,x,t)$, and the latter is the difference of the likelihood with and without $(u,x,t)$ for not-yet-adopted items to be adopted after the $t$-th promotion. $(u,x,t)$ with the highest SI is selected as a new seed at each iteration. In contrast, previous works \cite{teng2018revenue,hung2016social,banerjee2019maximizing} are designed for only one promotion and do not leverage item impacts at different timings. }

\opt{short}{
\setlength{\algomargin}{2.5ex}
\SetInd{3.5pt}{3.5pt}
\begin{algorithm}[t]
\caption{\salgo}
\label{algo:simple_main}
\input{simple_algo.tex}
\end{algorithm}
}
%
%

\setlength{\algomargin}{2.5ex}
\SetInd{3.5pt}{3.5pt}
\begin{algorithm}[t]
\caption{DPSP}
\scalefont{0.90}
\DontPrintSemicolon
\SetNlSty{texttt}{}{}
\SetAlgoNlRelativeSize{-1}
\SetNlSkip{0.25em}
\SetVlineSkip{1pt}
\label{algo:main}
\SetKwFunction{selectNominees}{selectNominees}
\SetKwFunction{clusterNominees}{clusterNominees}
\SetKwFunction{prioritizeTargetMarket}{prioritizeTargetMarket}
\KwIn{Social network $G_\text{SN}=(V,E)$; knowledge graph $G_\text{KG}=(\mathcal{V},\mathcal{E},\Phi,\Psi)$; item set $I$; item importance set $\mathcal{W}$; total budget $b$; total number of promotions $T$}
\KwOut{Seed group}
\tcc{TMI phase}
$U \gets \{(u,x) \mid u \in V, x \in I\}$\;
$\{N^\tau\} \gets $ \clusterNominees{$N$}\;
\For{each $N^\tau$}{
    Identify the target market $\tau$ by $N^\tau$\;
}
$CG \gets$ \prioritizeTargetMarket{$\{\tau\}$}\;
\For{each $\mathcal{G}$ in $CG$}{
    $S^\mathcal{G} \gets \emptyset$\;
	\For{each $\tau_k \in \mathcal{G}$, where $k=1,2,\ldots$}{
	    \tcc{DRE phase}
	    $N^{\tau_k} \gets $ nominees in $\tau_k$\;
	    $T^{\tau_k} \gets \lfloor\frac{\left| N^{\tau_k}\right| \cdot T}{\sum_{\tau_i \in \mathcal{G}}{\left| N^{\tau_i}\right|}}\rfloor$\;
		$I^{\tau_k} \gets \{x \mid (u,x)\in N^{\tau_k}\}$\;
        \While{$I^{\tau_k} \not= \emptyset$}{
	        $x_\text{p} \gets \argmax_{x \in I^{\tau_k}}{DR^{\mathcal{W},\tau_k}(S^{\mathcal{G}},x)}$\;
	        $I^{\tau_k} \gets I^{\tau_k} \setminus \{x_\text{p}\}$\;
	        $N_\text{p} \gets \{(u,x_\text{p}) \mid (u,x_\text{p}) \in N^{\tau_k}\}$\; 
	        \tcc{TDSI phase}
            \While{$N_\text{p} \not= \emptyset$}{
                \If{$S^\mathcal{G} \not= \emptyset$}{
		            $\hat{t} \gets \max \{t \mid (u,x,t) \in S^\mathcal{G}$\}\;
		        }
		        \Else{
		            $\hat{t} \gets 1$\;
		        }
		        $C \gets \emptyset$\;
		        \For{$t \gets \hat{t}$ \KwTo $\hat{t}+1$}{
		            \If{$t \leq \sum_{i\leq k}T^{\tau_i}$}{
		                \For{$(u,x_\text{p}) \in N_\text{p}$}{
		                    $C \gets C \cup \{(u,x_\text{p},t)\}$\;
		                }
		            }
		        }
		        $(u_\text{s}, x_\text{p}, t_\text{s}) \gets \argmax\limits_{(u,x_\text{p},t) \in C}SI^{\tau_k}(S^\mathcal{G}, (u,x_\text{p}, t), T)$\;
		        $N_\text{p} \gets N_\text{p} \setminus \{(u_\text{s}, x_\text{p})\}$\;
		        $S^{\mathcal{G}} \gets S^{\mathcal{G}} \cup \{(u_\text{s}, x_\text{p}, t_\text{s})\}$\; 
	        }
        }
	}
}
\KwRet{$\bigcup_\mathcal{G}S^\mathcal{G}$}\;
\end{algorithm}

\label{para:algo_summary1}
\revise{Equipped with the above strategies, \opt{short}{\salgo}\opt{full}{\algo} includes \opt{short}{two}\opt{full}{three} phases: Target Market Identification (TMI) and Dynamic Reachability Evaluation (DRE).}\footnote{\revise{\opt{short}{The complete algorithm \algo\ to tackle all challenges is presented in \cite{online}. Besides, our}\opt{full}{Our} proposed algorithm can deal with adaptive IM (even without a predefined budget allocation to different promotions), detailed in \opt{short}{\cite{online}}\opt{full}{Sec.~\ref{sec:adaptive}}.}} Since users in social networks usually have different needs and diverse purchase intentions, a promotional strategy is planned more sophisticatedly if the target users are identified first. Intuitively, intensively promoting a few items within a short period can better draw users' attention. Hence, \opt{short}{\salgo}\opt{full}{\algo} first exploits TMI to identify target markets and then leverages DRE \opt{full}{and TDSI }to plan the distinct effective promotional strategy for each target market. Specifically, TMI selects and clusters nominees to promote complementary items to each target market and prioritizes target markets with fewer substitutable items to the nominees in the overlapping target markets. \revise{For each target market, DRE finds the item with the highest DR to exploit item impacts and \opt{short}{decides the promotional timings for the corresponding nominees to be seeds. Algorithm~\ref{algo:simple_main}}\opt{full}{suggests the corresponding nominees as candidate seeds. TDSI decides the promotion timing $t$ for the candidate seed $(u,x,t)$ with the highest SI and adds it to the seed group. Algorithm~\ref{algo:main}}} presents the pseudo-code of \opt{short}{\salgo}\opt{full}{\algo}.\opt{short}{\footnote{Please refer to \cite{online} for more pseudo-codes.}}
\label{para:algo_summary2}

\subsection{Algorithm Description}
\label{sec:algo_description}

\textit{1. Target Market Identification (TMI):}
TMI selects the nominees that exert large influence spread, clusters select nominees to identify each target market for promoting complementary items to socially close users, and prioritizes the target market with fewer substitutable items to the nominees in the overlapping target markets.

For nominee selection, TMI carefully examines the marginal gain of influence for each nominee. It's crucial to select a cost-effective nominee due to different costs of nominees and a limited budget. Therefore, we propose \textit{marginal cost-performance ratio (MCP)} to jointly consider the above factors and ensure the approximation ratio of \opt{short}{\salgo}\opt{full}{\algo} in Theorem~\ref{the-sp3}.   
Specifically, given a set $N$ of selected nominees, MCP of a nominee $(u,x)$ is $\frac{f(N \cup \{(u,x)\})-f(N)}{c_{u,x}}$, where \revise{$f$ is the \opt{short}{influence spread $\sigma$ with the nominees placed in the first promotion as the seeds and $P_\text{pref}$}\opt{full}{importance-aware influence spread $\sigma$ with the nominees placed in the first promotion as the seeds and $P_\text{pref}$, $P_\text{act}$, and $P_\text{ext}$} assigned at the beginning of this promotion.} For the nominees with the costs satisfying $ c_{u,x} < b - \sum_{(u',x') \in N}{c_{u',x'}}$, TMI iteratively extracts the one with the highest MCP into $N$.

Afterward, TMI identifies the target markets by clustering the nominees. To promote complementary items to the users in a target market, TMI first clusters the nominees in $N$ (e.g., by clustering methods POT \cite{chen2017people} and FGCC \cite{wang2015concurrent}\label{cluster})\revise{\footnote{\revise{POT discovers social communities according to users' opinions. TMI employs POT by treating the opinions of a community as items of a target market. Accordingly, TMI regards the detected item-based user communities as the clusters of nominees, where the users in the same detected community are socially close to each other, and their promoting items have larger complementary relevance than substitutable relevance. On the other hand, FGCC identifies the clusters for two types of related objects simultaneously. TMI employs FGCC by constructing the incidence matrix of users and items to identify their clusters. TMI clusters the nominees together if their users and items belong to the same user and item clusters, respectively.}}} according to the social distances between the nominees and the relevance between their promoting items, i.e., $\bar{r}^{\mathtt{C}}_{x,y} - \bar{r}^{\mathtt{S}}_{x,y}$, where $\bar{r}^{\mathtt{C}}_{x,y}$ and $\bar{r}^{\mathtt{S}}_{x,y}$ are the average complementary and substitutable relevance between $x$ and $y$ over all users, respectively.\footnote{The derivation of relevance is described in \opt{short}{\cite{online}.}\opt{full}{Sec.~\ref{sec:extend_problem}.}} Larger complementary and smaller substitutable relevance are encouraged. \revise{For each cluster, a target market $\tau$ is identified by exploring the influenced users from the nominees $N^{\tau}$ (e.g., by MIOA \cite{chen2010scalable}\label{influenced_user}).\footnote{\revise{MIOA identifies an influence region from a source node, and other nodes in the region can be reached from the source with sufficient influence probabilities on the maximum influence paths. TMI employs MIOA to identify all users that can be effectively influenced by the users of the nominees as the users in the target market.}}}
\revise{\label{para:budget}Note that with TMI, the budget allocation of \opt{short}{\salgo}\opt{full}{\algo} is realistic since a larger target market is inclined to have a larger budget to promote items. In TMI, the target markets are identified by the influence of nominees, where more nominees and influential nominees lead to a larger target market. As more nominees and influential nominees usually incur a larger cost \cite{Nguyen2016Cost}, \opt{short}{\salgo}\opt{full}{\algo} allocates larger budgets to those target markets accordingly.}

Afterward, TMI prioritizes the target market with fewer substitutable items to the nominees in the overlapping target markets. Let $\mathcal{G}$ denote a set of target markets with common users. A target market $\tau_i$ is in $\mathcal{G}$ if there is another target market $\tau_j \in \mathcal{G}$ with the common user number above a threshold $\theta$.\footnote{The sensitivity of \opt{short}{\salgo}\opt{full}{\algo} to $\theta$ is evaluated in \opt{short}{\cite{online}.}\opt{full}{Sec.~\ref{sec:sensitivity}.}} TMI arranges the promoting order for the target markets in each $\mathcal{G}$ by deriving \textit{Antagonistic Extent (AE)} of each target market $\tau_i$ according to the substitutable relationship between every promoting item $x$ and the items of other target market $\tau_j$, i.e., $AE(\tau_i) = \sum_{x\in \tau_i,y\in \tau_j}{\bar{r}^{\mathtt{S}}_{x,y}}$, where $\tau_i, \tau_j \in \mathcal{G}, i \neq j$.
The target market (and the items in the corresponding nominees) with a smaller AE is promoted earlier in $\mathcal{G}$.\revise{\footnote{\revise{Alternatively, according to research in the marketing field, the profitability~\cite{simkin1998prioritising} of a target market is also a good metric to prioritize target markets. The comparison of different marketing orders is presented in Sec.~\ref{sec:market_order}.}}}

\begin{figure}
    \centering
    \opt{short}{
    \hspace*{\fill}%
    }
    \subfigure[]{
        \centering
        \includegraphics[width=0.13\textwidth]{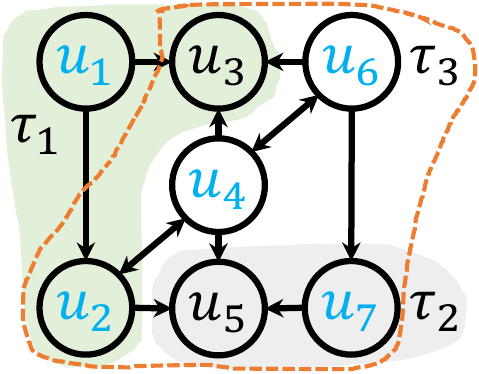}
        \label{Fig:fig_algoex_1_a}
    }\opt{short}{\hfill}%
    \subfigure[]{
        \centering
        \includegraphics[width=0.16\textwidth]{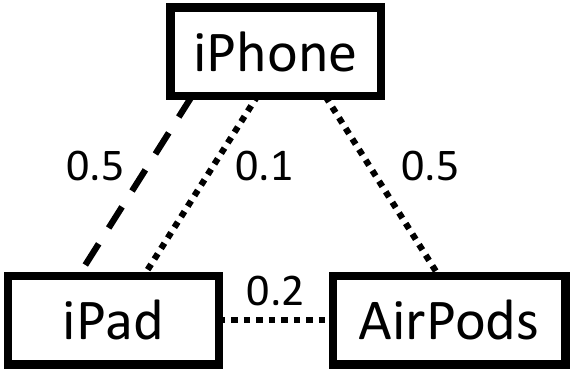}
        \label{Fig:fig_algoex_1_b}
    }%
    \opt{short}{
    \hspace*{\fill}%
    }
    \opt{full}{\subfigure[]{
        \centering
        \includegraphics[width=0.16\textwidth]{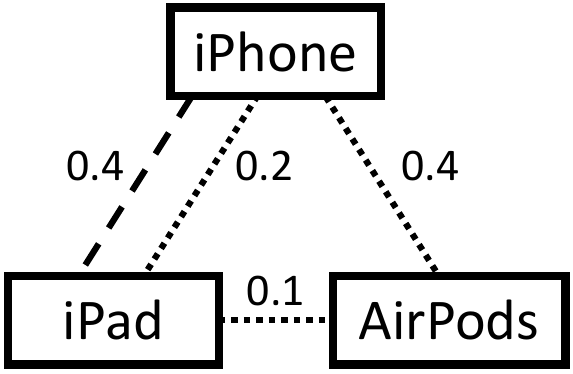}
        \label{Fig:fig_algoex_2}
    }\hfill}
    \caption{\revise{An example of \opt{short}{TMI}\opt{full}{\algo}. (a) A social network. (b) Average relevance over all users in the whole social network. \opt{full}{(c) Average relevance over all users in $\tau_3$.}}}
    \label{Fig:fig_algoex_1}
\end{figure}

\setlength{\algomargin}{2.5ex}
\SetInd{3.5pt}{3.5pt}
\begin{procedure}[t]
\caption{selectNominees()}
\DontPrintSemicolon
\SetNlSty{texttt}{}{}
\SetAlgoNlRelativeSize{-1}
\SetNlSkip{0.25em}
\SetVlineSkip{1pt}
\SetKwProg{Procedure}{Procedure}{}{}
\Procedure{\selectNominees{$U,b$}}{
	$N \gets \emptyset$; $c \gets 0$\;
    $U \gets U \setminus \{(u,x)\mid c_{u,x} > b - c\}$\;
	\While{$U \not= \emptyset$}{
		$s \gets \argmax_{(u,x) \in U} \frac{f(N \cup \{(u,x)\}) - f(N)}{c_{u,x}}$\;
		$N \gets N \cup \{s\}$; $c \gets c + c_s$\;
		$U \gets U \setminus (\{s\} \cup \{(u,x)\mid c_{u,x} > b - c\})$
	}
	\textbf{return} $N$\;
}
\end{procedure}

\setlength{\algomargin}{2.5ex}
\SetInd{3.5pt}{3.5pt}
\begin{procedure}[t]
\caption{clusterNominees()}
\DontPrintSemicolon
\SetNlSty{texttt}{}{}
\SetAlgoNlRelativeSize{-1}
\SetNlSkip{0.25em}
\SetVlineSkip{1pt}
\SetKwProg{Procedure}{Procedure}{}{}
\Procedure{\clusterNominees{$N$}}{
	$\{\tau\} \gets $ clusters of $N$ according to social distances and relevance between promoting items\;
	\textbf{return} $\{\tau\}$\;
}
\end{procedure}

\setlength{\algomargin}{2.5ex}
\SetInd{3.5pt}{3.5pt}
\begin{procedure}[t]
\caption{prioritizeTargetMarket()}
\DontPrintSemicolon
\SetNlSty{texttt}{}{}
\SetAlgoNlRelativeSize{-1}
\SetNlSkip{0.25em}
\SetVlineSkip{1pt}
\SetKwProg{Procedure}{Procedure}{}{}
\Procedure{\prioritizeTargetMarket{$\{\tau\}$}}{
	$CG \gets \emptyset$\;
	\For{each pair $(\tau_i, \tau_j)$}{
	    $V^{\tau_i} \gets $ users in $\tau_i$\;
	    $V^{\tau_j} \gets $ users in $\tau_j$\;
		\If{$\left| V^{\tau_i} \cap V^{\tau_j} \right| > \theta$}{
			Put $\tau_i$ and $\tau_j$ in the same $\mathcal{G} \in CG$ \;
		}
	}
	\For{each $\mathcal{G} \in CG$}{
		\For{each $\tau_i \in \mathcal{G}$}{
		    $AE(\tau_i) = \sum_{x\in \tau_i,y\in \tau_j}{\bar{r}^{\mathtt{S}}_{x,y}}$, where $\tau_i, \tau_j \in \mathcal{G}, i \neq j$\;
        }
		Arrange the promoting order of $\tau \in \mathcal{G}$ according to $AE(\tau)$ ascendingly\;		
	}
	\textbf{return} $CG$\;
}
\end{procedure}

\begin{example}
\label{Ex:TMI}
Figs.~\ref{Fig:fig_algoex_1_a} and~\ref{Fig:fig_algoex_1_b} present an example of TMI with a social network and the average relevance over all users in the whole social network, where the dotted and dashed edges are the complementary and substitutable relationships, respectively. The number beside each edge is the relevance. Assume $N = \{ (u_1, $ iPad$),(u_2, $ AirPods$),$ $(u_4, $ iPhone$), (u_6, $ AirPods$),  (u_7, $ iPad$)\}$ by TMI according to MCP. Then, TMI finds three clusters \revise{$N^{\tau_1}=\{(u_1, $ iPad$)\}$, $N^{\tau_2}=\{(u_7, $ iPad$)\}$, and $N^{\tau_3}=\{ (u_2, $ AirPods$),$ $(u_4, $ iPhone$), (u_6, $ AirPods$)\}$} from Figs.~\ref{Fig:fig_algoex_1_a} and~\ref{Fig:fig_algoex_1_b}, and identifies $\tau_1$, $\tau_2$, and $\tau_3$ accordingly, as shown in Fig.~\ref{Fig:fig_algoex_1_a}. Assume $\theta=1$. Then, $\tau_1$, $\tau_2$, and $\tau_3$ belong to the same $\mathcal{G}$ since $\tau_1$ and $\tau_3$ have two common users, and $\tau_2$ and $\tau_3$ have two common users. After that, according to the substitutable relevance in Fig.~\ref{Fig:fig_algoex_1_b}, $AE(\tau_1)=0.5$ since iPad promoted in $\tau_1$ is substitutable to iPhone promoted in $\tau_3$. Similarly, $AE(\tau_2)=0.5$ and $AE(\tau_3)=0.5+0.5=1$. TMI thereby promotes $\tau_1$, $\tau_2$, and $\tau_3$ sequentially.
\exend
\end{example}

\textit{2. Dynamic Reachability Evaluation (DRE):}
For each target market $\tau_k \in \mathcal{G}$ selected by TMI, DRE evaluates \textit{Dynamic Reachability (DR)} of \revise{each item in $\tau_k$, and the nominees (in $N^{\tau_k}$)} promoting the item with the highest DR serve as the candidate seeds. \revise{\label{para:candidate_seeds}In other words, after TMI has identified target markets with socially close users to promote complementary items and has prioritized the target markets using AE, DRE allows each target market to prioritize its promoting items differently and lets the nominees promoting items with higher DR be the seeds earlier.\footnote{\revise{That the nominees are not in $N^{\tau_k}$ implies that they do not promote complementary items to the users in $\tau_k$ (based on the identification strategy in TMI). Moreover, if the noominees not in $N^{\tau_k}$ have not served as seeds yet when DRE finds the candidate seeds for $\tau_k$, they must belong to some target markets posterior to $\tau_k$. Accordingly, considering such nominees as candidate seeds for $\tau_k$ may imply to neglect the design of TMI and cause the antagonism of the substitutable relationship, which is expected to be avoided by TMI.}}}
Specifically, let $d^{\tau_k}$ denote the diameter of the target market $\tau_k$, and $S^{\mathcal{G}}$ is the seed group determined so far for all the target markets in $\mathcal{G}$. Let $I^{\tau_k}$ denote the items that have not yet been promoted in $\tau_k$. DR of an item $x \in I^{\tau_k}$ is  
\revise{
\opt{short}{\begin{align}
\label{eq:dr}
DR^{\tau_k}(S^{\mathcal{G}},x) = PI^{\tau_k}(S^{\mathcal{G}},x,d^{\tau_k}) + RI^{\tau_k}(S^{\mathcal{G}},x,d^{\tau_k}).
\end{align}}
\opt{full}{\begin{align}
\label{eq:dr}
DR^{\mathcal{W},\tau_k}(S^{\mathcal{G}},x) = PI^{\mathcal{W},\tau_k}(S^{\mathcal{G}},x,d^{\tau_k}) + RI^{w_x,\tau_k}(S^{\mathcal{G}},x,d^{\tau_k}).
\end{align}}}
The proactive impact \revise{\opt{short}{$PI^{\tau_k}(S^{\mathcal{G}},x,d^{\tau_k})$}\opt{full}{$PI^{\mathcal{W},\tau_k}(S^{\mathcal{G}},x,d^{\tau_k})$}} is the probability of $x$ to increase the preferences of users in $\tau_k$ \textit{for other items}. The reactive impact \revise{\opt{short}{$RI^{\tau_k}(S^{\mathcal{G}},x,d^{\tau_k})$}\opt{full}{$RI^{w_x,\tau_k}(S^{\mathcal{G}},x,d^{\tau_k})$}} is the probability to increase the preferences of users in $\tau_k$ \textit{for $x$} under the impact from other items in $S^{\mathcal{G}}$.\footnote{$d^{\tau_k}$ appears in PI and RI to restrict the item impact propagation to the users at most $d^{\tau_k}$ away in $\tau_k$.} The derivations of the proactive impact and the reactive impact are detailed in \opt{short}{\cite{online}}\opt{full}{Sec.~\ref{sec:extend_dr}} to capture the probability of increasing users' preferences for items based on their dynamic perceptions of item relationships.

Consequently, for each $\tau_k$ selected by TMI, DRE extracts the nominees \revise{\opt{short}{$\{(u,x_\text{p}) \mid x_\text{p} = \argmax\limits_{x \in I^{\tau_k}} {DR^{\tau_k}(S^{\mathcal{G}},x)},$ $(u,x_\text{p}) \in N^{\tau_k}\}$}\opt{full}{$\{(u,x_\text{p}) \mid x_\text{p} = \argmax\limits_{x \in I^{\tau_k}} {DR^{\mathcal{W},\tau_k}(S^{\mathcal{G}},x)},$ $(u,x_\text{p}) \in N^{\tau_k}\}$}} with the highest DR as the candidate seeds iteratively, and this property (i.e., the highest DR) is important to approximate the optimal solution in Theorem~\ref{the-sp3}.

\begin{figure}
\centering
    \subfigure[$t=0$]{
        \centering
        \includegraphics[width=0.15\textwidth]{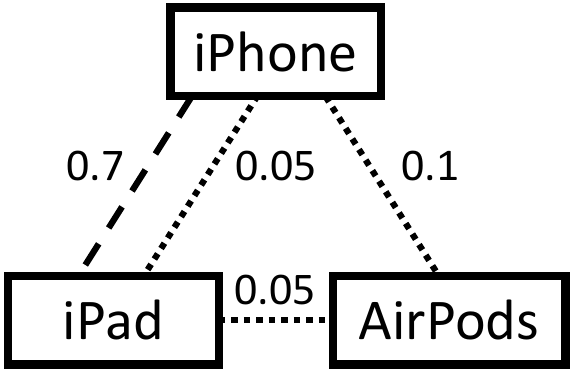}
        \label{Fig:fig_algoex_u5}
    }%
    \subfigure[$t=1$]{
        \centering
        \includegraphics[width=0.15\textwidth]{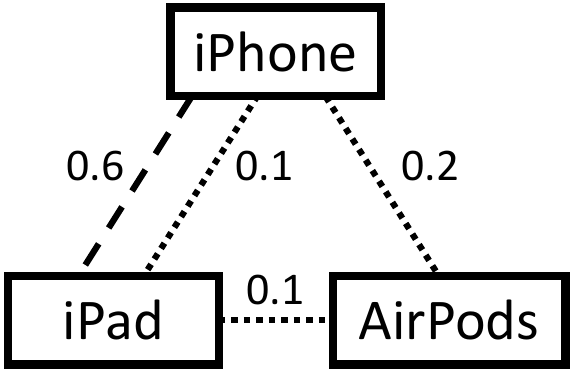}
        \label{Fig:fig_algoex_new_u5}
    }%
    \subfigure[$t=1$]{
        \centering
        \includegraphics[width=0.15\textwidth]{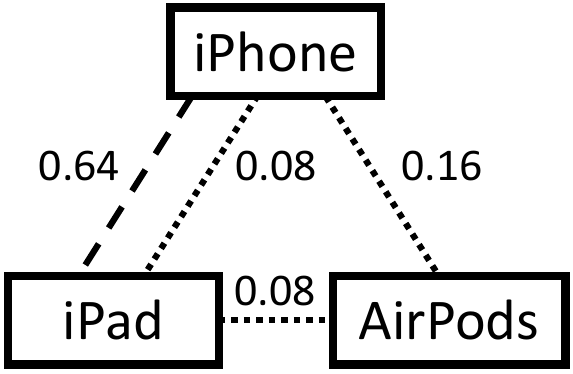}
        \label{Fig:fig_algoex_expected_u5}
    }\hfill
    \caption{\revise{An example of DRE: an illustration of the dynamics in $u_5$'s personal item network. (a)~Initially. (b)~After $u_5$ adopts iPad. (c) Expectation.}}
    \label{Fig:fig_algoex_dynamic_u5}
\end{figure}


\begin{example}
\label{Ex:DRE}
\revise{
Following Example~\ref{Ex:TMI}, this example shows how the DRE of \opt{short}{\salgo}\opt{full}{\algo} works based on the diffusion model to solve \opt{short}{\sproblem}\opt{full}{\problem}. Assume that the seed group becomes $S^\mathcal{G} = \{(u_1,$ iPad$,1)\}$ after $\tau_1$ is promoted. To update the complementary and substitutable relevance in each user's dynamic perception, \opt{short}{\salgo}\opt{full}{\algo} employs the Monte Carlo method to generate different cases of users' adoption decisions according to their preferences for items and the influence strength. For example, suppose that, due to the seed $u_1$'s promotion at $t = 1$, $u_2$ adopts iPad and promotes to $u_5$ by her social influence. For the case that $u_5$ adopts iPad, the update of her adopting item (referring Fig.~\ref{Fig:simple_factors}) in turn changes her personal item network from Fig.~\ref{Fig:fig_algoex_u5} to Fig.~\ref{Fig:fig_algoex_new_u5}. Meanwhile, $u_5$'s preference for AirPods increases due to the adopted item $\{\text{iPad}\}$ and her changed personal item network (Fig.~\ref{Fig:fig_algoex_new_u5}), where the complementary relevance between iPad and AirPods increases from 0.05 to 0.1. By contrast, for the case that $u_5$ does not adopt iPad at $t=1$, her personal item network and her preference for AirPods remains unchanged. Based on the number of times these cases being observed, \opt{short}{\salgo}\opt{full}{\algo} computes the expectation of $u_5$' personal item network, as shown in Fig.~\ref{Fig:fig_algoex_expected_u5}.

After that, assume that $S^\mathcal{G} = \{(u_1,$ iPad$,1), (u_7,$ iPad$,2)\}$ after $\tau_2$ is also promoted. \opt{short}{\salgo}\opt{full}{\algo} now concentrates on $\tau_3$, where $N^{\tau_3}=\{(u_2, $ AirPods$), (u_4, $ iPhone$), (u_6, $ AirPods$)\}$, $d^{\tau_3} = 3$, and $I^{\tau_3} = \{$iPhone$, $ AirPods$\}$ (due to the items not yet promoted by the nominees in $N^{\tau_3}$). \opt{short}{DRE calculates the DR for iPhone and AirPods, i.e., $DR^{\tau_3}(S^\mathcal{G},$ iPhone$)$ and $DR^{\tau_3}(S^\mathcal{G},$ AirPods$)$, respectively, according to the updated (same as above) personal item networks.\footnote{\revise{The examples with detailed explanations and calculations are presented in \cite{online}.}} Suppose that $DR^{\tau_3}(S^\mathcal{G},$ AirPods$) > DR^{\tau_3}(S^\mathcal{G},$ iPhone$)$. }
\opt{full}{Given $\mathcal{W}=\{w_\text{iPhone} = 1, w_\text{iPad} = 1, w_\text{AirPods} = 0.5\}$, DRE calculates the DR for iPhone and AirPods as $DR^{\mathcal{W},\tau_3}(S^\mathcal{G},$ iPhone$) = 0.7+1=1.7$ and $DR^{\mathcal{W},\tau_3}(S^\mathcal{G},$ AirPods$) = 1.2+0.85=2.05$, respectively, according to the updated (same as above) personal item networks.\footnote{Detailed calculations are presented in Example~\ref{ex:detailDRE}.} Therefore, }DRE \opt{short}{then }extracts $\{(u_2,$ AirPods$), (u_6,$ AirPods$)\}$ for promotion first.
\exend
}
\end{example}

\label{para:algo_end1}
\opt{short}{\revise{After a set of nominees $\{(u,x_\text{p})\}$ are extracted by DRE, \opt{short}{\salgo}\opt{full}{\algo} finds the promotional timing for them. Since the target markets have been arranged in a promoting order (by TMI) and the items with higher DR have been promoted with higher priority (by DRE), the promotional timing $t_\text{p}$ for $\{(u,x_\text{p})\}$ is assigned right after the latest promotion in $S^\mathcal{G}$ to ensure the influence spread of the seeds in $S^\mathcal{G}$ is not reduced. Let $\hat{t}$ denote the latest promotion in $S^\mathcal{G}$, i.e., $\hat{t} = \max\{t \mid (u,x,t) \in S^\mathcal{G}\}$. \opt{short}{\salgo}\opt{full}{\algo} assigns $\{(u,x_\text{p})\}$ in the promotion $t_\text{p} = \min \{\hat{t}+1, T\}$ to be the seeds, i.e., $S^\mathcal{G} \cup \{(u,x_\text{p},t_\text{p})\}$. Then, \opt{short}{\salgo}\opt{full}{\algo} selects the next item with DRE and determines the timing for the corresponding nominees.} After all nominees in $\tau_k$ are assigned their promotional timings as the seeds, TMI moves on to the next target market $\tau_{k+1} \in \mathcal{G}$. It returns the seed group $S=\bigcup_\mathcal{G} S^{\mathcal{G}}$ as the solution after all target markets are examined.\footnote{$S^\mathcal{G}$ of different $\mathcal{G}$ can be derived in parallel due to the independency of different $\mathcal{G}$.}}
\label{para:algo_end2}


\opt{full}{
\textit{3. Timing Determination by Substantial Influence (TDSI):}
After a set of nominees $N_\text{p}=\{(u,x_\text{p})\}$ are generated by DRE, TDSI iteratively extracts the nominee and finds a proper promotional timing $t$ with the largest \textit{substantial influence (SI)}, where SI is exploited to measure the impact of assigning a nominee at some timing under the impact of the seed group $S^\mathcal{G}$. In other words, with SI, TDSI can find some $(u,x_\text{p},t)$ (by assigning some nominee $(u,x_\text{p})$ at some timing $t$) that has the largest impact under $S^\mathcal{G}$. TDSI then adds it to $S^\mathcal{G}$ and looks for the next seed repeatedly until all nominees in $N_\text{p}$ are assigned some timings and added to $S^\mathcal{G}$.

Specifically, for each $(u,x_\text{p},t)$, SI represents its impacts for immediate and subsequent adoptions according to the marginal adoption $MA^{\tau_k}(S^\mathcal{G}, (u,x_\text{p},t))$ and marginal likelihood $ML^{\tau_k}(S^\mathcal{G}, (u,x_\text{p},t))$, respectively. The marginal adoption is the increment of the total adoptions after $(u,x_\text{p},t)$ is added into $S^\mathcal{G}$, whereas the marginal likelihood is the increment of the likelihood for not-yet-adopted items to be adopted in the future promotions if $(u,x_\text{p},t)$ is added into $S^\mathcal{G}$. Let $\tau_1, \tau_2, \ldots, \tau_{k-1} \in \mathcal{G}$ denote the target markets promoted prior to $\tau_k$. Note that the seeds in $\tau_i$ ($i<k$) have been determined since TMI has arranged the promoting order of target markets. Similar to DR, SI of $(u,x_\text{p},t)$ is also derived under the impact of the seed group $S^\mathcal{G}$ as follows.
\begin{align}
\label{eq:si}
&\;SI^{\tau_k}(S^\mathcal{G}, (u,x_\text{p},t),T)\\ \nonumber
= & \; MA^{\tau_k}(S^\mathcal{G}, (u,x_\text{p},t))+ \frac{T-t+1}{T} \cdot ML^{\tau_k}(S^\mathcal{G}, (u,x_\text{p},t)),
\end{align}
where the marginal likelihood is weighted by the ratio of the remaining number promotions to the total number promotions, as more remaining promotions indicate more chances for future adoptions.
To derive SI, TDSI simulates the influence diffusion with $S^\mathcal{G}$ and $S^\mathcal{G} \cup \{(u,x_\text{p},t)\}$, respectively, by the Monte Carlo method. MA is the difference of adoptions between whether to have $(u,x_\text{p},t)$ as a seed. Likewise, ML is the difference of likelihood for adopting not-yet-adopted items in the future between whether to have $(u,x_\text{p},t)$ as a seed. To estimate the likelihood, for each case simulated by the Monte Carlo method, TDSI examines the probability for every user in $\tau_k$ to adopt each not-yet-adopted item in the next promotion. Accordingly, TDSI estimates the expectation of likelihood. Detailed derivations of marginal adoption and marginal likelihood are presented in Sec.~\ref{sec:extend_si}. They capture the changes in social influence strength and personal preferences, which is crucial to improve the lower bound in Theorem~\ref{the-sp3}.

TDSI iteratively extracts the nominees, finds the time leading to the largest SI, and adds it to the seed group $S^{\mathcal{G}}$, i.e., $S^{\mathcal{G}}=S^{\mathcal{G}} \cup \{(u,x_\text{p},t)\}$, where $(u,t) = \argmax_{u',t'}{SI^{\tau_k}(S^\mathcal{G}, (u',x_\text{p}, t'), T)}$. Instead of exploring every possible $t$ to maximize Eq.~\eqref{eq:si} for each nominee, TDSI only needs to search $t \in [\hat{t}, \min \{\hat{t}+1, \sum_{i\leq k}T^{\tau_i}\}]$, where $\hat{t}$ is the latest promotion in $S^\mathcal{G}$ (i.e., $\hat{t} = \max\{t \mid (u,x,t) \in S^\mathcal{G}\}$), $T^{\tau_k}$ is the promotional duration of $\tau_k$, proportional to the number of nominees in $\tau_k$ (i.e, $T^{\tau_k} = \frac{\left| N^{\tau_k}\right| \cdot T}{\sum_{\tau_i}{\left| N^{\tau_i}\right|}}$, where $N^{\tau_i}$ is the set of nominees in $\tau_i$ and $\tau_i, \tau_k \in \mathcal{G}$).
The limitation is satisfactory for the following reasons. On one hand, since 1) the target markets have been arranged in a promoting order (by TMI), 2) the items with higher DR have been promoted with priority (by DRE), and 3) the nominees with higher SI have become seeds (by TDSI), $t$ is required to be no earlier than the seeds in $S^\mathcal{G}$, i.e., $t \geq \hat{t}$. On the other hand, since the importance-aware influence function is non-monotonically increasing (proved in Sec.~\ref{sec:theoretical}), it is not necessary to explore the later timings for nominees if the seeds of any previous promotion have not been examined entirely. Thus, evaluating only the timing right after $\hat{t}$ is sufficient, because later promotional timings only reduce the extent of the marginal likelihood considered in SI. Moreover, since promoting items of different target markets in $\mathcal{G}$ are substitutable to each other (which will not be promoted simultaneously), each market $\tau_i$ is allowed a promotional duration $T^{\tau_i}$, and $\hat{t}+1$ cannot exceed the last promotional timing of $\tau_k$, i.e., $\sum_{i\leq k}T^{\tau_i}$. Thus, the search of $t$ is upper bounded by $\min \{\hat{t}+1, \sum_{i\leq k}T^{\tau_i} \}$. 

\begin{example}
Following Example~\ref{Ex:DRE}, $S^{\mathcal{G}} = \{(u_1,$ iPad$,1), (u_7,$ iPad$,2)\}$ and $(u_2,$ AirPods$)$ and $(u_6,$ AirPods$)$ are extracted by DRE. Given $T=5$, $T^{\tau_1} = 1$, $T^{\tau_2} = 1$, and $T^{\tau_3} = 3$, since the latest promotional timing is $\hat{t} = 2$ in $S^\mathcal{G}$, TDSI searches the promotional timings for the nominees in $[2, \min\{2+1, 1+1+3\}]=[2,3]$. Regarding the two nominees, TDSI computes $SI^{\tau_3}(S^\mathcal{G},(u_2,$ AirPods$,2),5)=1.2$, $SI^{\tau_3}(S^\mathcal{G},(u_2,$ AirPods$,3),5)=1.1$, $SI^{\tau_3}(S^\mathcal{G},(u_6,$ AirPods$,2),5)=2.3$, and $SI^{\tau_3}(S^\mathcal{G},(u_6,$ AirPods$,3),5)=1.9$. TDSI thus updates $S^\mathcal{G}=\{(u_1,$ iPad$,1),(u_7,$ iPad$,2),(u_6,$ AirPods$,2)\}$. Next, as $S^\mathcal{G}$ is updated, TDSI also updates the search of promotional timings in $[3, \min\{3+1,1+1+3\}]=[3,4]$ and computes $SI^{\tau_3}(S^\mathcal{G},(u_2,$ AirPods$,3),5)=1.1$ and $SI^{\tau_3}(S^\mathcal{G},(u_2,$ AirPods$,4),5)=1$ regarding the nominee $(u_2,$ AirPods$)$. $S^\mathcal{G}=\{(u_1,$ iPad$,1),(u_7,$ iPad$,2),(u_6,$ AirPods$,2), (u_2,$ AirPods$,3)\}$ are updated accordingly. Consequently, as both nominees have become seeds, TDSI stops. 
\exend
\end{example}

After deriving the promotional timing for a nominee $(u,x_\text{p})$, \algo\ selects the next item with DRE and finds the timing with TDSI. After all nominees in $\tau_k$ are assigned their promotional timings as the seeds, TMI moves on to the next target market $\tau_{k+1} \in \mathcal{G}$. It returns the seed group $S=\bigcup_\mathcal{G} S^{\mathcal{G}}$ as the solution after all target markets are examined.\footnote{$S^\mathcal{G}$ of different $\mathcal{G}$ can be derived in parallel due to the independency of different $\mathcal{G}$.}
}

\opt{short}{
\begin{theorem}\label{the-sp3}
\revise{\salgo\ is a $(1-\frac{1}{\sqrt{e}}-\epsilon)(\min \{ P^{c}_\text{minpref}, P^{c}_\text{minext}\})$ approximation algorithm for \sproblem} in $O(M\left|V\right|\left|I\right|k_{\max})$ time, where $P_\text{minpref}>0$ and $P_\text{minext}>0$ are the minimum preference and extra adoption probability, respectively. $c$ is the maximum hop of influence propagation, $M$ is the time to evaluate $\sigma$ depending on the evaluation error $\epsilon>0$,\footnote{Note that the technique of reverse influence sampling cannot support multiple promotions since the dependency among different promotions makes positive propagation irreversible.} and $k_{\max}$ is the maximum size of a feasible solution.
\end{theorem}
\revise{
\begin{proofsk}
\label{proofsk2}
\input{4_revision_algorithm_proofsketch}
\end{proofsk}
}
}
\subsection{Theoretical Results}
\label{sec:theoretical}

In the following, we first present several good properties of the importance-aware influence function $\sigma$ under some conditions.
\begin{definition}[Submodular function \cite{Lov1983}]\label{def1.1}
Given a ground set $U$, a set function $\rho: 2^U\mapsto \mathbb{R}$ is submodular if for any subset $X\subseteq Y$ and any element $e\in U\setminus Y$, 
\begin{equation}\label{eq-submodular}
  \rho(Y\cup \{e\})-\rho(Y)\leq \rho(X\cup \{e\})-\rho(X).
\end{equation}
\end{definition}

\begin{lemma}\label{lem-submodular}
The importance-aware influence function $\sigma$ is non-monotone increasing but submodular 
for $P_\text{pref}$, $P_\text{act}$, and $P_\text{ext}$ assigned at the beginning of all promotions.
\end{lemma}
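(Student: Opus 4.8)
The plan is to establish the two claims separately: first that $\sigma$ fails to be monotone, and second that it is submodular, under the stated restriction that $P_\text{pref}$, $P_\text{act}$, and $P_\text{ext}$ are frozen at their values at the beginning of every promotion (so that the diffusion in each promotion is a classical trigger-style / reachability-style cascade with fixed parameters, and the only cross-promotion coupling is through which items remain not-yet-adopted). Non-monotonicity would follow from a small explicit counterexample: I would exhibit an instance where adding a seed $(u,x,t)$ for a low-importance (or substitutable) item at an early promotion causes a user to adopt $x$, which then prevents that user from adopting a high-importance item $y$ in a later promotion (since a user cannot be promoted an already-adopted item, or because adopting a substitutable item kills preference for $y$). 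The $w_y n_y$ loss in the later promotion exceeds the $w_x n_x$ gain, so $\sigma(S\cup\{(u,x,t)\}) < \sigma(S)$. The construction in the hardness proof already shows the flavor of how adoptions in one stage gate adoptions later, so a two-promotion variant suffices.

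For submodularity, the standard route is to work conditionally on the coin flips. I would use the live-edge / reachability realization of the diffusion: because all adoption probabilities $P_\text{act}$, $P_\text{pref}$, $P_\text{ext}$ are fixed (independent of history, since evaluated at the start of each promotion), each promotion $t$ can be represented by sampling a random ``world'' $\omega_t$ (a set of live edges / active trigger sets / reachable-branching structure) so that, in world $\omega_t$, the set of new adopters of item $x$ in promotion $t$ from seed subgroup $S_t$ is a deterministic monotone function of $S_t$ — concretely, the users reachable from the $x$-seeds along $x$-live structure who have not already adopted $x$. Writing $\sigma(S) = \mathbb{E}_{\omega_1,\ldots,\omega_T}\big[\sum_t \sum_x w_x\, n_x^{\omega}(S_t\mid S_1,\ldots,S_{t-1})\big]$ with $w_x\ge 0$, it suffices to prove, for each fixed world tuple $\omega$, that the inner quantity is submodular in $S$ as a set function, and then average (submodularity is preserved under nonnegative combinations and expectations). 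Within a fixed world, each $n_x^\omega(\cdot)$ is a coverage-type function — the number of newly-covered vertices by the seeds — which is submodular; the subtlety is the conditioning on $S_1,\ldots,S_{t-1}$, i.e. the ``already adopted'' sets shrink the universe available in later promotions.

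The main obstacle, therefore, is handling this inter-promotion dependence: adding an element $e=(u,x,t_0)$ to $X\subseteq Y$ can, in principle, both (i) directly add coverage in promotion $t_0$ and (ii) remove some vertices from the pool of not-yet-adopted users in promotions $t>t_0$, which could either decrease or increase later coverage. The key structural fact I would exploit is that in a fixed world the full ``adopters of $x$ up to and including promotion $t$'' is itself a monotone submodular function of the seed set $S$ restricted to $x$-seeds with timing $\le t$ (it is reachability-closure under a fixed relation, accumulated over promotions), and $n_x$ in promotion $t$ is the difference of two such accumulated closures (through $t$ minus through $t-1$). A difference of monotone submodular functions need not be submodular in general, so I would instead argue directly on the accumulated quantity: show that $\sum_{t}\sum_x w_x n_x^\omega(S_t\mid\cdots) = \sum_x w_x\,|\text{AdoptedBy}_x^\omega(\text{all seeds})|$ telescopes into a single coverage function per item (the total set of $x$-adopters across all $T$ promotions, which is the union over promotions of the reachable sets, i.e. a monotone submodular coverage function of $S$), and that the weighted sum over items with $w_x\ge 0$ is therefore submodular; the non-monotonicity then re-enters only because a seed for one item can block adoptions of another item, which affects a different term $w_y n_y$, and that cross-term is exactly where monotonicity is lost but submodularity is retained because blocking is itself a monotone (shrinking) operation whose effect is diminishing. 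I would make this precise by a careful case analysis of the marginal gain $\sigma(X\cup\{e\})-\sigma(X)$ versus $\sigma(Y\cup\{e\})-\sigma(Y)$, splitting into the direct contribution of $e$ (coverage, handled by classical IC/LT submodularity as in Kempe et al.) and the blocking side-effect on other items (handled by monotonicity of the already-adopted sets); this case analysis, done per world and then averaged, is the technical heart of the argument.
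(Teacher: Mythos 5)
Your high-level strategy (an explicit two-promotion counterexample for non-monotonicity, plus a live-edge realization argument reducing submodularity to a per-world deterministic statement) is the same as the paper's, but both halves contain genuine gaps. For non-monotonicity, the two mechanisms you offer do not work under the lemma's hypothesis: ``adopting a substitutable item kills preference for $y$'' is ruled out because $P_\text{pref}$ is frozen at the start of all promotions, and ``a user cannot be promoted an already-adopted item'' does not prevent that user from adopting a \emph{different} item $y$. The mechanism the paper actually uses is subtler: a user who adopts item $y$ in an early promotion cannot be promoted $y$ again and therefore cannot \emph{re-propagate} $y$ in a later promotion; since the edge coins are re-flipped independently in each promotion, an early seed can consume the one-shot adoption of intermediate nodes while their outgoing edges happen to be dead, severing paths that would have been live in the second promotion (the paper's Figure with $Pr_1'=0.74$, $Pr_2'=0.48$ versus $Pr_2=0.96$ makes this quantitative, routing the loss through $P_\text{ext}(u,y,x)=1$ and $w_y=0$, $w_x=1$).

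This same re-propagation effect breaks the key structural claim in your submodularity argument. You assert that the total set of $x$-adopters across all $T$ promotions ``telescopes into a single coverage function per item, \ldots a monotone submodular coverage function of $S$,'' with non-monotonicity entering only through cross-item blocking. That is false: the set of $y$-adopters is itself non-monotone in $S$, because adding an early $y$-seed can cause an intermediate node to adopt $y$ in promotion $1$ (when its outgoing edge is dead in that promotion's realization) and thereby fail to deliver $y$ to a downstream node in promotion $2$ (when the edge is live), whereas without the early seed that downstream node would have adopted $y$. Since your entire submodularity argument rests on this telescoping into monotone coverage functions, it collapses; and the remaining ``careful case analysis,'' which you correctly identify as the technical heart, is left unexecuted. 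The paper instead proves the submodular inequality directly in each realized deterministic graph: for $S^X\subseteq S^Y$ and a new seed $(u,x,t)$, it argues that every user reachable from $(u,x,t)$ who is blocked (promoted earlier for the same item) under $S^X$ is also blocked under $S^Y$, so the marginal coverage of $(u,x,t)$ can only shrink as the seed group grows. Some version of that blocking-monotonicity argument is what your proof is missing.
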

\begin{proof}
Firstly, we first analyze the case with a single promotion. 
It is obvious that the importance-aware influence function $\sigma$ is monotone increasing even if we add a seed that promotes an item with importance $0$ and no longer promote any other user to adopt items. That is, $\sigma$ does not decrease.
For each user-item pair, the preference $P_\text{pref}(u,y)$ is regarded as the probability of the edges from user-item pair $(u',y)$ to user-item pair $(u,y)$, where $u'$ denotes $u$'s in-neighbors. Likewise, the extra adoption probability $P_\text{ext}(u,u',x,y)$ is also regarded as the probability of the edge from user-item pair $(u',x)$ to user-item pair $(u,y)$.
When $P_\text{pref}$, $P_\text{act}$ and $P_\text{ext}$ do not change, all the events of evaluating preference $P_\text{pref}$ and influence strength $P_\text{act}$, and extra adoption $P_\text{ext}$ happen independently. 
Then, the proof of the submodularity of $\sigma$ can be reduced to the proof of its submodularity in every arbitrary realization (i.e. a deterministic graph) of the stochastic graph.
Therefore, $\sigma$ is a coverage function, which is submodular \cite{Krause2012}. 

Our proof can be generalized to the triggering model easily by sampling the corresponding in-neighbor edge set of each user independently. For better understanding of the process of influence propagation, we first show the proof for the IC-based models. Following \cite{Kempe2003}, the proof of the submodularity of importance-aware influence function on the IC model is reduced to that in every arbitrary realization of the stochastic graph. Specifically, the realization is determined by flipping a coin on every edge independently according to its probability.
We show the submodularity of $\sigma(S_t)$ in a deterministic realized graph $G_{\text{SN}}'(V,E')$ by flipping the edge coin $P_\text{act}(u',u)$ of $G_{\text{SN}}$ with the probabilities of influence strength and the edge coin $P_\text{ext}(u,u',x,y)$ with the extra adoption probabilities independently first. 
If the edge coin $P_\text{act}(u',u)$ of $G_{\text{SN}}$ is up, i.e., the event of influence happens, then 
we flip the edge coin $P_\text{pref}(u,y)$ with the probabilities of preference. 
Then, The process of adoption propagation 
can be regarded as the influence propagation process. Upon the deterministic graph $G_{\text{SN}}'$, 
a user $u$ can be successfully influenced by its in-neighbor $u'$ who has adopted $x$, if and only if it does not adopt the item $x$ or the relevant item $y$ of $x$ (i.e. the item $y$ with $P_\text{ext}(u,u',x,y)>0$ has not been purchased). Notice that the two events happen independently.
Thus, for any nominee $(u,x)$, its influence propagation becomes a connected subgraph of $G_{\text{SN}}'$ rooted by $u$, which is regarded as the influence of seeding $u$ to promote item $x$. Note that each path of the connected subgraph may consist of different items due to the happening of the two events (e.g. $P_{x_1x_2x_2x_4\ldots}$), and each user in the path may adopt more than one item.   
For each seed group $S_t$ in the $t$-th promotion,
a realization of $\sigma(S_t)$ is exactly the union of the connected subgraphs that combine the influence produced by each nominee $(u,x)$ together. 

Next, we analyze the case of multiple promotions.
Even when $P_\text{pref}$, $P_\text{act}$ and $P_\text{ext}$ are static, there are some cases showing that the monotone increasing property of the importance-aware influence function $\sigma$ does not hold. This is because if a user $u$ successfully adopts item $x$, it cannot adopt item $x$ again in the later promotions, i.e., it cannot propagate influence of item $x$ again. 
As a consequence, putting a seed to promote $u$ to adopt item $x$ early may result in the influence of $u$ adopting $x$ not propagating in later promotions to increase the adoption of other users.
Now we show the non-monotone increasing property by the following observation.

Assume there are exactly two candidate items $y$ and $x$. We let $w_x=1$, $w_y=0$, $c_{u,x}>b$, $c_{u,y}>b$, $c_{v,x}>b$ for any $v\neq u$ (i.e., only $(v,y)$ can be nominees for any $v\neq u$), 
$P_\text{pref}(v,y)\equiv 1$, $P_\text{ext}(v,y,x)=P_\text{ext}(v,x,y)\equiv 0$ 
\footnote{The simplified symbol $P_\text{ext}(v,y,x)\equiv 0$ represents that it holds for arbitrary in-neighbors of $v$.}, 
$P_\text{pref}(v,x)\equiv 0$ for any $v\neq u$ (i.e., $v$ can only adopt item $y$ for any $v\neq u$), $P_\text{pref}(u,x)\equiv 0$, and $P_\text{ext}(u,y,x))\equiv 1$ (i.e., $u$ can only adopt item $x$ after its in-neighbor adopts item $y$). As a consequence, the total influence focuses on computing the probability of user $u$ adopting item $x$.
Assume there are exactly two promotions of influence propagation. 
Given a seed group $S$, we denote the probability of user $u$ adopting item $x$ in the first and second promotions as $Pr_1$ and $Pr_2$, respectively, according to the seed group $S$. Note that $\sigma(S)=Pr_1+(1-Pr_1)Pr_2$ and $Pr_2$ is conditioned on $Pr_1$.
If we add any nominee $(v,y)$ to the first promotion, the probability of $u$ adopting $x$ in the first promotion becomes $Pr'_1$, where $Pr'_1\geq Pr_1$. However, the probability of $u$ adopting $x$ in the second promotion becomes $Pr'_2$, which is at most $Pr_2$. This is because some nodes on the existing promoting paths, which make $u$ adopt $x$, in the second promotion cannot propagate influence since they have been promoted by $(v,y)$ in the first promotion, and the new nodes promoted the adopted items by $(v,y)$ cannot improve the probability of $u$ adopting $x$ in the second promotion. 
It implies $\sigma(S)=Pr_1+(1-Pr_1)Pr_2$ is larger than $\sigma(S\cup \{(v, y,1)\})=Pr'_1+(1-Pr'_1)Pr'_2$ under the condition of $(1-Pr_1)Pr_2-(1-Pr'_1)Pr'_2 > Pr'_1-Pr_1$. By letting $Pr_1=0$ (e.g. no seed in the first promotion), the condition becomes $Pr_2-(1-Pr'_1)Pr'_2 > Pr'_1$. If a sufficiently large $Pr_2$ (e.g. almost $1$) is achieved, then choosing $Pr'_2$ slightly smaller than $1$ is sufficient to satisfy the condition. 
The illustrative example is shown in Figure~\ref{counter-example}.
\begin{figure}
            \centering
            \includegraphics[height=1.7in,width=0.38\linewidth]{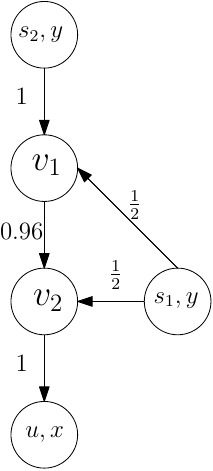}
    \caption{\small{Illustrative example of non-monotone increasing of $\sigma$, where $(s_1,y,1)$ and $(s_2,y,2)$ are seeds, and the weight of each edge is the influence strength.}}
   
    \label{counter-example}
\end{figure}
Note that $Pr_1=0$, $Pr'_1=0.74$, $Pr_2=0.96$ and $Pr'_2=0.48$ imply $\sigma(\{(s_2,y,2)\})=Pr_1+(1-Pr_1)Pr_2=0.96>\sigma(\{(s_1,y,1), (s_2,y,2)\})=Pr'_1+(1-Pr'_1)Pr'_2=0.74+(1-0.74)\times 0.48=0.8648$, which is not monotone increasing.

The realization of $\sigma(S)$ is not the union of total adopted items in each promotion directly.
However, 
the submodularity can still be proved in a realized deterministic graph, which is a specific disjoint union of each graph corresponding to a different promotion, and each graph is similar to the one in single promotion. Hence, The importance-aware influence function is also a coverage function, which is submodular \cite{Krause2012}.
Specifically,
in the first promotion, we first flip 
the edge coin $P_\text{act}(u',u)$ of $G_{\text{SN}}$ with the probabilities of influence strength and the edge coin $P_\text{ext}(u,u',x,y)$ with the extra adoption probabilities independently. If the edge coin $P_\text{act}(u',u)$ of $G_{SN}$ is up, then we flip the edge coin $P_\text{pref}(u,y)$ with the probabilities of preference independently. Afterward, we obtain the connected subgraphs of the deterministic graph rooted at each seed of $S_1$. 
After the first promotion, the promoted nodes in this promotion cannot be promoted the adopted items again, and it propagates the corresponding the influence later. 
Therefore, after flipping edge coins for the second promotion again, we obtain those connected subgraphs of the deterministic graph rooted at each seed of $S_2$ to avoid reaching each promoted user-item pair $(u,x)$ in the connected subgraphs rooted at each seed of $S_1$.\footnote{A user may have several candidate user-item pairs and the promoting path can reach the user-item pair which is not promoted in the first promotion through the living edges.} 
In addition, if $S_2$ has some nominees in the first promotion, these nominees can still try to promote their neighbors in the second promotion since they are chosen as new seeds again.
Similarly, we repeat the same process until all the random edges over the $T$ promotions are realized.
Notice that we flip edge coins of all the $T$ promotions before seeding since the probability of each edge in the later promotions is independent of that in the previous promotions. After the above realization, let $\bar \sigma(S_t \mid S_0,S_1,\ldots,S_{t-1})$ denote the influence of seeding $S_t$ conditioned on seeding $S_{t'}$ for $t'=0,1,\ldots,t-1$, where $S_0=\emptyset$.
Let $\bar \sigma(S)$ denote the importance-aware influence of a realization of $\sigma(S)$.
Then, $\bar \sigma(S)=\mathop{\cup} \limits_{t=1}^{T} \bar \sigma(S_t \mid S_0,S_1,\ldots,S_{t-1})$.
To prove the submodularity, it is sufficient to show that the function $\bar \sigma$ under each realization satisfies Inequality~\eqref{eq-submodular} based on the property that a user cannot be promoted again by another path, if it is promoted by some realized path rooted at some nominee in an early promotion. 
For any seed group set $S^X\subseteq S^Y$ and any seed $(u,x,t)\in U\setminus S^Y$, where $U$ consists of all the possible seeds, it is sufficient to prove that the marginal decreased influence of seed $(u,x,t)$ conditioned on the larger seed group is more than the smaller one. It is equivalent to show that all the users, who can be promoted by $(u,x,t)$ but have been promoted by $S^X$ earlier than $(u,x,t)$, can also be promoted by $S^Y$ earlier than $(u,x,t)$ to adopt the same item. Since $S^Y$ contains $S^X$, for each $(v,y,t')\in S^X$, it can either block $(u,x,t)$ similar to those in $S^Y$ or be replaced by some seed $(v',y',t'')\in S^Y\setminus S^X$ that does not have smaller influence than $(v,y,t')$ in $S^X$ (i.e. $(v',y',t'')$ may promote users earlier or promote more users). Therefore, 
\begin{equation}
  \bar \sigma (S^Y\cup \{(u,x,t)\})- \bar \sigma(S^Y)\leq \bar \sigma(S^X\cup \{(u,x,t)\})-\bar \sigma(S^X).
\end{equation}
The lemma follows. 
\end{proof}

Note that the general importance-aware influence function is not submodular. The reason is that, in some case, some users' $P_\text{pref}$ or $P_\text{ext}$ in an early promotion can be increased by seeds, but these seeds cannot bring the immediate adoption in this promotion. For a later promotion, however, the seeds can bring more immediate adoptions due to the increased $P_\text{pref}$ or $P_\text{ext}$. As a consequence, the sum of immediate adoptions under the influence of respective seeds
in each promotion is smaller than the overall adoptions under the influence of all seeds in all promotions.
It makes the importance-aware influence function non-submodular, i.e, violating the submodular Inequality~\eqref{eq-submodular}.

According to Lemma \ref{lem-submodular}, based on the approximation algorithm for the \textit{non-monotone increasing submodular maximization with knapsack constraint} problem (SMK), we have the following theorem.
\begin{theorem}\label{the-G-SMK} 
For $P_\text{pref}$, $P_\text{act}$, and $P_\text{ext}$ assigned at the beginning of all promotions, there is an $\alpha-\epsilon$--approximation algorithm for \problem, where $\alpha$ is the performance ratio of the approximation algorithm for SMK and
$\epsilon>0$ is the evaluation error.
\end{theorem}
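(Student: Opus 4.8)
The plan is to recognize the special case of \problem\ in the statement as an instance of non-monotone submodular maximization subject to a knapsack constraint (SMK) and then to invoke the corresponding approximation algorithm as a black box, paying only the sampling error $\epsilon$ for the fact that the objective must be estimated rather than computed exactly. Concretely, let $U = \{(u,x,t) : u \in V,\, x \in I,\, 1 \le t \le T\}$ be the ground set of all possible seeds; this set has size $|V|\cdot|I|\cdot T$, which is polynomial in the input. The cost of including $(u,x,t)$ in a seed group is $c_{u,x}$, independent of $t$, so the budget constraint $\sum_{(u,x,t)\in S} c_{u,x} \le b$ is exactly a knapsack constraint on $U$. By Lemma~\ref{lem-submodular}, with $P_\text{pref}$, $P_\text{act}$, and $P_\text{ext}$ fixed at the beginning of all promotions, the objective $\sigma$ is non-monotone increasing and submodular on $2^U$. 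Hence maximizing $\sigma$ over feasible seed groups is precisely an SMK instance $(U, \sigma, \{c_{u,x}\}, b)$, and any $\alpha$-approximation algorithm for SMK returns a feasible seed group of value at least $\alpha \cdot \mathrm{OPT}$ --- provided $\sigma$ can be evaluated exactly.

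The only gap is that an exact value oracle for $\sigma$ is unavailable, since computing $\sigma(S)$ generalizes the \#P-hard influence-spread computation. First I would replace the exact oracle by an estimator $\hat\sigma$ obtained by simulating the diffusion process a polynomial number of times --- equivalently, by sampling realizations of the stochastic graph described in the proof of Lemma~\ref{lem-submodular} --- and averaging the importance-weighted adoption counts. Since each realization contributes a bounded quantity (at most $\sum_x w_x |V|$), a Chernoff/Hoeffding bound shows that $O(\epsilon^{-2}\log(1/\delta))$ samples suffice to guarantee $|\hat\sigma(S) - \sigma(S)| \le \epsilon' \cdot \sigma(S)$ with probability at least $1-\delta$ for a single query $S$. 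I would then take a union bound over the polynomially many (adaptive) oracle queries the SMK subroutine issues, with $\delta$ set inverse-polynomially, so that with high probability every oracle call is accurate within relative error $\epsilon'$.

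Next I would track how this per-call error degrades the guarantee. The standard analysis of greedy/local-search-based SMK algorithms shows that if every marginal-gain evaluation is within a $(1\pm\epsilon')$ factor, the final ratio weakens from $\alpha$ to $\alpha - O(\epsilon')$; choosing $\epsilon'$ as a suitable constant fraction of $\epsilon$ yields the claimed $\alpha - \epsilon$ bound. Combining the three ingredients --- the reduction to SMK, the sampling oracle, and the error-accumulation analysis --- gives a randomized polynomial-time $(\alpha-\epsilon)$-approximation, which is Theorem~\ref{the-G-SMK}.

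I expect the main obstacle to lie in the second and third steps together: arguing carefully that the sampling error does not compound across the polynomially many adaptive queries of the SMK subroutine, and that a relative-error (noisy) oracle still preserves the approximation ratio for a \emph{non-monotone} objective, where the analyses are more delicate than in the monotone case. One must check that the particular SMK algorithm invoked tolerates such noise with only a graceful $O(\epsilon)$ loss, and that the enumeration/partial-enumeration steps it uses to cope with large-cost elements under the knapsack (rather than cardinality) constraint remain valid when the objective is evaluated approximately.
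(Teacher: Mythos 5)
Your proposal is correct and follows essentially the same route the paper intends: the paper gives no explicit proof of Theorem~\ref{the-G-SMK}, simply asserting that it follows from Lemma~\ref{lem-submodular} together with a black-box SMK algorithm, with $\epsilon$ absorbing the Monte Carlo evaluation error. Your write-up merely makes explicit the ground-set/knapsack identification and the Chernoff-plus-union-bound argument for the noisy oracle, which the paper leaves implicit.
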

Although there is an improved $\frac{1}{e}-\epsilon$ --approximation algorithm for SMK by continuous greedy strategy \cite{Feldman2011} (implying $\frac{1}{e}-\epsilon$--approximation for \problem\ for the current user preferences, social influence, and extra adoption probabilities), it takes much more time than $O(n^5)$ (roughly $O(n^8)$).
Due to the high complexity of the above approximation algorithm for SMK, we design a fast and simple approximation algorithm within $O(n^2)$ function calls, where $n$ is the size of the ground set. 
So far, for SMK, there is a $\frac{1}{6}$--approximation algorithm \cite{Gupta2010Constrained} taking $O(n^5)$ by enumeration of at most three potential solutions with the greedy strategy, where $n$ is the size of the ground set. Inspired by the work \cite{Gupta2010Constrained}, following a similar frame, our algorithm does not involve the enumeration, and the time complexity can be reduced to $O(n^2)$.
More specifically, there are two steps to achieve the goal. First, instead of finding a solution satisfying the condition of Lemma \ref{lem-greedy} combining the enumeration and greedy strategy, we find the desired solution satisfying the condition by TMI. 
TMI iteratively adds an element with the maximum marginal cost-performance ratio (i.e., MCP) into the solution until the budget is just violated or negative marginal influence occurs. 
Second, to achieve the desired approximation, following a similar approach in \cite{Gupta2010Constrained} by calling the greedy process with $O(n^2)$ function calls and a linear time algorithm \cite{Buchbinder2015A} for unconstrained submodular maximization problem (USM), we can obtain a solution with a good approximation ratio. However, the achieved solution may be not feasible. We find the feasible solution by deleting the violating element to obtain a new candidate solution. Then, the algorithm finds another seed group consisting of a single element that has the maximum importance-aware influence, and chooses the better one between them.
To analyze the performance ratio, we first derive the following two key lemmas. 
In the following, for better comparison with \cite{Gupta2010Constrained} and understanding, we still use $f$ as the importance-aware influence function of SMK.
\begin{lemma}[Lemma 2.2 in \cite{Gupta2010Constrained} ]\label{lem-19-8-5-1}
Given a submodular function, sets $C$, $S_1 \subset U$, let $C' = C \setminus S_1$, and $S_2 \subset  U \setminus S_1$.
Then $f(S_1 \cup C) + f(S_1 \cap C) + f(S_2 \cup C') \geq f(C)$.\footnote{For detailed definition of $S_1$ and $S_2$, please refer to Theorem 2.3 in \cite{Gupta2010Constrained}.}
\end{lemma}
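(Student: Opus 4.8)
Proof proposal for Lemma~\ref{lem-19-8-5-1} (Lemma 2.2 from [Gupta et al.])

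This is a reproduction of a known lemma, so the plan is simply to reconstruct its short proof from submodularity and the inclusion–exclusion-type inequality. The key observation is that $S_1 \cup C = S_1 \cup C'$ since $C' = C \setminus S_1$, and $S_1 \cap C$ together with $C'$ partition $C$. So the plan is to bound $f(C) = f((S_1 \cap C) \cup C')$ from above using submodularity, and then separately bound each of the two resulting pieces.

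\medskip

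\emph{Step 1 (decompose $f(C)$).} Write $C = (S_1 \cap C) \cup C'$ as a disjoint union. By submodularity (equivalently, subadditivity of a submodular function with nonnegative... — actually more carefully, I would apply the defining inequality~\eqref{eq-submodular} with the pair $X = \emptyset \subseteq Y$ is too weak; instead I use the standard consequence that for submodular $f$ and any sets $A,B$, $f(A) + f(B) \geq f(A\cup B) + f(A\cap B)$). Applying this with $A = S_1 \cap C$ and $B = C'$, and noting $A \cap B = \emptyset$ so $f(A\cap B) = f(\emptyset) \le$ (handled below), I get $f(S_1 \cap C) + f(C') \ge f(C) + f(\emptyset)$, hence $f(C) \le f(S_1 \cap C) + f(C')$ (using $f(\emptyset)\ge 0$, which holds for the coverage/influence function here; in the abstract SMK setting one assumes $f(\emptyset)=0$).

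\medskip

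\emph{Step 2 (bound $f(C')$ by $f(S_2 \cup C')$).} Since $S_2 \subseteq U \setminus S_1$ and $C' \subseteq U\setminus S_1$, adding $S_2$ to $C'$ only helps in the sense needed: actually here I must be careful, since $f$ is \emph{non-monotone}. The correct route is to use submodularity again: $f(S_2 \cup C') + f(S_2 \cap C') \le$ does not immediately give what I want either. The cleanest argument is that in [Gupta et al.]\ this lemma is proved by writing $f(C) = f(C') + [f(C) - f(C')]$ and bounding the marginal $f(C) - f(C') = f(C' \cup (S_1\cap C)) - f(C')$ by the marginal of adding $S_1 \cap C$ to the smaller set, i.e.\ by $f(S_1 \cup C) - f(S_1 \cup C')$; then using $S_1 \cup C' = S_1 \cup C$ this marginal telescopes. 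Then one bounds $f(C')$ itself below by $f(S_2 \cup C') + f(S_1 \cap C)$ via a second application of the submodular inequality $f(A)+f(B)\ge f(A\cup B)+f(A\cap B)$ with $A = S_2\cup C'$, $B = S_1\cup C$. I would follow that telescoping bookkeeping precisely.

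\medskip

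\emph{Step 3 (assemble).} Combining the inequalities from Steps 1–2 and canceling the common terms yields $f(S_1\cup C) + f(S_1\cap C) + f(S_2\cup C') \ge f(C)$, as claimed.

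\medskip

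\textbf{Main obstacle.} There is no real obstacle in the sense of difficulty; the only care needed is to apply each instance of the submodular inequality $f(A)+f(B)\ge f(A\cup B) + f(A\cap B)$ with exactly the right $A,B$ so that the telescoping cancellation goes through, and to track the (assumed nonnegative, or zero) value of $f(\emptyset)$. Since this is verbatim Lemma~2.2 of~\cite{Gupta2010Constrained}, I would cite that source and include only the two-line computation for self-containedness.
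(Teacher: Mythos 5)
The paper does not prove this lemma; it is quoted from \cite{Gupta2010Constrained} with only a citation, so there is no in-paper proof to compare against. Your reconstruction follows the standard argument and is correct in substance: two applications of the equivalent form $f(A)+f(B)\ge f(A\cup B)+f(A\cap B)$ of submodularity, first with $A=S_1\cap C$, $B=C'$ (a disjoint partition of $C$, giving $f(S_1\cap C)+f(C')\ge f(C)+f(\emptyset)\ge f(C)$), then with $A=S_2\cup C'$, $B=S_1\cup C$ (giving $f(S_2\cup C')+f(S_1\cup C)\ge f(S_1\cup S_2\cup C)+f(C')\ge f(C')$), and adding the two. Both steps use non-negativity of $f$, which you flag for $f(\emptyset)$ but also need for discarding $f(S_1\cup S_2\cup C)$; this hypothesis is part of the setting in \cite{Gupta2010Constrained} and holds for the influence function here. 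Two points need tightening. First, your Step 2 is mis-stated: you say you bound $f(C')$ ``below by $f(S_2\cup C')+f(S_1\cap C)$,'' whereas the application you name actually yields (and what the argument needs is) the upper bound $f(C')\le f(S_2\cup C')+f(S_1\cup C)$. Second, that application only closes because $(S_2\cup C')\cap(S_1\cup C)=C'$ exactly; since $f$ is non-monotone you cannot substitute any superset of $C'$ for the intersection, so this identity must be verified — it follows from $S_2\cap S_1=\emptyset$ (hence $S_2\cap C\subseteq C'$) and is the only place the hypothesis $S_2\subseteq U\setminus S_1$ enters. With those repairs the proof is complete, and the telescoping detour sketched in your Step 2 is unnecessary.
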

The following lemma shows that TMI can find a solution with a good approximation ratio.
\begin{lemma}\label{lem-greedy}
For any feasible set $C$, TMI returns a set $S$ (just violating the budget constraint) disjointing with $C$ that satisfies $f(S) \geq \frac{1}{2}f(S \cup C)$ within $O(n^2)$ calls of function $f$, where $n$ is the size of the ground set.
\end{lemma}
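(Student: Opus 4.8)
The plan is to run the classical ``density greedy'' analysis for knapsack-constrained submodular maximization, adapted to the fact (Lemma~\ref{lem-submodular}) that $f=\sigma$ is submodular but only \emph{non-monotone}, following the same template as the greedy lemma of \cite{Gupta2010Constrained}. First I would reduce to the case $S\cap C=\emptyset$: for an arbitrary feasible $C$, replacing it by $C\setminus S$ leaves both $f(S\cup C)$ and budget feasibility unchanged, so it suffices to prove $f(S)\ge\tfrac12 f(S\cup C)$ whenever $C$ is feasible and disjoint from $S$. Order $S=\{s_1,\dots,s_\ell\}$ as TMI picks the elements, put $S_i=\{s_1,\dots,s_i\}$, $S_0=\emptyset$ (with $f(\emptyset)=\sigma(\emptyset)=0$), and let $\rho_i=\bigl(f(S_i)-f(S_{i-1})\bigr)/c_{s_i}$ be the marginal cost-performance ratio realized at step $i$, so that $f(S)=\sum_{i=1}^\ell \rho_i c_{s_i}$ by telescoping.

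Next I would establish two structural facts. (i) \emph{The densities are non-increasing: $\rho_1\ge\rho_2\ge\cdots\ge\rho_\ell$.} Indeed, at step $i$ the element $s_{i+1}$ is still a candidate and TMI picked $s_i$ as the maximizer, so the density of $s_{i+1}$ against $S_{i-1}$ is at most $\rho_i$; submodularity only decreases its marginal value against the larger set $S_i$, and dividing by $c_{s_{i+1}}$ gives $\rho_{i+1}\le\rho_i$. (ii) \emph{For every $e\in C$, $f(S\cup\{e\})-f(S)\le\rho_\ell c_e$.} Here I use that TMI (as described: repeatedly take the global maximum-MCP candidate until the budget is first exceeded) never discards an element of $C$ before step $\ell$, so $e$ is a candidate at step $\ell$ and its density against $S_{\ell-1}$ is at most $\rho_\ell$; submodularity then gives $f(S\cup\{e\})-f(S)\le f(S_{\ell-1}\cup\{e\})-f(S_{\ell-1})\le\rho_\ell c_e$.

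With (i) and (ii) I would conclude by a case split on the sign of $\rho_\ell$. If $\rho_\ell<0$, then by (ii) every $e\in C$ has $f(S\cup\{e\})-f(S)<0$, so the subadditivity of submodular marginals (a telescoping-plus-submodularity estimate) gives $f(S\cup C)-f(S)\le\sum_{e\in C}\bigl(f(S\cup\{e\})-f(S)\bigr)<0$, hence $f(S)>f(S\cup C)\ge\tfrac12 f(S\cup C)$ using $f\ge 0$. If $\rho_\ell\ge0$, then summing (ii) over $C$ and using feasibility $\sum_{e\in C}c_e\le b$ yields $f(S\cup C)-f(S)\le\rho_\ell b$; on the other side, since $S$ just violates the budget, $\sum_{i=1}^\ell c_{s_i}>b$, so by (i), $f(S)=\sum_i\rho_i c_{s_i}\ge\rho_\ell\sum_i c_{s_i}\ge\rho_\ell b\ge f(S\cup C)-f(S)$, i.e.\ $2f(S)\ge f(S\cup C)$. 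Either way $f(S)\ge\tfrac12 f(S\cup C)$. Finally, the running time is $O(n^2)$ calls of $f$ because TMI runs at most $n$ rounds and, in each round, evaluates $f$ at most once per surviving candidate (the value $f(N)$ being cached) to compute the $\argmax$.

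The step I expect to be the main obstacle is keeping the argument correct under non-monotonicity: the element that makes the budget ``just violated'' may contribute a negative marginal gain, so one cannot lower-bound $f(S)$ by $f(S_{\ell-1})$ nor assume $\rho_\ell\ge0$, and the case split above is exactly what absorbs this. A related point that must be handled with care is that the bound (ii) relies on \emph{all} elements of $C$ remaining available throughout the greedy run, which holds for the stopping rule stated in the text (take the global best MCP until the budget is exceeded) but not for the cost-pruned variant in the pseudocode; the proof should be written against the former. The remaining ingredients --- monotonicity of the densities via submodularity and subadditivity of submodular marginals --- are routine.
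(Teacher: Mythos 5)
Your proof is correct and follows essentially the same route as the paper's: non-increasing greedy densities via submodularity, bounding each element of $C$'s marginal against $S$ by the last density $\rho_\ell$ times its cost, and then playing $\sum_{e\in S}c_e>b\ge\sum_{e\in C}c_e$ off against feasibility; your explicit case split on the sign of $\rho_\ell$ is just a cleaner packaging of the paper's Case 1/Case 2 (and its footnote observing that the mediant-sum inequality survives negative marginals). Your side remark that the argument needs the ``greedily add the global best MCP until the budget is first exceeded'' stopping rule, rather than the cost-pruned candidate set in the pseudocode, is accurate and matches the version of TMI the paper's own proof analyzes.
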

\begin{proof}
Case 1: TMI stops when just violating the budget constraint.

In this case, no negative marginal influence occurs in TMI.
Let $S=\{e_1,e_2,\ldots,e_l\}$ be the output, where the elements are numbered in the order added into $S$ in TMI. 
Note that $S$ just violates the budget constraint after element $e_l$ is added, i.e, $c(S)=\sum_{e \in S}{c_e}>b$. We let $C=\{e'_1,e'_2,\ldots,e'_k\}$ and denote $S[:i]=\{e_1,e_2,\ldots,e_i\}$ and $C[:j]=\{e'_1,e'_2,\ldots,e'_j\}$ for $i=1,2,\ldots,l$ and $j=1,2,\ldots,k$, respectively. Note that $S[:l]=S$.
On one hand,
for any $i<j$, since
\begin{align*}
&\frac{f(S[:i])-f(S[:i-1])}{c_{e_i}} \\ 
& \hspace{1cm} \geq \frac{f(S[:i-1]\cup \{e_j\})-f(S[:1],\ldots,S[:i-1])}{c_{e_j}}    
\end{align*}
by TMI and $f(S[:i-1]\cup \{e_j\})-f(S[:i-1])\geq f(S[:j-1]\cup \{e_j\})-f(S[:j-1])$ by submodularity, we have  $$\frac{f(S[:i])-f(S[:i-1])}{c_{e_i}} \geq \frac{f(S[:j])-f(S[:j-1])}{c_{e_j}}.$$ Thus 
\begin{equation}\label{eq-19-8-7-1}
    \frac{f(S[:i])-f(S[:i-1])}{c_{e_i}} \geq \frac{f(S[:l])-f(S[:l-1])}{c_{e_l}}
\end{equation}
for any $i<l$.

On the other hand, by TMI,
\begin{align}
& \frac{f(S[:l])-f(S[:l-1])}{c_{e_l}} \nonumber \\
&\hspace{1cm}\geq \frac{f(S[:l-1]\cup \{e'_j\})-f(S[:l-1])}{c_{e'_j}}     \nonumber
\end{align}
for any $j=1,2,\ldots,k$.
Furthermore, by submodularity,
\begin{align}
& f(S[:l-1]\cup \{e'_j\})-f(S[:l-1]) \nonumber \\ 
& \hspace{1cm} \geq f(S[:l]\cup C[:j])-f(S[:l]\cup C[:j-1])     \nonumber
\end{align}
for any $j=1,2,\ldots,k$.
Therefore, 
\begin{align}
&\frac{f(S[:l])-f(S[:l-1])}{c_{e_l}} \nonumber \\
&\hspace{1cm}\geq \frac{f(S[:l]\cup C_j)-f(S[:l]\cup  C[:j-1])}{c_{e'_j}}     \nonumber
\end{align}
for any $j=1,2,\ldots,k$.
By Inequality~\eqref{eq-19-8-7-1}, we have 
\begin{align}\label{eq-19-8-7-2}
    &\frac{f(S[:i])-f(S[:i-1])}{c_{e_i}} \\
    &\hspace{0.5mm} \geq \frac{f(S[:l]\cup C[:j])-f(S[:l]\cup C[:j-1])}{c_{e'_j}}
\end{align}
for any $i=1,2,\ldots,l$ and $j=1,2,\ldots,k$.
After summing up the numerator and the denominator separately of the left-hand-side with $l$ terms and the right-hand-side with $k$ terms in Inequality~\eqref{eq-19-8-7-2}, respectively,
we have\footnote{In fact, even if the negative marginal influence occurs, we can continue to run the algorithm until the budget is just violated. Although the marginal influence may be negative, Inequality~\eqref{eq-19-8-7-3} still holds only if the corresponding denominator (i.e. cost function) is more than zero, implying the lemma still holds.}
\begin{equation}\label{eq-19-8-7-3}
    \frac{f(S[:l])}{\sum_{i=1}^l c_{e_i}} \geq \frac{f(S[:l]\cup C[:k])-f(S[:l])}{\sum_{j=1}^k c_{e'_j}}.
\end{equation}
Moreover, since $\sum_{i=1}^l c_{e_i}> b \geq \sum_{j=1}^k c_{e'_j}$ due to the feasibility of $C$, by Inequality~\eqref{eq-19-8-7-3}, 
$f(S[:l])\geq f(S[:l]\cup C[:k])-f(S[:l])$.
Hence $f(S[:l])\geq \frac{f(S[:l]\cup C[:k])}{2}$. Recall that $C[:k]=C$ and $S[:l]=S$, we have $f(S)\geq \frac{f(S\cup C)}{2}.$ Furthermore, note that TMI needs at most $O(n^2)$ function calls. 

Case 2: TMI stops when the negative marginal influence occurs.

In this case, assume $f(S[:i]\cup \{e_i\})-f(S[:i-1])<0$ just after $e_i$ is added. By Inequality~\eqref{eq-19-8-7-2}, we have $f(S[:l]\cup C[:j])-f(S[:l]\cup C[:j-1]) \leq f(S[:i])-f(S[:i-1])\leq 0$ for any $j=1,2,\ldots,k$. Thus, $f(S[:l]\cup C[:k])-f(S[:l])=\sum_{j=1}^k (f(S[:l]\cup C[:j])-f(S[:l]\cup C[:j-1]))<0$, implying $f(S[:l]\cup C[:k])-f(S[:l])<0$ achieving a strong result.
Following both cases, the lemma follows.

\end{proof}
Based on Lemma \ref{lem-19-8-5-1} and \ref{lem-greedy}, we have the following theorem.
\begin{theorem}[adapted from Theorem 2.3 in \cite{Gupta2010Constrained}]\label{the-SMK}
There is a $\frac{1}{12}$--approximation algorithm for SMK within $O(n^2)$ calls of function $f$.
\end{theorem}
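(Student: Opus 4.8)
\emph{Proof proposal.} The plan is to follow the proof of Theorem~2.3 in \cite{Gupta2010Constrained}, but to drop its enumeration over the three heaviest elements of the optimum---using instead a single call of TMI per greedy phase, governed by Lemma~\ref{lem-greedy}---and to compensate for the lost enumeration by adding ``the best single element'' as a separate candidate solution. Forgoing the enumeration costs one extra factor of $2$, which is why the ratio is $\tfrac{1}{12}$ rather than $\tfrac16$. Throughout I take $f$ normalized ($f(\emptyset)=0$) and nonnegative, fix a feasible optimum $O$, assume elements whose cost alone exceeds $b$ have been pruned from the ground set $U$, and set $e^{\star}=\argmax_{e\in U}f(\{e\})$, so that $\{e^{\star}\}$ is feasible.

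The algorithm produces four feasible candidates and outputs the best; candidates (a)--(c) jointly realize the ``good-ratio but possibly infeasible'' solution that is then repaired, and (d) is the safety net. (a) Run TMI on $U$ to get a set $S_1$ that just violates the budget; by Lemma~\ref{lem-greedy} applied with $C=O\setminus S_1$ this costs $O(n^2)$ calls of $f$ and gives $f(S_1)\geq\tfrac12 f(S_1\cup O)$. Let $S_1'$ be its feasible truncation (delete the last-added element). (b) Run TMI on $U\setminus S_1$ to get $S_2$ that just violates the budget; Lemma~\ref{lem-greedy} with $C=O\setminus S_1$ now gives $f(S_2)\geq\tfrac12 f\big(S_2\cup(O\setminus S_1)\big)$. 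Let $S_2'$ be its feasible truncation. (c) Run the linear-time $\tfrac12$-approximation for unconstrained submodular maximization of \cite{Buchbinder2015A} on the ground set $S_1$, yielding $D\subseteq S_1$ with $f(D)\geq\tfrac12\max_{T\subseteq S_1}f(T)$, and let $D'$ be its feasible truncation (one deletion suffices, since $D\subseteq S_1$ and $S_1$ only just violates the budget). (d) The singleton $\{e^{\star}\}$. The total cost is $O(n^2)$ calls of $f$, dominated by the two TMI runs.

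The analysis applies Lemma~\ref{lem-19-8-5-1} with $S_1$ and $S_2$ as above and $C=O$ (so $C'=O\setminus S_1$), giving
\begin{equation*}
f(S_1\cup O)+f(S_1\cap O)+f\big(S_2\cup(O\setminus S_1)\big)\ \geq\ f(O),
\end{equation*}
so at least one of the three nonnegative summands is $\geq\tfrac13 f(O)$. In the first case Lemma~\ref{lem-greedy} gives $f(S_1)\geq\tfrac12 f(S_1\cup O)\geq\tfrac16 f(O)$; in the third case it gives $f(S_2)\geq\tfrac16 f(O)$; in the second case, since $S_1\cap O\subseteq S_1$, the USM guarantee gives $f(D)\geq\tfrac12 f(S_1\cap O)\geq\tfrac16 f(O)$. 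In every case we have, before repair, one of the sets $S_1,S_2,D$ with $f$-value $\geq\tfrac16 f(O)$, and each becomes feasible by deleting a single element $e$; by submodularity the deleted marginal is at most $f(\{e\})\leq f(\{e^{\star}\})$, so the truncation $X'\in\{S_1',S_2',D'\}$ satisfies $f(X')+f(\{e^{\star}\})\geq\tfrac16 f(O)$ and hence $\max\{f(X'),f(\{e^{\star}\})\}\geq\tfrac{1}{12}f(O)$. Therefore the best of the four returned candidates has value at least $\tfrac{1}{12}f(O)$.

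The step I expect to be the main obstacle is exactly the one the enumeration in \cite{Gupta2010Constrained} was meant to sidestep: restoring feasibility of the greedy and USM outputs without losing too much value. With the enumeration one pre-commits to $O$'s heaviest elements, so every surviving element is light and truncation is almost free; here the only safeguard is the extra candidate $\{e^{\star}\}$, and making it work forces the above case split together with a careful accounting of the three constants---$\tfrac13$ from the three summands of Lemma~\ref{lem-19-8-5-1}, $\tfrac12$ from Lemma~\ref{lem-greedy} (or from the USM approximation), and $\tfrac12$ from repairing feasibility against $e^{\star}$---to confirm their product is $\tfrac{1}{12}$. The remaining points---that $S_1\cap O\subseteq S_1$ matches the range over which $D$ is near-optimal, that the second greedy run is covered by Lemma~\ref{lem-greedy} applied on the residual ground set $U\setminus S_1$ with residual target $O\setminus S_1$, and that every step stays within $O(n^2)$ calls of $f$---are routine.
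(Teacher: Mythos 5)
Your proposal is correct and follows essentially the same route as the paper's proof: two TMI runs on $U$ and $U\setminus S_1$ bounded via Lemma~\ref{lem-greedy}, a USM call on the ground set $S_1$, the three-term decomposition of $f(S^*)$ from Lemma~\ref{lem-19-8-5-1} yielding a $\frac{1}{6}$-approximate but possibly infeasible set, and the same feasibility repair that compares the one-element truncation against the best singleton, losing a factor of $2$ by submodularity to reach $\frac{1}{12}$ in $O(n^2)$ calls. The only differences are presentational (you split into three equal-weight cases and take the best of four explicit candidates, while the paper parameterizes the USM threshold by $c$ and optimizes to $c=\frac13$), and both yield the identical bound.
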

\begin{proof}
Let $S^*$ be the optimal solution with $f(S^*)=opt$. Note that $S^*\setminus S_1$ and $S^*\setminus S_1\setminus S_2$ are both feasible solutions disjointing with $S_1$ and $S_2$, respectively. By Lemma \ref{lem-greedy}, $f(S_1)\geq \frac{1}{2}f(S_1\cup (S^*\setminus S_1))=\frac{1}{2}f(S_1\cup S^*)$ and $f(S_2)\geq \frac{1}{2}f(S_2\cup (S^*\setminus S_1\setminus S_2))=\frac{1}{2}f(S_2\cup (S^*\setminus S_1))$, respectively. If $f(S_1\cap S^*)\geq c \cdot opt$, then the $\frac{1}{2}$-approximate algorithm of USM on the ground set $S_1$ in linear time \cite{Buchbinder2015A} implies the approximation ratio is at least $\frac{c}{2} opt$. 
Otherwise,
$$f(S_1)\geq  \frac{1}{2}f(S_1\cup S^*)\geq  \frac{1}{2}f(S_1\cup S^*)+ \frac{1}{2}f(S_1\cap S^*)-\frac{c}{2} opt.$$
Combing with $f(S_2)\geq \frac{1}{2}f(S_1\cup (S^*\setminus S_1))$, we have
\begin{align*}
    2\max\{f(S_1),f(S_2)\} & \geq f(S_1)+f(S_2) \\
    & \geq  \frac{1}{2}(f(S_1\cup S^*)+ f(S_1\cap S^*)\\
    & +f(S_1\cup (S^*\setminus S_1)))-\frac{c}{2} opt \\
    & \geq  \frac{1}{2}f(S^*)-\frac{c}{2}opt \\
    & = \frac{1-c}{2}opt,
\end{align*}
where the last inequality comes from Lemma \ref{lem-19-8-5-1}. Thus, 
$$\max\{f(S_1),f(S_2)\}\geq \frac{1}{4}(1-c)opt.$$ Combining with the case $f(S_1\cap S^{*})\geq \frac{c}{2} opt$, the approximation ratio becomes
$\min\{\frac{c}{2},\frac{1-c}{4}\}opt$. By setting $c=\frac{1}{3}$, we achieve a $\frac{1}{6}$--approximation solution, denoted by $\bar S$. 
However, $\bar S$ may be infeasible
since all of $S_1$, $S_2$, and $S_1\cap S^*$ may just violate the budget constraint by the same element $e_1$ added into both $S_1$ and $S_1\cap S^*$, and $e_2$ added into $S_2$ defined in Lemma \ref{lem-greedy}.

In the following, we construct a feasible solution with the guaranteed performance ratio.
We denote $e_{\max}=\argmax_{e}f(\{e\})$ and $\hat S=\argmax_{\bar S\setminus \{e\},\{e_{\max}\}}\{f(\bar S\setminus \{e\}),f(\{e_{\max}\})\}$ as the output of the algorithm, where $e$ is the corresponding element of $\bar S$ that violates the budget constraint.
Consequently,
\begin{align*}
  f(\hat S) & =\max\{f(\bar S\setminus \{e\}),f(\{e_{\max}\})\} \\ 
  & \geq \max\{f(\bar S\setminus \{e\}),f(\{e\})\} \\
  & \geq \frac{1}{2}(f(\bar S\setminus \{e\})+f(\{e\})) \\
  & \geq \frac{1}{2}(f(\bar S\setminus \{e\} \cup \{e\})+f(\emptyset))\\
  & =\frac{1}{2}f(\bar S)\geq \frac{1}{12}opt, 
\end{align*}
where the third inequality is based on the submodularity.
The theorem follows.
\end{proof}

Afterwards, we apply the proposed algorithm to the special case. 
Note that $f$ becomes the importance-aware influence function $\sigma$ with $\left|V\right|\left|I\right|T$ variables, for seeding some users and choosing some items to be promoted at some promotional timings. Let $k_{\max}$ denote the maximum size of a feasible solution. Since we study the strategy between the multiple promotions and the single promotion under the same budget, we assume $k_{\max}$ of the multiple promotions and the single promotion is the same, which is bounded by $\left|V\right|\left|I\right|$ without the budget constraint. Notice that $k_{\max}$ can be computed in linear time by iteratively adding the minimum cost seed until the budget constraint is violated. 
\begin{theorem}\label{the3}
For $P_\text{pref}$, $P_\text{act}$, and $P_\text{ext}$ assigned at the beginning of all promotions,
there is a $\frac{1}{12}-\epsilon$--approximation algorithm in
$O(M\left|V\right|\left|I\right|T\cdot k_{\max})$ time for \problem, 
where $M$ is the time to evaluate $\sigma$ depending on the evaluation error $\epsilon>0$, and $k_{\max}$ is the maximum size of a feasible solution.
\end{theorem}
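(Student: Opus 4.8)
The plan is to recognize the stated special case of \problem\ as an instance of non-monotone submodular maximization under a knapsack constraint (SMK) and then invoke the machinery already assembled. Concretely, I would take the ground set to be $U=\{(u,x,t):u\in V,\ x\in I,\ 1\le t\le T\}$, so that $n:=|U|=|V|\,|I|\,T$; the objective to be $f=\sigma$, the importance-aware influence spread with $P_\text{pref}$, $P_\text{act}$, and $P_\text{ext}$ frozen at the beginning of all promotions (exactly the hypothesis of the theorem); and the feasibility constraint to be the knapsack constraint $\sum_{(u,x,t)\in S}c_{u,x}\le b$. By Lemma~\ref{lem-submodular}, under this hypothesis $\sigma$ is non-monotone increasing and submodular on $U$, so the resulting problem is a genuine SMK instance, and the $\tfrac{1}{12}$-approximation algorithm of Theorem~\ref{the-SMK} (the greedy nominee-selection procedure used to build the two candidate sets $S_1,S_2$, a linear-time USM pass on $S_1$, repair of any budget violation, and a final comparison against the best single-element solution) applies verbatim.

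The approximation guarantee then follows by combining Theorem~\ref{the-SMK} with Theorem~\ref{the-G-SMK} instantiated at $\alpha=\tfrac{1}{12}$: with an exact oracle for $\sigma$ the algorithm outputs a $\tfrac{1}{12}$-approximation, and since $\sigma$ is only available through a Monte-Carlo estimator $\hat\sigma$ with additive error $\delta$, the guarantee degrades to $\tfrac{1}{12}-\epsilon$ once $\delta$ is chosen small enough in terms of $\epsilon$, $opt$, and $n$. I would make this precise by a Chernoff bound on the estimate of $\sigma$ together with a union bound over all oracle queries issued during the run, so that with high probability every marginal-gain comparison made by the greedy and by the USM subroutine is correct up to $\delta$; re-examining the inequality chains in the proofs of Lemma~\ref{lem-greedy} and Theorem~\ref{the-SMK} shows that a total additive slack of $O(n\,k_{\max}\,\delta)$ is introduced, which is $\le\epsilon\cdot opt$ for a suitable $\delta$.

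For the running time, the dominant cost is the greedy subroutine and its re-runs inside the SMK frame. Each greedy iteration adds one element and scans $O(n)$ remaining candidates, and the iteration halts as soon as the running cost just exceeds $b$ (or a negative marginal gain appears), hence after at most $k_{\max}+1$ iterations; thus each greedy run costs $O(n\,k_{\max})$ oracle calls, and the constant number of such runs, the infeasibility repair, the best-singleton comparison, and the linear-time USM pass contribute only $O(n)$ further calls. Multiplying by the per-call cost $M=M(\epsilon)$ of evaluating $\hat\sigma$ to the required accuracy (polynomial in $n$ and $1/\epsilon$) yields the claimed $O(M\,|V|\,|I|\,T\cdot k_{\max})$ bound; this also explains the improvement over the generic $O(n^2)$ count of Theorem~\ref{the-SMK}, since here the number of greedy steps is controlled by $k_{\max}$ rather than by $n$.

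The step I expect to be the main obstacle is the error-propagation bookkeeping: one must fix the per-evaluation accuracy $\delta$ before the run, bound the number of oracle queries a priori (which is where the $O(n\,k_{\max})$ count is needed), and then verify that every estimate-based decision in Lemma~\ref{lem-greedy} and Theorem~\ref{the-SMK}—in particular the two greedy selections and the USM step—remains valid with cumulative loss at most $\epsilon\cdot opt$. A secondary subtlety is justifying that $k_{\max}$ (computable in linear time by repeatedly adding the cheapest seed) is indeed a valid upper bound on the length of every greedy run encountered, including the re-runs on the reduced ground sets $S_1$ and $U\setminus S_1$ used in the SMK construction.
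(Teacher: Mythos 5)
Your proposal is correct and follows essentially the same route as the paper: cast the frozen-probability special case as SMK over the ground set of $\left|V\right|\left|I\right|T$ seed triples, invoke Lemma~\ref{lem-submodular} together with Theorems~\ref{the-G-SMK} and~\ref{the-SMK} for the $\frac{1}{12}-\epsilon$ ratio, and bound the greedy by $k_{\max}$ iterations of $O(n)$ candidate scans to get $O(M\left|V\right|\left|I\right|T\cdot k_{\max})$ time. Your explicit Chernoff/union-bound bookkeeping for the Monte-Carlo error is more detailed than the paper, which simply absorbs the sampling error into $\epsilon$, but this is an elaboration rather than a divergence.
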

\begin{proof}
By Theorem \ref{the-G-SMK} and \ref{the-SMK}, the approximation ratio follows.
We analyze the time complexity as follows. Let $M$ be the time to evaluate $\sigma$ by Monte Carlo sample depending on the evaluation error $\epsilon>0$. 
By lemma \ref{lem-greedy}, it needs at most $O(n^2)$ function calls in TMI. For \problem, $n=\left|V\right|\left|I\right|T$ and it needs at most $k_{\max}$ iterations to use all the budget, i.e., stop the algorithm. Thus, the number of function calls is $O(\left|V\right|\left|I\right|Tk_{\max})$ in TMI. Note that TMI brings the main function calls. Thus, the total number of function calls of the designed algorithm becomes $O(\left|V\right|\left|I\right|Tk_{\max})$, implying the time complexity is $O(M\left|V\right|\left|I\right|Tk_{\max})$.
\end{proof}

Next, let us analyze the performance ratio of \algo\ in the following more general case.
\begin{theorem}\label{the-sp3}
\algo\ is a 
$(1-\frac{1}{\sqrt{e}}-\epsilon)(\min \{ P^{c}_\text{minpref}\cdot P^{c}_\text{minact}, P^{c}_\text{minext}\})$ approximation algorithm for \problem\ in $O(M\left|V\right|\left|I\right|k_{\max})$ time, where 
$P_\text{minpref}>0$, $P_\text{minext}>0$ and $P_\text{minact}>0$ are the minimum preference, extra adoption probability and influence strength, 
respectively. 
$c$ is the maximum hop of influence propagation,
$M$ is the time to evaluate $\sigma$ depending on the evaluation error $\epsilon>0$, and $k_{\max}$ is the maximum size of a feasible solution.
\end{theorem}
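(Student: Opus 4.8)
The plan is to bound the dynamic influence $\sigma$ between a pair of surrogates, perform the greedy analysis on a \emph{monotone submodular} surrogate, and then pay a single multiplicative factor $\min\{P^{c}_\text{minpref}P^{c}_\text{minact},P^{c}_\text{minext}\}$ to return to $\sigma$. Concretely, I would first introduce an importance-weighted \emph{reachability} function $R(S)$: for a seed group $S$, $R(S)=\sum_{x}w_x\cdot|\{u:(u,x)\text{ is reachable from }S\text{ via a valid promoting path of length }\le c\}|$, where every preference, influence, and extra-adoption event is treated as firing deterministically and $c$ is the maximum hop of propagation. Re-using the realization-and-coverage argument already developed in the proof of Lemma~\ref{lem-submodular} (each promoting path is a connected subgraph rooted at a nominee, so $R$ is a union of coverage subgraphs), $R$ is monotone and submodular. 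I would then establish the pointwise sandwich
\begin{equation}\label{eq-sandwich}
\min\{P^{c}_\text{minpref}P^{c}_\text{minact},\,P^{c}_\text{minext}\}\cdot R(S)\;\le\;\sigma(S)\;\le\;R(S)
\end{equation}
for every feasible $S$: the upper bound holds because all probabilities are at most $1$; the lower bound holds because in the true diffusion the probability of reaching a pair through a length-$\le c$ path is a product of at most $c$ factors, each of which is either an influence--preference factor $P_\text{act}P_\text{pref}\ge P_\text{minpref}P_\text{minact}$ or an extra-adoption factor $P_\text{ext}\ge P_\text{minext}$ (the two cases being exactly the two terms of the minimum), and dynamic updates never drive any factor below its global minimum. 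Since $R$ ignores timing, $R(S^{\star})=R(N^{\star})$ where $N^{\star}$ flattens the optimal seed group $S^{\star}$ into a nominee set, so by \eqref{eq-sandwich} and the definition of an $R$-optimal nominee set $N^{\star}_R$ we get $\sigma(S^{\star})\le R(N^{\star})\le R(N^{\star}_R)$.

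Next I would argue that \algo's output $S_\algo$ satisfies $R(S_\algo)\ge(1-\tfrac{1}{\sqrt e}-\epsilon)\,R(N^{\star}_R)$. The backbone is TMI's cost-effective greedy: it repeatedly adds the nominee of maximum marginal cost--performance ratio that still fits the budget, and by Lemma~\ref{lem-greedy} (whose factor $\tfrac12$ reflects the loss of \algo's multi-phase selection relative to one optimal greedy step) the density inequality holds with an extra $\tfrac12$. Threading this $\tfrac12$ through the standard cost-effective greedy telescoping, $R(N^{\star}_R)-R(N_i)\le\bigl(R(N^{\star}_R)-R(N_{i-1})\bigr)\bigl(1-\tfrac{c_{e_i}}{2b}\bigr)$, and using that \algo\ consumes essentially the full budget ($\sum_i c_{e_i}$ close to $b$), the product $\prod_i\bigl(1-\tfrac{c_{e_i}}{2b}\bigr)\le e^{-\sum_i c_{e_i}/(2b)}\le e^{-1/2}$ yields the factor $1-\tfrac{1}{\sqrt e}$; the $\epsilon$ absorbs the Monte-Carlo error $M$ in evaluating the objective. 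I would then check that DRE (which greedily promotes the item of highest dynamic reachability $DR$) and TDSI (which greedily assigns each nominee the timing of highest substantial influence $SI$) only \emph{redistribute} the nominees selected by TMI across target markets, items, and promotions; because $DR$, $PI$, $RI$, $SI$ are precisely the reachability-flavoured proxies for the marginal increments of $R$, because the target-market clustering (common-user threshold $\theta$) and the restricted timing window $[\hat t,\hat t{+}1,\dots]$ of TDSI (justified earlier from the non-monotonicity of $\sigma$) never discard a reachable pair, this redistribution does not decrease $R$, and the greedy accounting carries over from nominee selection to the final seed group. Chaining the pieces,
\begin{align}\label{eq-chain}
\sigma(S_\algo)&\ge \min\{P^{c}_\text{minpref}P^{c}_\text{minact},P^{c}_\text{minext}\}\cdot R(S_\algo)\nonumber\\
&\ge \min\{P^{c}_\text{minpref}P^{c}_\text{minact},P^{c}_\text{minext}\}\,\bigl(1-\tfrac{1}{\sqrt e}-\epsilon\bigr)\,R(N^{\star}_R)\nonumber\\
&\ge \min\{P^{c}_\text{minpref}P^{c}_\text{minact},P^{c}_\text{minext}\}\,\bigl(1-\tfrac{1}{\sqrt e}-\epsilon\bigr)\,\sigma(S^{\star}),
\end{align}
which is the claimed ratio. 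For the running time I would count function evaluations: by Lemma~\ref{lem-greedy} TMI needs $O(n^2)$ calls with $n=|V||I|$ --- crucially no factor $T$, since nominees carry no timing and TDSI probes only an $O(1)$-width window --- and at most $k_{\max}$ greedy iterations exhaust the budget, giving $O(|V||I|k_{\max})$ calls and hence $O(M|V||I|k_{\max})$ time, with DRE and TDSI contributing only lower-order terms.

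The step I expect to be the main obstacle is showing that \algo's three heuristic phases act consistently as a single cost-effective greedy on the one monotone submodular surrogate $R$, tightly enough to inherit the $1-\tfrac{1}{\sqrt e}$ guarantee while paying the factor $\min\{\cdots\}$ in \eqref{eq-chain} only \emph{once} rather than squaring it. This requires (i) reconciling the objective TMI actually uses --- the static, stochastic influence $f$ with $P_\text{pref},P_\text{act},P_\text{ext}$ frozen at the start of the first promotion --- with $R$, i.e.\ that ranking nominees by the $f$-density is, up to the sandwich constants, ranking them by the $R$-density, so that the extra $\min\{\cdots\}$ does not enter the greedy telescoping; (ii) verifying that the clustering into overlapping target markets and the Antagonistic-Extent ordering do not create cross-market interference that breaks submodular accounting; and (iii) confirming that capping the timing search does not cost beyond the slack already absorbed by the $\tfrac12$ of Lemma~\ref{lem-greedy}. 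A secondary technical point is aligning the ``$\le c$ hops'' truncation in the definition of $R$ exactly with the propagation horizons $d^{\tau_k}$ used inside $DR$, $PI$, and $RI$, so that \eqref{eq-sandwich} stays tight and no additional factor creeps in.
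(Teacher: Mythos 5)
Your high-level strategy --- sandwich $\sigma$ between multiples of a monotone submodular surrogate, run the budgeted-greedy analysis on the surrogate, and pay the factor $\gamma=\min\{P^{c}_\text{minpref}P^{c}_\text{minact},P^{c}_\text{minext}\}$ exactly once --- is the same as the paper's, and your identification of where $\gamma$ (a product of at most $c$ per-hop factors, each either $P_\text{act}P_\text{pref}$ or $P_\text{ext}$) and $1-\frac{1}{\sqrt e}$ come from is right. The genuine gap is the one you yourself flag as the ``main obstacle'': the middle inequality $R(S_{\mathrm{ALG}})\ge(1-\frac{1}{\sqrt e}-\epsilon)R(N^{\star}_R)$ is not justified. \algo's nominee selection is a cost-effective greedy with respect to $f$ (influence with probabilities frozen at the start of the first promotion), not with respect to your deterministic reachability $R$. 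A multiplicative sandwich $\gamma R\le f\le R$ between two submodular functions does not transfer a greedy guarantee from $f$ to $R$: at every step the nominee maximizing the $f$-density may be a factor $\gamma$ worse in $R$-density, so the best you can actually conclude along your route is $R(S_{\mathrm{ALG}})\ge f(S_{\mathrm{ALG}})\ge(1-\frac1{\sqrt e}-\epsilon)f(N^{\star}_R)\ge(1-\frac1{\sqrt e}-\epsilon)\gamma R(N^{\star}_R)$, which, combined with your first sandwich step $\sigma(S_{\mathrm{ALG}})\ge\gamma R(S_{\mathrm{ALG}})$, squares $\gamma$ in the final bound. By paying $\gamma$ on the \emph{algorithm} side you force yourself to need a greedy guarantee on $R$ that the algorithm does not provide.

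The paper's proof pays $\gamma$ on the \emph{optimum} side instead, which is the fix. It introduces a restricted problem (all probabilities frozen at their minimum values, $c$-hop propagation), whose influence function is monotone and submodular, and chains: (dynamic value of the output) $\ge$ (restricted value of the output) $\ge(1-\frac1{\sqrt e}-\epsilon)\cdot$(restricted optimum) $\ge(1-\frac1{\sqrt e}-\epsilon)\cdot$(restricted value of $S^{*}$ flattened into the first promotion) $\ge(1-\frac1{\sqrt e}-\epsilon)\,\gamma\cdot$(value of the flattened $S^{*}$ with all probabilities set to $1$) $\ge(1-\frac1{\sqrt e}-\epsilon)\,\gamma\cdot\sigma(S^{*})$. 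Two auxiliary facts make this work: when probabilities are static there is an optimal restricted solution with all seeds in the first promotion (so flattening loses nothing and non-monotonicity over timings is sidestepped), and the per-hop product bound is invoked only to compare the restricted optimum with the all-ones relaxation of the true optimum, never to convert the algorithm's own value. Your remaining ingredients --- the $c$-hop truncation, the timing-flattening, the $1-\frac1{\sqrt e}$ budgeted-greedy-plus-best-singleton bound, and the $O(M|V||I|k_{\max})$ accounting of function calls --- all survive this reordering; the claims that DRE/TDSI ``only redistribute nominees without decreasing $R$'' become unnecessary, since one simply compares against the value of placing all selected nominees in the first promotion, a candidate the algorithm already considers.
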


\begin{proof}

Since the importance-aware influence function of the general problem is neither monotone increasing nor submodular, we first find the relation between the general problem and a restricted problem (defined later) whose importance-aware influence function is monotone increasing and submodular. Consequently, a solution found by the algorithm in the restricted problem can be regarded as an approximation solution to the general problem. 
We first prove that the output value of \algo\ is at least $\beta-\epsilon$ (defined later) times of the value of the restricted problem . 
Afterwards, according to the evaluation of the range of parameters changing, we derive the importance-aware influence function gap between the restricted problem and original problem is at most $\gamma$ (defined later), implying the total gap is $(\beta-\epsilon)\gamma$.

Specifically, let $\gamma=\min \{ P^{c}_\text{minpref}\cdot P^{c}_\text{minact}, P^{c}_\text{minext}\}$.
Let $\hat {S}_t$ and ${S^*}_t$ denote the seeds in the $t$-th promotion of the optimal solution of the restricted problem (i.e., $P_\text{pref}\equiv P_\text{minpref}$, $P_\text{ext}\equiv P_\text{minext}$, $P_\text{act}\equiv P_\text{minact}$, and $c$-hop influence propagation) and the optimal solution of the general problem, respectively. 
First, we construct an auxiliary solution $\bar N$ such that $\sigma(\bar N)=\max\{\sigma(N_\text{first}), \sigma(\{e_{\max}\})\}$, where $N_\text{first}$ consists of all nominees in $N$ promoting items in the first promotion, i.e., $N_\text{first}=\{(u,x,1) \mid (u,x) \in N\}$, and $e_{\max}$ is the seed bringing the maximum expected importance-aware influence. Note that $\sigma(\bigcup_\mathcal{G}S^\mathcal{G})\geq \sigma(\bar N)$ since the solution $\bar N$ has been considered by \algo.
Hence, our goal is to prove
$\sigma(\bar N)\geq (\beta-\epsilon)\gamma \sigma(S^*)$ sufficiently by two stages, where $\beta$ will be decided later.
Before presenting the formal proof, we first consider the case that each seed user can adopt each item once in all the promotions.
Notice that if $P_\text{pref}$, $P_\text{ext}$ and $P_\text{act}$ do not change, it is not necessary to assign the seeds in different promotions; it may decrease the probability of adoption for the multiple promotions since the promoted users in early promotions cannot propagate the influence in later promotions, where these users may become the bridges to promote the items with larger importance to others. 
Thus, if $P_\text{pref}$, $P_\text{ext}$ and $P_\text{act}$ do not change in the restricted problem, there exists an optimal solution for the restricted problem such that only the first promotion has seeds (i.e., $\hat {S}_t =\emptyset$ for $t>1$).

For the first stage, when $P_\text{pref}$, $P_\text{act}$ and $P_\text{ext}$ are static, by the above observation, we can focus on the single promotion. Moreover, by the proof of Lemma \ref{lem-submodular}, the importance-aware influence function is monotone increasing submodular. Consequently, if there exists a $\beta$--approximation algorithm for \textit{monotone increasing submodular maximization with knapsack constraint} problem (MSMK), then
$\sigma(\bar N)\geq (\beta-\epsilon) \sigma(\hat {S}_1)$ if $P_\text{pref}\equiv P_\text{minpref}$, $P_\text{ext}\equiv P_\text{minext}$ and $P_\text{act}\equiv P_\text{minact}$ do not change for the single promotion, where $\epsilon$ is the evaluation error of Monte Carlo sampling. Furthermore, when $P_\text{pref}$, $P_\text{ext}$ and $P_\text{act}$ can change, $\sigma(\bar N)$ is no smaller than $\sigma(\bar N)$ with $P_\text{pref}\equiv P_\text{minpref}$, $P_\text{ext}\equiv P_\text{minext}$ and $P_\text{act}\equiv P_\text{minact}$ unchanged, since higher $P_\text{pref}$, $P_\text{ext}$ and $P_\text{act}$ lead to more influence for a single promotion. Therefore, if $P_\text{pref}$, $P_\text{ext}$ and $P_\text{act}$ can change, $\sigma(\bar N)$ is at least $(\beta-\epsilon) \sigma(\hat {S}_1)$ with $P_\text{pref}\equiv P_\text{minpref}$, $P_\text{ext}\equiv P_\text{minext}$ and $P_\text{act}\equiv P_\text{minact}$ unchanged.
For the second stage, it is sufficient to prove when $P_\text{pref}\equiv P_\text{minpref}$, $P_\text{ext}\equiv P_\text{minext}$ and $P_\text{act}\equiv P_\text{minact}$ do not change, $\sigma(\hat {S}_1)$ is at least $\gamma \sigma(S^*)$ with varying $P_\text{pref}$, $P_\text{ext}$ and $P_\text{act}$. 
Note that, if $P_\text{pref}\equiv P_\text{minpref}$, $P_\text{ext}\equiv P_\text{minext}$ and $P_\text{act}\equiv P_\text{minact}$ do not change,
$\sigma(\hat {S}_1)\geq  \sigma(S^*_\text{first})$, where $S^*_\text{first}$ consists of all nominees of $S^*$ and assigns them in the first promotion, i.e., $ S^*_\text{first} = \{(u,x,1) \mid (u,x,t) \in S^*\}$, and it is a feasible solution of the restricted problem.
Moreover,
$\sigma(S^*_\text{first})$ with $P_\text{pref}\equiv 1$, $P_\text{ext}\equiv 1$ and $P_\text{act}\equiv 1$ is no smaller than $\sigma({S^*})$ with varying $P_\text{pref}$, $P_\text{ext}$ and $P_\text{act}$, since $P_\text{pref}$, $P_\text{ext}$ and $P_\text{act}$ can be increased to at most $1$ for the multiple promotions.
Therefore, we only need to prove that, $\sigma(S^*_\text{first})$ is no smaller than $\gamma \sigma(S^*_\text{first})$ with $P_\text{pref}\equiv 1$, $P_\text{ext}\equiv 1$, and $P_\text{act}\equiv 1$, if $P_\text{pref}\equiv P_\text{minpref}$, $P_\text{ext}\equiv P_\text{minext}$ and $P_\text{act}\equiv P_\text{minact}$ do not change.
To prove it, for each promoted user-item pair, it can be promoted by the promoting paths from different influence propagation hop $j$ ($j\in \{ 1,2,\ldots,c\}$). 
Then, we denote $X_j$ as the probability that all the promoting paths succeed to promote the user-item pair in the $j$-th hop with varying $P_\text{pref}$, $P_\text{ext}$ and $P_\text{act}$.
For each influence propagation hop $j$, when $P_\text{pref}\equiv P_\text{minpref}$, $P_\text{ext}\equiv P_\text{minext}$ and $P_\text{act}\equiv P_\text{minact}$ do not change, the total adoption probability of paths in hop $j$ is at least $\min \{P_\text{minpref}^{j}\cdot P^{j}_\text{minact}, P_\text{minext}^{j}\}X_j$.
Thus, the total adoption probability of paths with $P_\text{pref}\equiv P_\text{minpref}$, $P_\text{ext}\equiv P_\text{minext}$ and $P_\text{act}\equiv P_\text{minact}$ unchanged is at least $\min \{ P_\text{minpreft}^{c}\cdot P^{c}_\text{minact}, P_\text{minext}^{c}\}$ of that with $P_\text{pref}\equiv 1$, $P_\text{ext}\equiv 1$ and $P_\text{act}\equiv 1$. Thus, $\sigma(S^*_\text{first})$ is no smaller than $\gamma \sigma(S^*_\text{first})$ with $P_\text{pref}\equiv 1$, $P_\text{ext}\equiv 1$ and $P_\text{act}\equiv 1$, if $P_\text{pref}\equiv P_\text{minpref}$, $P_\text{ext}\equiv P_\text{minext}$ and $P_\text{act}\equiv P_\text{minact}$ do not change.
As mentioned above, the approximation ratio of the case depends on the approximation $\beta$ of MSMK. In particular, \algo\ can guarantee the performance ratio for MSMK by Nominee Selection and checking all possible single candidate seed in a subroutine of \algo, i.e., the algorithm in \cite{krause2008note} which has $\frac{1-e^{-1}}{2}$ approximation \cite{nguyen2013budgeted} for MSMK. In fact, this approximation can be further improved to $1-\frac{1}{\sqrt{e}}$ by an analysis similar to that in \cite{khuller1999budgeted} for MSMK. Furthermore, since the improvement of other subroutines in \algo\ does not worsen the result, the approximation holds, implying $\beta=1-\frac{1}{\sqrt{e}}$. 

Last, we analyze the time complexity.
Recall that $M$ is the time to evaluate $\sigma$ by Monte Carlo sample depending on the evaluation error $\epsilon>0$. It is sufficient to show the total number of $\sigma$ function calls. Note that Nominee Selection and TDSI take the main function calls. 
In Nominee Selection, it involves at most $\left|V\right|\left|I\right|k_{\max}$ function calls. 
In TDSI, the primary cost is finding the promotional timing for each nominee. If there are at most $\left|\bar I\right|$ items chosen in Nominee Selection, $\left|\bar I\right|$ items are distributed to at most $k_{\max}$ nominees chosen in Nominee Selection. We assume the $i$-th item is associated with $k_i$ users, where $\sum_{i}k_i\leq k_{\max}$. For the $i$-th item, it takes at most $k^2_i$ function calls since TDSI searches at most two promotional timings for each nominee associated with the item. Therefore, TDSI involves at most $(k^2_1+k^2_2+,\ldots,k^2_{\left|\bar I\right|})\leq (\sum_{i}k_i)^2\leq k^2_{\max}\leq \left|V\right|\left|I\right|k_{\max}$.
Thus, the total number function calls of \algo\ is $O(\left|V\right|\left|I\right|k_{\max})$, implying the time complexity is $O(M\left|V\right|\left|I\right|k_{\max})$.
The theorem follows.
\end{proof}

\section{Detailed Derivation for the Dynamics}
\label{sec:extend}

\subsection{Update Mechanism of Main Factors in \problem}
\label{sec:extend_problem}

The details of deriving and updating each factor are elaborated as follows.

(1) Relevance measurement. KG and meta-graphs with personal weightings are exploited to find the personal relevance between two items \cite{shi2019semrec,gu2019relevance,Huang2016Meta,zhang2016collaborative}. KG \cite{gu2019relevance,zhang2016collaborative} is a heterogeneous information network $G_\text{KG}=(\mathcal{V},\mathcal{E},\Phi,\Psi)$ with node and edge sets ($\mathcal{V}$ and $\mathcal{E}$) and two type-mapping functions ($\Phi$ and $\Psi$), e.g., $\Phi(\textsf{iPhone})=\textsc{item}$ and $\Psi((\textsf{iPhone},\textsf{Bluetooth}))=\textsc{support}$. A meta-graph is a schema represented as a graph of node types with certain connections (e.g., Fig.~\ref{Fig:new_ex_meta}) specified by users \cite{Huang2016Meta} or learned from historical data of users \cite{gu2019relevance}. For a meta-graph $m$, its instances are the subgraphs of KG that exactly match the schema of $m$. The relevance between item $x$ and item $y$ as defined by meta-graph $m$, i.e., $s(x,y \mid m) \in [0,1]$, is correlated to the number of $m$'s instances with end items $x$ and $y$ (e.g., SCSE \cite{Huang2016Meta}). For example, in Fig.~\ref{Fig:new_ex_meta}, $s(x,y \mid m_1)$ is correlated to the number of common \textsc{feature}s supported by $x$ and $y$ in KG.

As each relationship can be captured by various meta-graphs with different connections between \textsc{item}s, we realize the personal item networks using the meta-graphs with personal weightings \cite{shi2019semrec}. Formally, let $W_\text{meta}(u,m,\zeta_t)$ denote a user $u$'s weighting on a meta-graph $m$ at the end of step $\zeta_t$, i.e., the significance of $m$ to $u$ for describing an item relationship. According to \cite{mcauley2015inferring,zhao2017improving,shi2019semrec,gu2019relevance}, adopted items usually change users' perceptions of item relationships. Therefore, we update $u$'s weightings on meta-graphs (e.g., by SemRec and RelSUE \cite{shi2019semrec,gu2019relevance}) after all adoption decisions at step $\zeta_t$ are made. Then, the complementary and substitutable relevance between $x$ and $y$ in $u$'s perception, denoted as $r^{\mathtt{C}}(u,x,y,\zeta_t)$ and $r^{\mathtt{S}}(u,x,y,\zeta_t)$ respectively, are immediately derived (e.g., by SemRec \cite{shi2019semrec}) according to $s(x,y \mid m^{\mathtt{C}})$ and $s(x,y \mid m^{\mathtt{S}})$ of every meta-graph $m^{\mathtt{C}}$ and $m^{\mathtt{S}}$, respectively.\footnote{In this paper, we aim to study IM with dynamic personal perceptions of item relationships modeled by KG and meta-graphs. The computation of relevance is not our focus. Interested readers are referred to previous studies, e.g., \cite{shi2019semrec,gu2019relevance}.}  Consequently, $u$'s personal item network is deduced as $G_\text{PIN}(u,\zeta_t) = (V^\textsc{item}, E^{\mathtt{C}}, E^{\mathtt{S}})$, where $V^\textsc{item}$ is the item set, and an edge $(x,y) \in E^{\mathtt{C}}$ (or $E^{\mathtt{S}}$) exists and carries the complementary (or substitutable) relevance between $x$ and $y$ if $r^{\mathtt{C}}(u,x,y,\zeta_t) >0$ (or $r^{\mathtt{S}}(u,x,y,\zeta_t) >0$). When $u$ and $\zeta_t$ are clear from context, we write $r^{\mathtt{C}}_{x,y}$ and $r^{\mathtt{S}}_{x,y}$ for short.

(2) Preference estimation. Following \cite{zhao2017improving,banerjee2019maximizing,xin2019relational}, $u$'s preference for $x$ depends on her adopted items and personal item network. Thus, after all adoption decisions are made and the personal item network is changed, the preferences for not-yet-adopted items are updated to reflect users' interests in those items. Previous works derive users' preferences for not-yet-adopted items by learning the embedding of users and items (e.g., RSC and RCF \cite{zhao2017improving,xin2019relational}) or by statistics inference  \cite{banerjee2019maximizing}. Thus, let $A(u,\zeta_t)$ be the set of items that $u$ has adopted so far after all adoption decisions are made at step $\zeta_t$. Equipped with $A(u,\zeta_t)$ and $G_\text{PIN}(u,\zeta_t)$, we follow the above research to derive $u$'s rating on each not-yet-adopted item $y$ as $u$'s updated preference for $y$ at $\zeta_t$, i.e., $P_\text{pref}(u,y,\zeta_t)$.

(3) Influence learning. The influence strength is correlated to the similarity of users, i.e., their personal item networks and their adopted items, since similar users are inclined to become closer and easier to influence each other. Therefore, after users' personal item networks and adopted items update, the influence strength changes according to statistic models \cite{zhang2019learning} and user embedding (e.g., DeepInf and DANSER \cite{qiu2018deepinf,wu2019dual}). Following the above works, after the sets of adopted items $A(u,\zeta_t)$ and $A(v,\zeta_t)$ are obtained and personal item networks $G_\text{PIN}(u,\zeta_t)$ and $G_\text{PIN}(v,\zeta_t)$ are updated, we update the influence strength from $u$ to $v$ at step $\zeta_t$ as $P_\text{act}(u,v,\zeta_t)$.

(4) Item associations. 
When users are promoted an item $x$ by social influence, an extra adoption of a relevant item $y$ is triggered due to item associations according to the probability of $u$ being promoted and preferring $x$ as well as the relationships and relevance between $x$ and $y$, by learning and comparing the embedding of $u$, $x$, and $y$ (e.g., CKE, RSC, and RCF \cite{zhao2017improving,zhang2016collaborative,xin2019relational}). As a result, when $u$ is promoted an item $x$ by $u'$, at step $\zeta_t$, $u$'s extra adoption probability for $x$'s relevant item $y$ is $P_\text{ext}(u,u',x,y,\zeta_t)$, derived according to $P_\text{act}(u',u,\zeta_t -1)$, $P_\text{pref}(u,x,\zeta_t -1)$, and $u$'s personal item network $G_\text{PIN}(u,\zeta_t -1)$.

\subsection{Derivation of DR in DRE of \algo}
\label{sec:extend_dr}

As formulated in Eq.~\eqref{eq:dr}, for an item $x$ to be promoted in $\tau_k \in S^{\mathcal{G}}$, its DR consists of $x$'s proactive impact (PI) and reactive impact (RI). The former is the probability of increasing users' preferences for other items by $x$, while the latter is the probability of increasing users' preferences for $x$ by other items. 
The adoption of $x$ increases (decreases) the preferences for the items complementary (substitutable) to $x$ \cite{frank1991microeconomics}. Given $S^{\mathcal{G}}$, the likelihood of regarding $x$ and $y$ as complementary (substitutable) for each user is proportional to the complementary (substitutable) relevance between $x$ and $y$, i.e., $\mathcal{L}^{\mathtt{C},\tau_k}(x,y,S^\mathcal{G}) = \frac{\bar{r}^{\mathtt{C}}_{x,y}}{\bar{r}^{\mathtt{C}}_{x,y}+\bar{r}^{\mathtt{S}}_{x,y}}$ and $\mathcal{L}^{\mathtt{S},\tau_k}(x,y,S^\mathcal{G}) = \frac{\bar{r}^{\mathtt{S}}_{x,y}}{\bar{r}^{\mathtt{C}}_{x,y}+\bar{r}^{\mathtt{S}}_{x,y}}$, where $\bar{r}^{\mathtt{C}}_{x,y}$ and $\bar{r}^{\mathtt{S}}_{x,y}$ are the average complementary and substitutable relevance between $x$ and $y$ over all users in $\tau_k$ after the promotion of $S^{\mathcal{G}}$, respectively.\footnote{The update of relevance is described in Sec.~\ref{sec:extend_problem}.} Therefore, PI is recursively formulated as follows. 
\begin{align}
\label{eq:fi}
&PI^{\mathcal{W},\tau_k}(S^\mathcal{G},x,d) \\ \nonumber
&= \sum_{y} \Big(
\mathcal{L}^{\mathtt{C},\tau_k}(x,y,S^\mathcal{G})  \bar{r}^{\mathtt{C}}_{x,y}   w_y
- \mathcal{L}^{\mathtt{S},\tau_k}(x,y,S^\mathcal{G})  \bar{r}^{\mathtt{S}}_{x,y}  w_y\\ \nonumber
& \hspace{3cm} 
+ PI^{\mathcal{W},\tau_k}(S^\mathcal{G},y,d-1)\Big),
\end{align}
where $y$ represents each item relevant to $x$, and $\mathcal{W}$ is the set of item importance. The first two terms are the likelihood to increase and decrease the preferences of the users in $\tau_k$ \textit{for $y$} (weighted by the corresponding relevance between $x$ and $y$ and \textit{$y$'s item importance}). The last term $PI^{\mathcal{W},\tau_k}(S^\mathcal{G},y,d-1)$ recursively captures the likelihood to increase or decrease the preferences (of users in $\tau_k$) \textit{for other items} via item impact propagation from $y$, where $PI^{\mathcal{W},\tau_k}(S^\mathcal{G},y,0)=0$.\footnote{Here it is $d-1$ because item impact has propagated 1-hop from $x$ to $y$. }

Similarly, RI evaluates the item impact propagation from any other item $z$ to $x$ according to $\bar{r}^{\mathtt{C}}_{z,x}$ and $\bar{r}^{\mathtt{S}}_{z,x}$ as follows. 
\begin{align}
\label{eq:bi}
& RI^{w_x,\tau_k}(S^{\mathcal{G}},x,d) \\ \nonumber
&= \sum_{z}\Big(
\mathcal{L}^{\mathtt{C},\tau_k}(z,x,S^\mathcal{G})  \bar{r}^{\mathtt{C}}_{z,x}   w_x
- \mathcal{L}^{\mathtt{S},\tau_k}(z,x,S^\mathcal{G})  \bar{r}^{\mathtt{S}}_{z,x}  w_x
\\ \nonumber
& \hspace{3cm} 
+ RI^{w_x,\tau_k}(S^{\mathcal{G}},z,d-1)\Big), \\ \nonumber
\end{align}
where $z$ is each item relevant to $x$ and $RI^{w_x,\tau_k}(S^\mathcal{G},y,0)=0$. Note that RI is derived from $w_x$ (rather than $\mathcal{W}$ in PI), since RI only focuses on the preferences for $x$.

\begin{example}
\label{ex:detailDRE}
For Example~\ref{Ex:DRE}, we show the detailed calculation of the DR of iPhone as follows.
\begin{align*}
&\;DR^{\mathcal{W},\tau_3}(S^\mathcal{G}, \text{iPhone}) \\
= & \; \big(\frac{0.2 \cdot 0.2 \cdot 1}{0.4+0.2} - \frac{0.4 \cdot 0.4 \cdot 1}{0.4+0.2} + PI^{\mathcal{W},\tau_3}(S^\mathcal{G},\text{iPad},2)\big) \\
& \;\;\;\;\; + \big(1 \cdot 0.4 \cdot 0.5 + PI^{\mathcal{W},\tau_3}(S^\mathcal{G},\text{AirPods},2)\big) \\
& \;\;\;\;\; +\big(\frac{0.2 \cdot 0.2 \cdot 1}{0.4+0.2}-\frac{0.4 \cdot 0.4 \cdot 1}{0.4+0.2}+RI^{w_\text{iPhone},\tau_3}(S^\mathcal{G},\text{iPad},2)\big) \\
& \;\;\;\;\; + \big(1 \cdot 0.4 \cdot 1+RI^{w_\text{iPhone},\tau_3}(S^\mathcal{G}, \text{AirPods},2)\big) \\
= & \; 0.7+1=1.7.
\hspace{6cm}\blacksquare
\end{align*}
\end{example}

\subsection{Derivation of SI in TDSI of \algo}
\label{sec:extend_si}

As formulated in Eq.~\eqref{eq:si}, SI of a candidate seed $(u,x_\text{p},t)$ is measured by the marginal adoption (MA) and the marginal likelihood (ML). The marginal adoption of $(u,x_\text{p},t)$ after the promotion of $S^\mathcal{G}$ is
\begin{align}
MA^{\tau_k}(S^\mathcal{G}, (u,x_\text{p},t)) = \sigma^{\tau_k}(S^\mathcal{G} \cup \{(u,x_\text{p},t)\}) - \sigma^{\tau_k}(S^\mathcal{G}),
\end{align}
where $\sigma^{\tau_k}(S^\mathcal{G})$ is the importance-aware influence in $\tau_k$ under $S^\mathcal{G}$. The marginal likelihood of $(u,x_\text{p},t)$ after the promotion of $S^\mathcal{G}$ is
\begin{align}
ML^{\tau_k}(S^\mathcal{G}, (u,x_\text{p},t)) = \pi^{\tau_k}(S^\mathcal{G} \cup \{(u,x_\text{p},t)\}) - \pi^{\tau_k}(S^\mathcal{G}),
\end{align}
where $\pi^{\tau_k}(S^\mathcal{G})$ is the likelihood of all users in $\tau_k$ to adopt the rest of not-yet-adopted items in the future under $S^\mathcal{G}$. As the probability for a user to adopt an item depends on the social influence strength she receives and her preference for the item, $\pi^{\tau_k}(S^\mathcal{G})$ is derived from the sum of probabilities of all $u \in \tau_k$ to adopt their not-yet-adopted items $y$.
\begin{align}
\pi^{\tau_k}(S^\mathcal{G}) = \sum_{v \in \tau_k} \sum_{y \notin A(v,\zeta_{\hat{t}}^\text{last})} AIS(v,y,\zeta_{\hat{t}}^\text{last}) \cdot P_\text{pref}(v,y,\zeta_{\hat{t}}^\text{last}),
\end{align}
where $\zeta_{\hat{t}}^\text{last}$ is the last step of the latest promotion $\hat{t}$ in $S^\mathcal{G}$ (i.e., $\hat{t} = \max\{t \mid (u,x,t) \in S^\mathcal{G}\}$).  $A(v,\zeta_{\hat{t}}^\text{last})$ is the set of items adopted by $v$ so far after the promotions of $S^\mathcal{G}$. $AIS(v,y,\zeta_{\hat{t}}^\text{last})$ is the aggregated influence probability for $y$ to be promoted to $v$ in the next promotion after the promotions of $S^\mathcal{G}$.\footnote{For example, under  the IC, $AIS(v,y,\zeta_{\hat{t}}^\text{last}) = 1-\prod_{v' \in N^\text{in}(v) \wedge y \notin A(v',\zeta_{\hat{t}}^\text{last})} \big(1 - P_\text{act}(v',v,\zeta_{\hat{t}}^\text{last})\big)$, while under the LT, $AIS(v,y,\zeta_{\hat{t}}^\text{last}) = \sum_{v' \in N^\text{in}(v) \wedge y \in A(v',\zeta_{\hat{t}}^\text{last})} P_\text{act}(v',v,\zeta_{\hat{t}}^\text{last})$, where $N^\text{in}(v)$ is the set of $v$'s in-neighbors.} $P_\text{pref}(v,y,\zeta_{\hat{t}}^\text{last})$ is $v$'s preference for $y$ after the promotions of $S^\mathcal{G}$.\footnote{The updates are described in Sec.~\ref{sec:extend_problem}.}

\opt{full}{
\subsection{\algo\ for Adaptive IM}
\label{sec:adaptive}
For the adaptive IM without a predefined budget allocation, \algo\ carefully determines the current promotion budget by avoiding the antagonism of the substitutable relationship. Specifically, to find $S_t$ after the propagation of the $(t-1)$-th promotion is observed, for $t < T$, TMI is modified by selecting only one nominee with the largest MCP at a time. Accordingly, \algo\ exploits TMI multiple times to gradually find nominees and target markets until the identified overlapping target markets promote substitutable items. Then, \algo\ rejects the latest identified nominee that causes the antagonism of the substitutable relationship. Next, \algo\ exploits DRE and TDSI alternatively to determine whether the identified nominees for each $\mathcal{G}$ are suitable for the $t$-th promotion. Note that the search for possible promotional timings in TDSI is limited to $t$ and $t+1$. Once the candidate seed with the largest SI is assigned to $t+1$, \algo\ ends the searching of $S^\mathcal{G}_t$ since the remaining nominees in $\mathcal{S}$ are suitable for later promotions as well. After all $\mathcal{G}$ are examined, $S_t = \bigcup_{\mathcal{G}} S^\mathcal{G}_t$ is found as the seeds for the $t$-th promotion. 
On the other hand, for $t = T$, \algo\ exploits TMI to select the best nominees under the remaining budget and assigns them to $T$ as $S_T$.
}

\section{Experiments}

\subsection{Experiment Setup}
\label{sec:exp_setup}

The experiment includes four datasets, where each one consists of a KG and a social network:\footnote{The KGs are heterogeneous information networks (HINs) in the datasets, where the HINs contain diverse node types like items, categories, brands, etc. 
\opt{full}{Please refer to \cite{hung2016social,liu2014exploiting,Zhao2017Metagraph} for more details.}}
i)~\textit{Douban} \cite{hung2016social}, ii)~\textit{Gowalla}\opt{short}{,\footnote{\url{https://www.yongliu.org/datasets}.}}\opt{full}{ \cite{liu2014exploiting},} iii)~\textit{Yelp}\opt{short}{,\footnote{\url{https://www.yelp.com/dataset}.}}\opt{full}{ \cite{Zhao2017Metagraph},} and iv)~\textit{Amazon}\opt{short}{.\footnote{\url{https://jmcauley.ucsd.edu/data/amazon}.}}\opt{full}{ \cite{Zhao2017Metagraph}.}
Since there are no social relationships in \textit{Amazon}, we supplement it with Pokec\footnote{\url{https://snap.stanford.edu/data/soc-Pokec.html}.} 
according to the user profiles\opt{full}{ \cite{wang2015prediction}}. To capture the complementary and substitutable relationships between items, the meta-graphs are generated according to \cite{mcauley2015inferring}, and the relevance of a certain relationship regarding a meta-graph is derived according to \cite{Huang2016Meta}. For the diffusion models, \opt{short}{the two factors, relevance measurement (including the learning of personal weightings on meta-graphs and the constructions of personal item networks) and preference estimation are learned and updated based on \cite{shi2019semrec} and \cite{zhao2017improving}, respectively.}\opt{full}{the four factors, relevance measurement (including the learning of personal weightings on meta-graphs and the constructions of personal item networks), preference estimation, influence learning and item associations are learned and updated based on \cite{shi2019semrec}, \cite{zhao2017improving}, \cite{zhang2019learning}, and \cite{zhao2017improving}, respectively. To set up the \opt{short}{\sproblem}\opt{full}{\problem} problem, the item importance of \textit{Douban}, \textit{Yelp}, and \textit{Amazon} is distributed following the prices on their websites, while that of \textit{Gowalla} is randomly assigned (since its website is no longer online).} \revise{The statistics of the datasets are listed in Table~\ref{T:dataset}.} Following \opt{short}{\cite{Nguyen2016Cost}}\opt{full}{\cite{Nguyen2016Cost,aslay2017revenue}}, the costs of hiring users to promote items are set proportional to users' out-degree and their preferences for items, since users who are more influential and who prefer the item less may need more incentive to be seeds. In the implementation of \algo, we exploit the submodularity \opt{full}{(similar to CELF++ \cite{goyal2011celf++}) }to speed up the nominee selection, and follow \cite{chen2017people} and \cite{chen2010scalable} to cluster nominees and explore influenced users, respectively, in TMI.

\begin{table}[t]
    \caption{\revise{The statistics of datasets.}}
    \label{T:dataset}
    \centering
    \begin{tabular}{|c||c|c|c|c|}
        \hline
        Dataset & \textit{Douban} & \textit{Gowalla} & \textit{Yelp} & \textit{Amazon} \\ \hline
        \# of node types & 3 & 3 & 6 & 6 \\
        \# of nodes & 7.6M & 3.2M & 251K & 260K \\
        \# of users & 5.5M & 407K & 17K & 1.6M \\
        \# of items & 2.1M & 2.8M & 22K & 20K \\
        \# of edge types & 3 & 3 & 6 & 6 \\
        \# of edges & 100M & 42M & 1.6M & 1.4M \\
        \# of friendships & 86M & 4.4M & 140K & 30.6M \\
        Directed friendship? & No & No & No & Yes \\
        Avg. initial influence strength & 0.011 & 0.092 & 0.121 & 0.050 \\
        \opt{full}{Avg. item importance & 2.1 & 0.5 & 1.6 & 1.8 \\ }
        \hline
    \end{tabular}
\end{table}

\revise{We compare \opt{short}{\salgo}\opt{full}{\algo} with OPT (derived from a brute-force approach) and four state-of-the-art approaches: BGRD \cite{banerjee2019maximizing}, HAG \cite{hung2016social}, PS \cite{teng2018revenue}, and DRHGA \cite{huang2020competitive} as the baselines.\footnote{Codes and datasets are available on \url{https://tinyurl.com/y26fx2mp}.} We extend \cite{huang2020competitive,teng2018revenue,hung2016social,banerjee2019maximizing} to consider different costs of selecting a user to promote an item by selecting from the user-item pairs or the users that satisfy the remaining budget.
\label{para:baselines}Furthermore, since they cannot be directly applied to our problem, we augment \cite{huang2020competitive,teng2018revenue,hung2016social,banerjee2019maximizing} with CR-Greedy \cite{sun2018multi} to support multiple promotions and determine the promotion timings of the user-item pairs as the seeds in each baseline.} The performance metrics include the 1) influence spread $\sigma$ (Def.~\ref{def:inf})\revise{\footnote{\revise{As \problem\ is an optimization problem (instead of a learning problem) to maximize the influence spread, we follow previous Influence Maximization (IM) research \cite{Kempe2003,Nguyen2016Cost,chen2010scalable,guo2020influence} to compare different algorithms in the experiment, by deriving the influence spread according to the diffusion models (instead of learning the influence spreads) when evaluating different sets of seeds.}}} and 2) execution time. We perform a series of sensitivity tests in terms of the budget $b$ and the number of promotions $T$. \revise{To verify our algorithm, we further conduct an empirical study on course promotion in viral marketing for the course selection system. \opt{short}{The complete experiments for \problem\ (including the case study on \textit{Amazon}) are shown in \cite{online} due to the space constraint. }}We conduct all experiments on an HP DL580 server with an Intel 2.10GHz CPU and 1TB RAM.
Each simulation result is averaged over 100 samples (i.e., $M=100$).

\subsection{Performance Comparison}
\label{sec:exp_comparison}

\label{para:opt}
First, we compare all approaches and OPT on small datasets sampled from \textit{Amazon} with 100 users. 
Fig.~\ref{Fig:opt_b} shows the \opt{full}{importance-aware }influence under different budgets. \revise{\opt{short}{\salgo}\opt{full}{\algo} has the closest performance to OPT, and outperforms BGRD, HAG, PS, and DRHGA, because TMI of \opt{short}{\salgo}\opt{full}{\algo} carefully selects influential nominees by MCP, and DRE of \opt{short}{\salgo}\opt{full}{\algo} then prioritizes nominees based on dynamic perceptions of item relationships.} In contrast, the baselines neglect the changes in item relationships and do not promote items beneficial to each other over time. Fig.~\ref{Fig:opt_T} compares the \opt{full}{importance-aware }influence under various numbers of promotions. \opt{short}{\salgo}\opt{full}{\algo} creates a larger influence spread as $T$ increases because TMI avoids promoting substitutable items to the same users in near promotions\opt{short}{.}\opt{full}{, and TDSI finds the promotional timings by carefully evaluating the subsequent adoptions.} All baselines do not incorporate the item impact propagation to achieve a larger influence spread as $T$ grows even a sophisticated algorithm based on CR-Greedy \cite{sun2018multi} is employed to schedule promotions at different timings.

\begin{figure}[t]
    \centering
    \subfigure[Diff. budgets.]{
        \centering
        \includegraphics[width=0.235\textwidth]{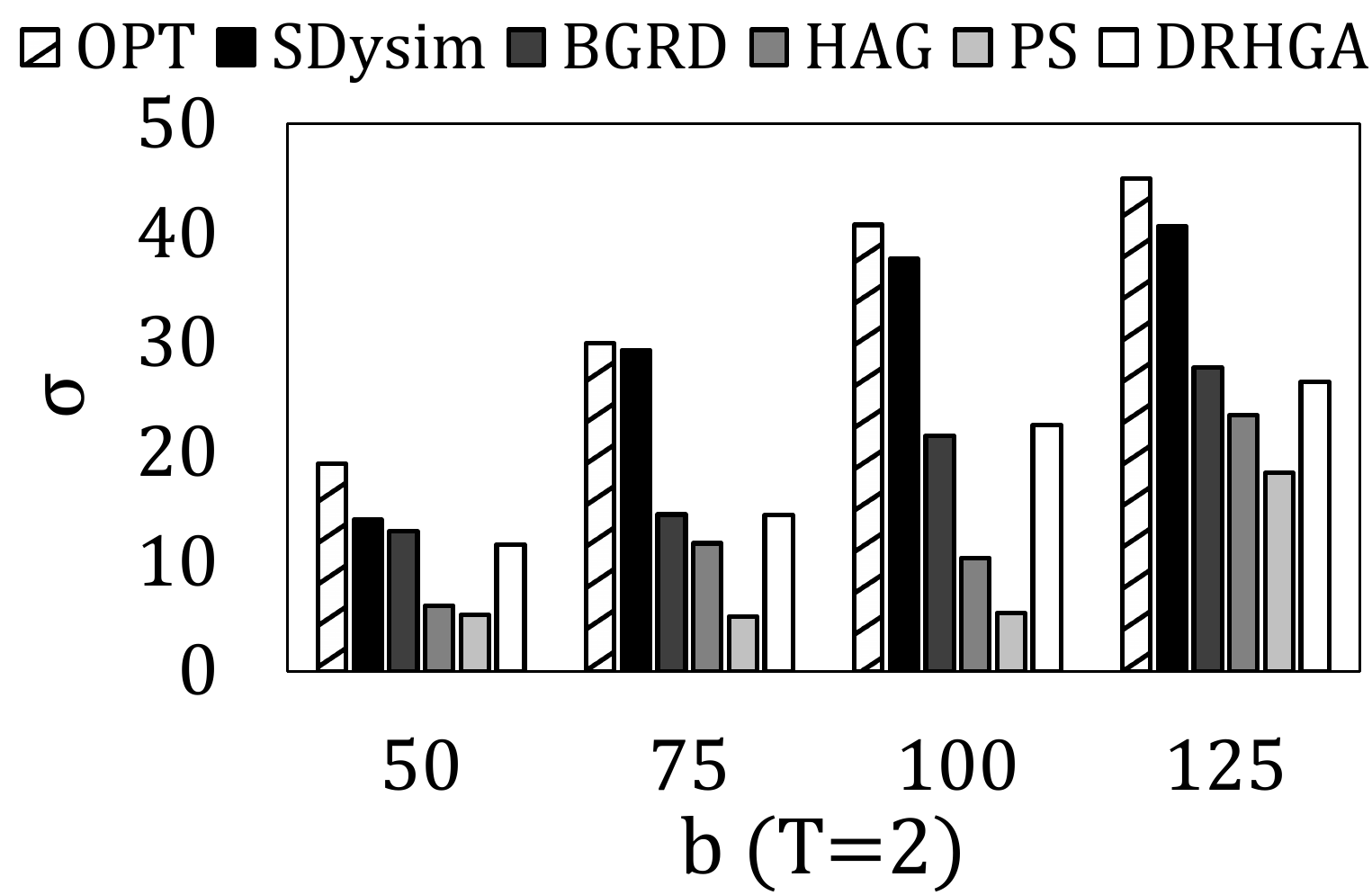}
        \label{Fig:opt_b}
    }%
    \subfigure[Diff. numbers of promotions.]{
        \centering
        \includegraphics[width=0.235\textwidth]{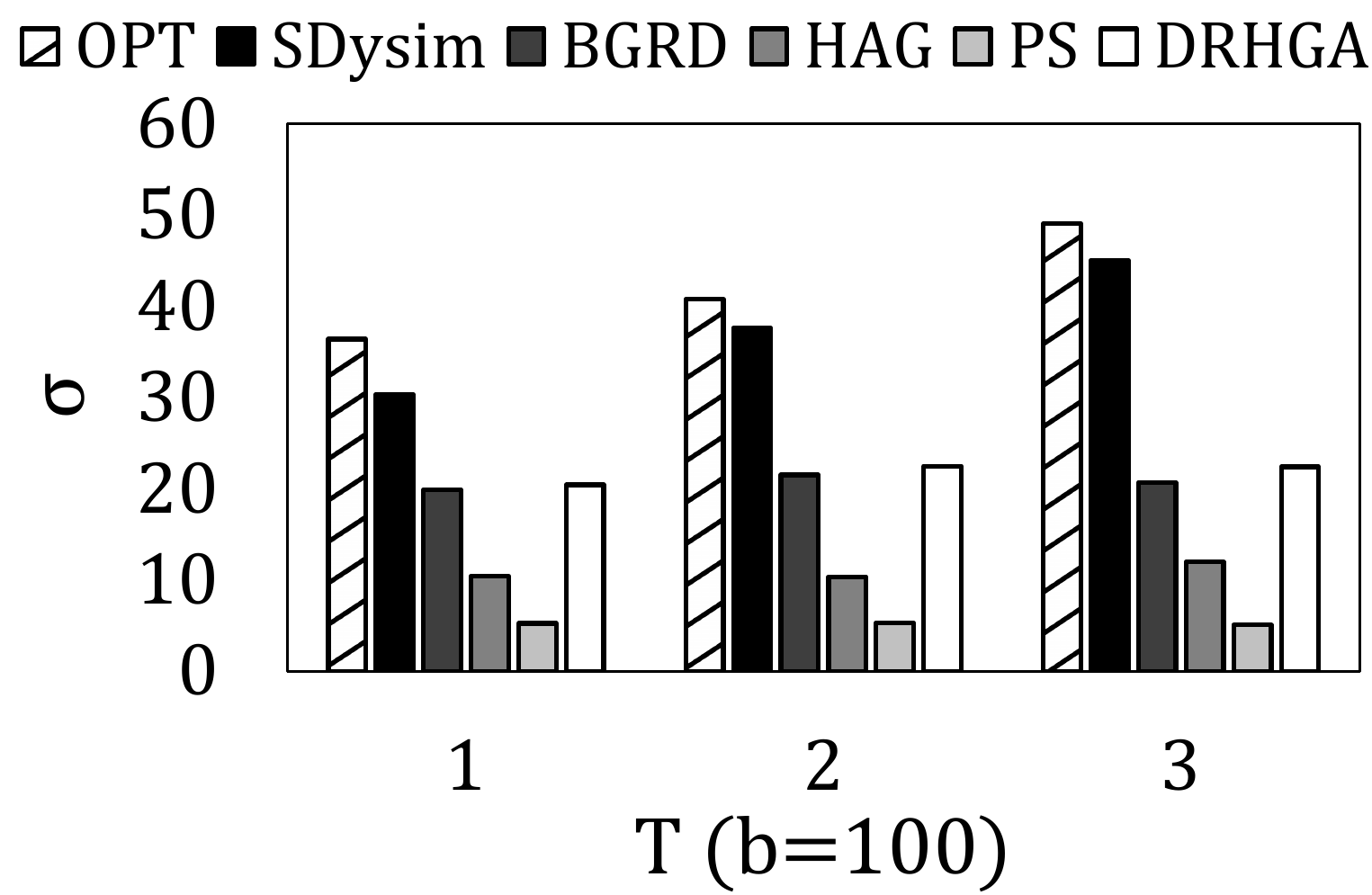}
        \label{Fig:opt_T}
    }
    \caption{\revise{Comparisons with optimal solutions.}}
    \label{Fig:opt}
\end{figure}

\begin{figure*}[t]
    \centering
    \subfigure[Influence (\textit{Yelp}).]{
        \centering
        \includegraphics[width=0.24\textwidth]{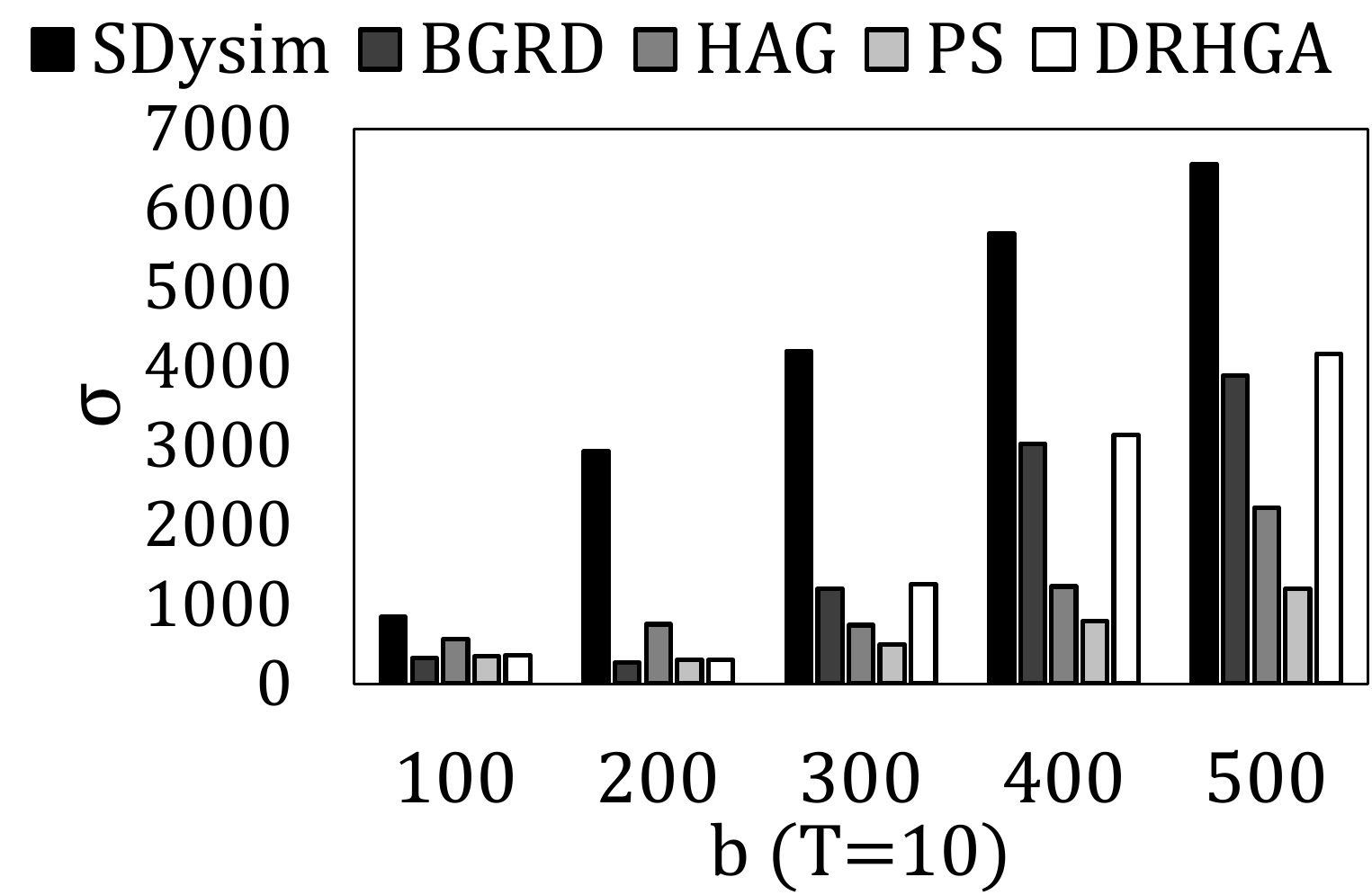}
        \label{Fig:yelp_sigma_b}
    }%
    \subfigure[Influence (\textit{Amazon}).]{
        \centering
        \includegraphics[width=0.24\textwidth]{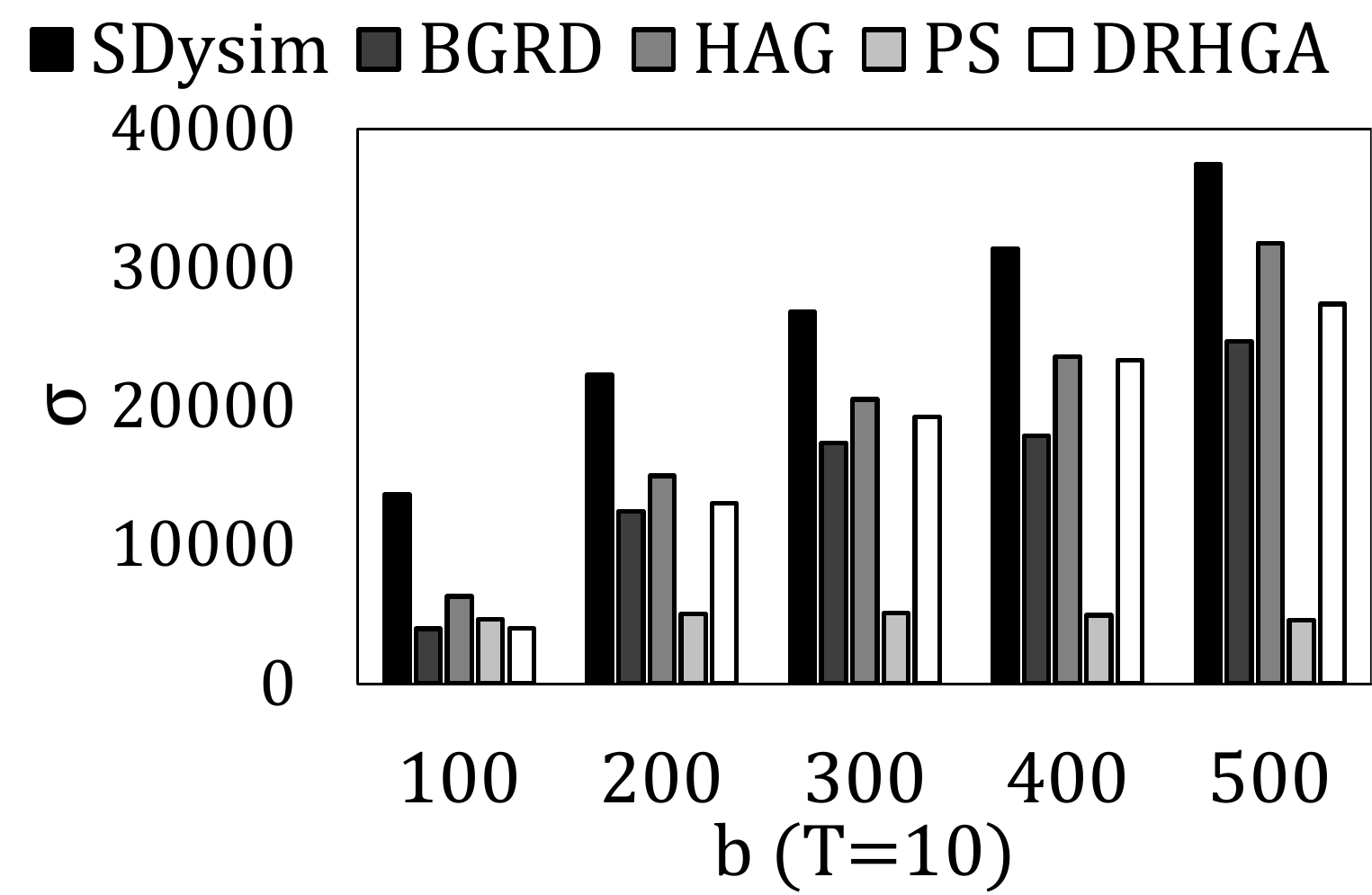}
        \label{Fig:amazon_sigma_b}
    }%
    \subfigure[Influence (\textit{Douban}).]{
        \centering
        \includegraphics[width=0.24\textwidth]{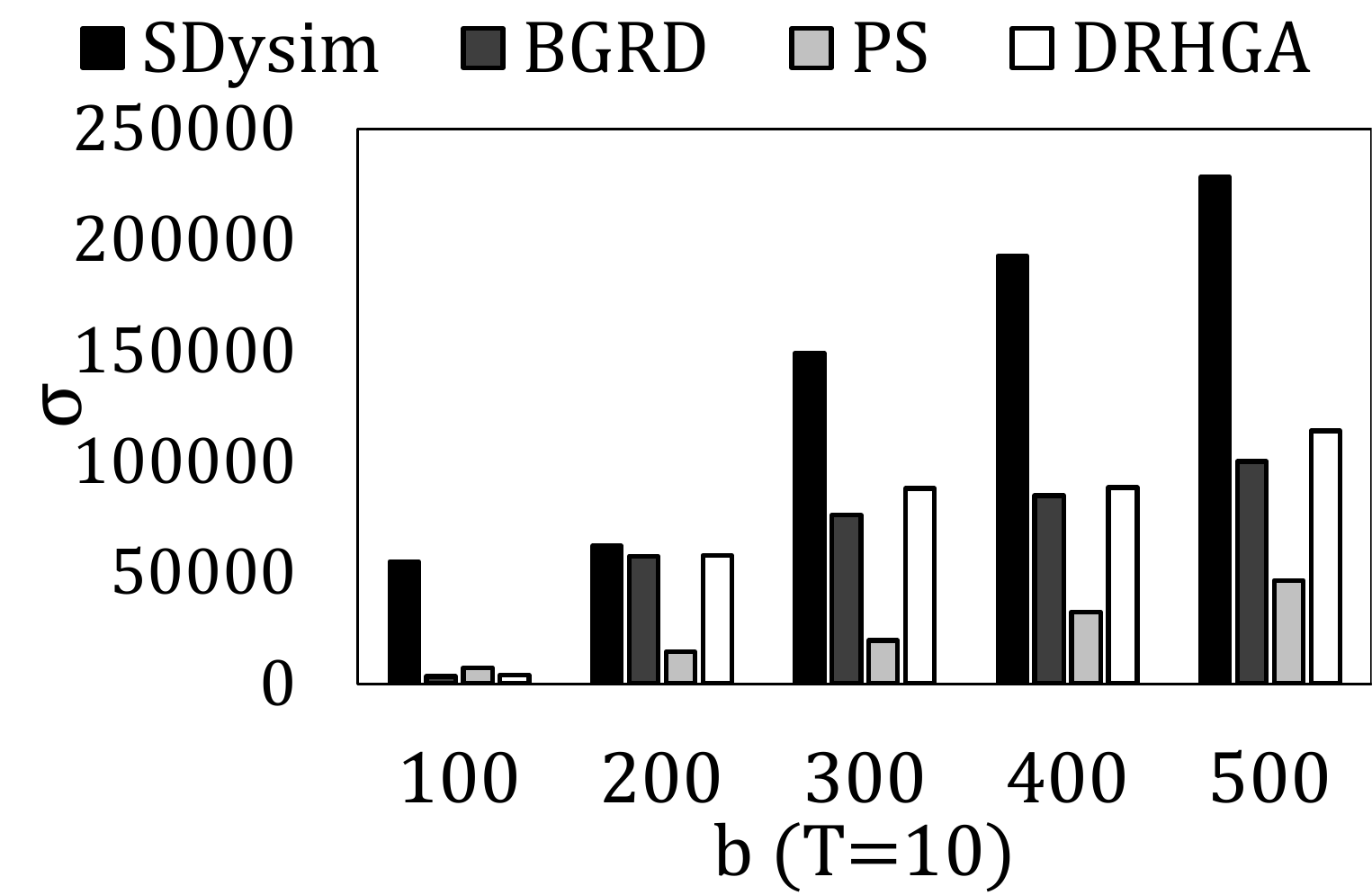}
        \label{Fig:douban_sigma_b}
    }%
    \subfigure[Execution time (\textit{Amazon}).]{
        \centering
        \includegraphics[width=0.24\textwidth]{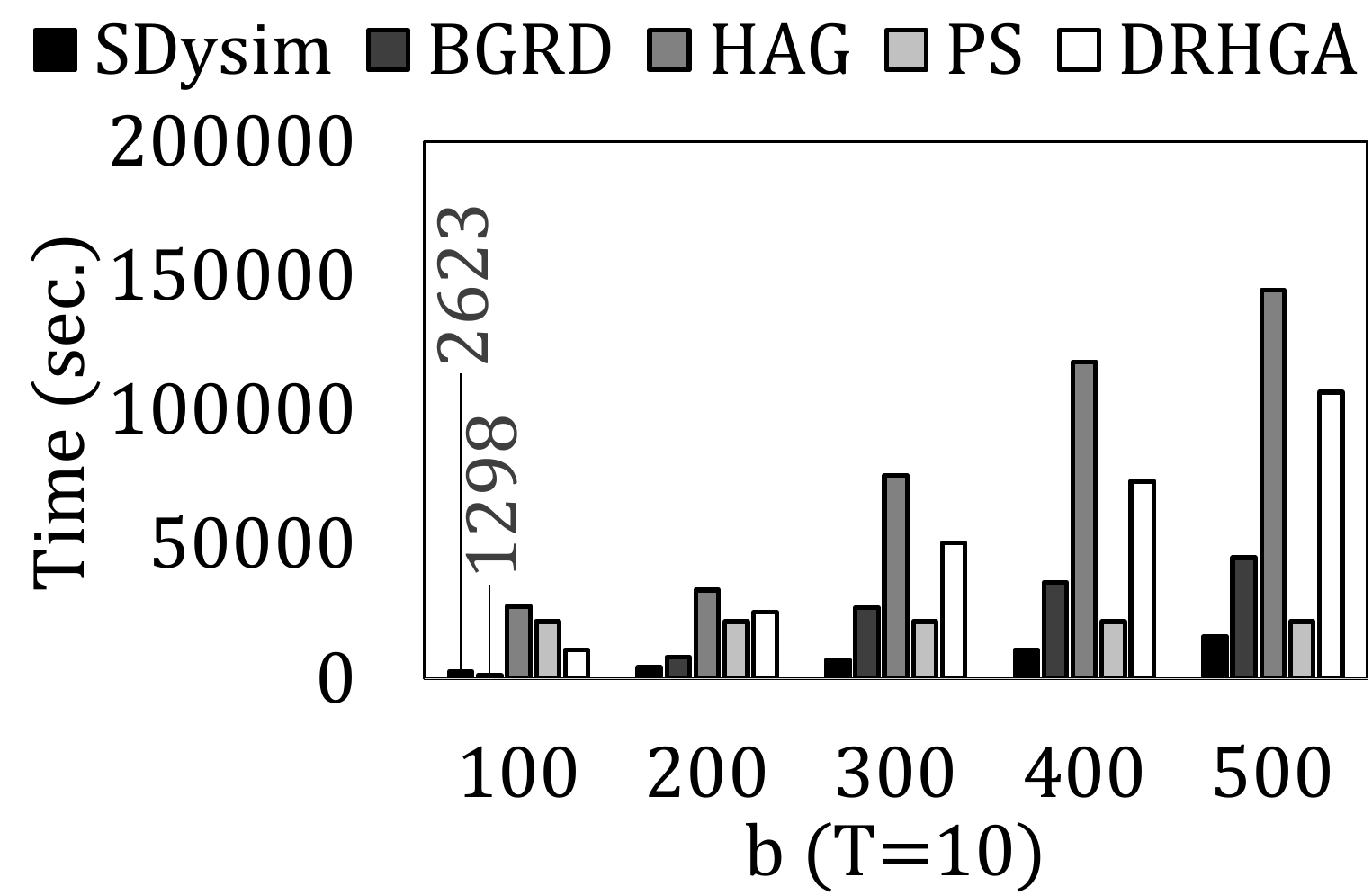}
        \label{Fig:amazon_time_b}
    }\hfill
    \subfigure[Influence (\textit{Yelp}).]{
        \centering
        \includegraphics[width=0.24\textwidth]{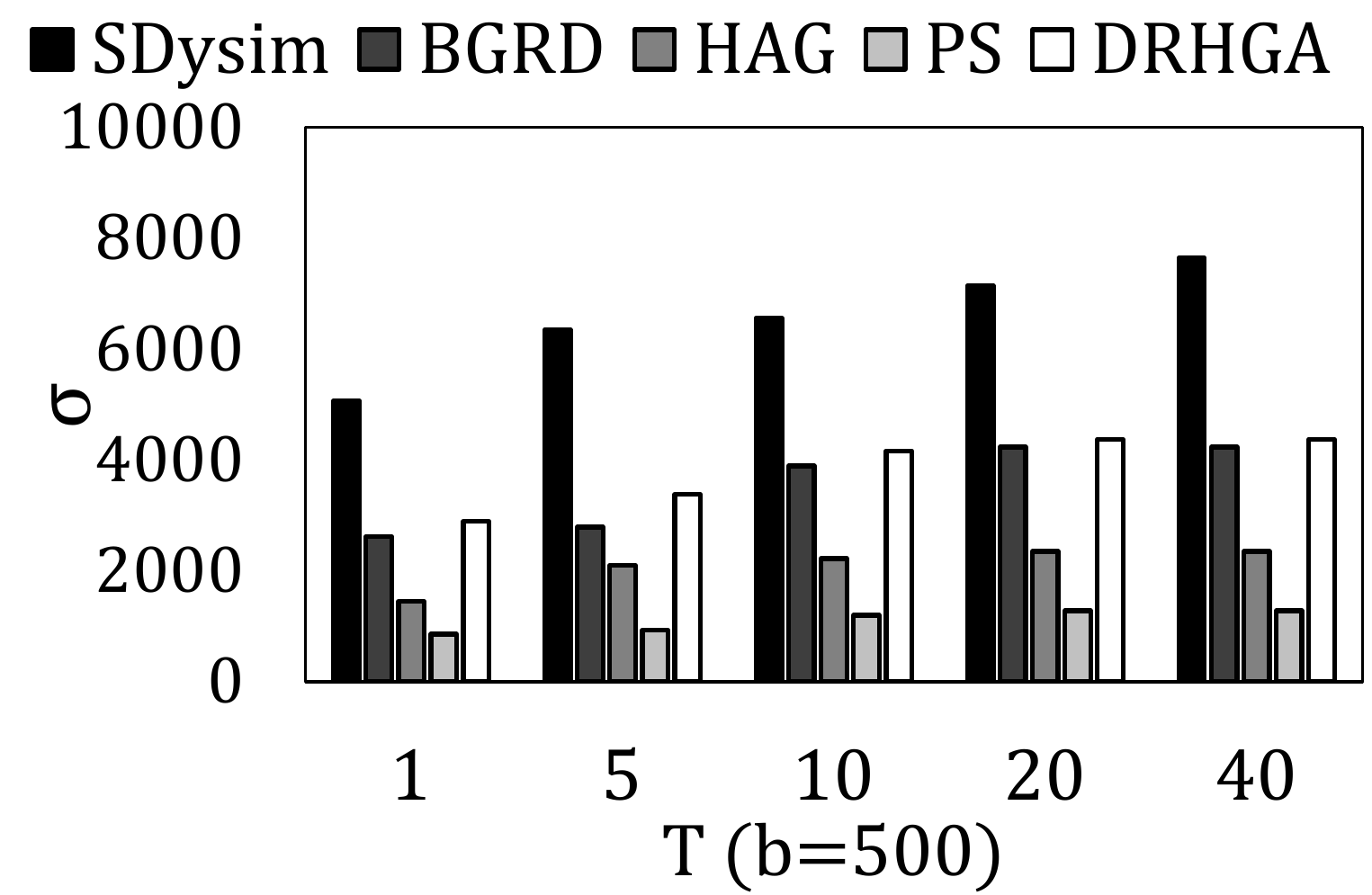}
        \label{Fig:yelp_sigma_T}
    }%
    \subfigure[Influence (\textit{Amazon}).]{
        \centering
        \includegraphics[width=0.24\textwidth]{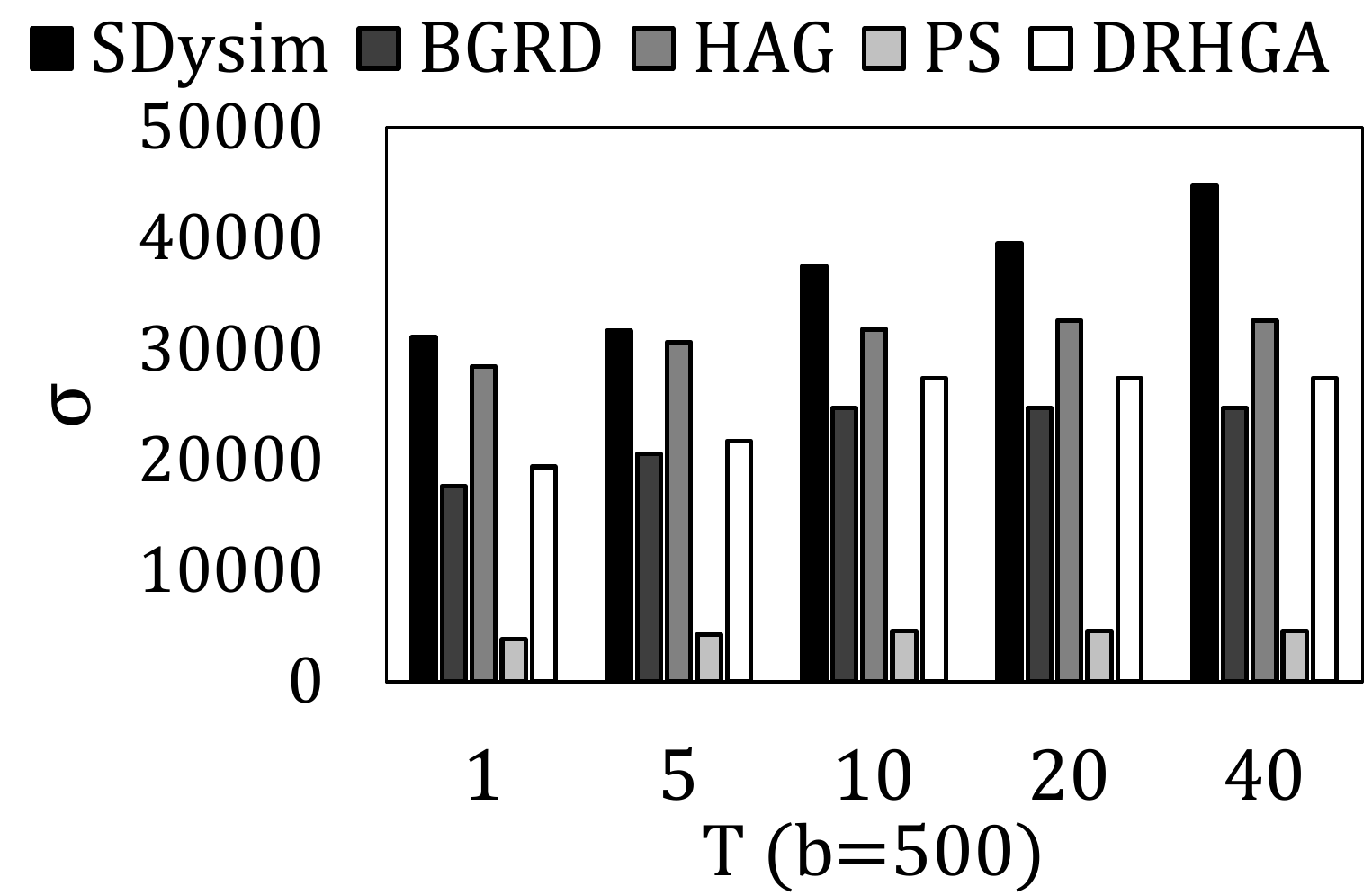}
        \label{Fig:amazon_sigma_T}
    }%
    \subfigure[Execution time (\textit{Amazon}).]{
        \centering
        \includegraphics[width=0.24\textwidth]{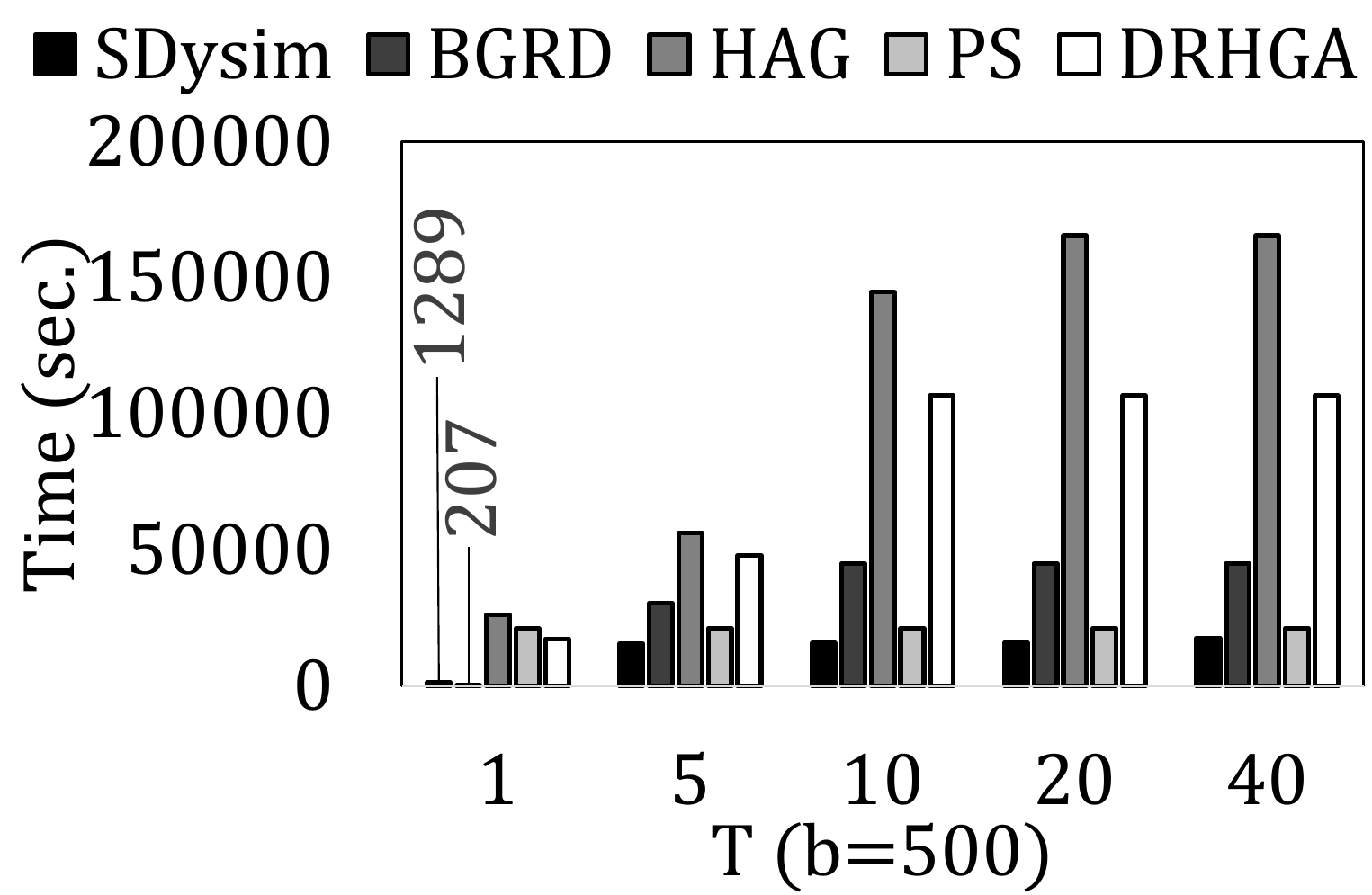}
        \label{Fig:amazon_time_T}
    }%
    \subfigure[Execution time.]{
        \centering
        \includegraphics[width=0.24\textwidth]{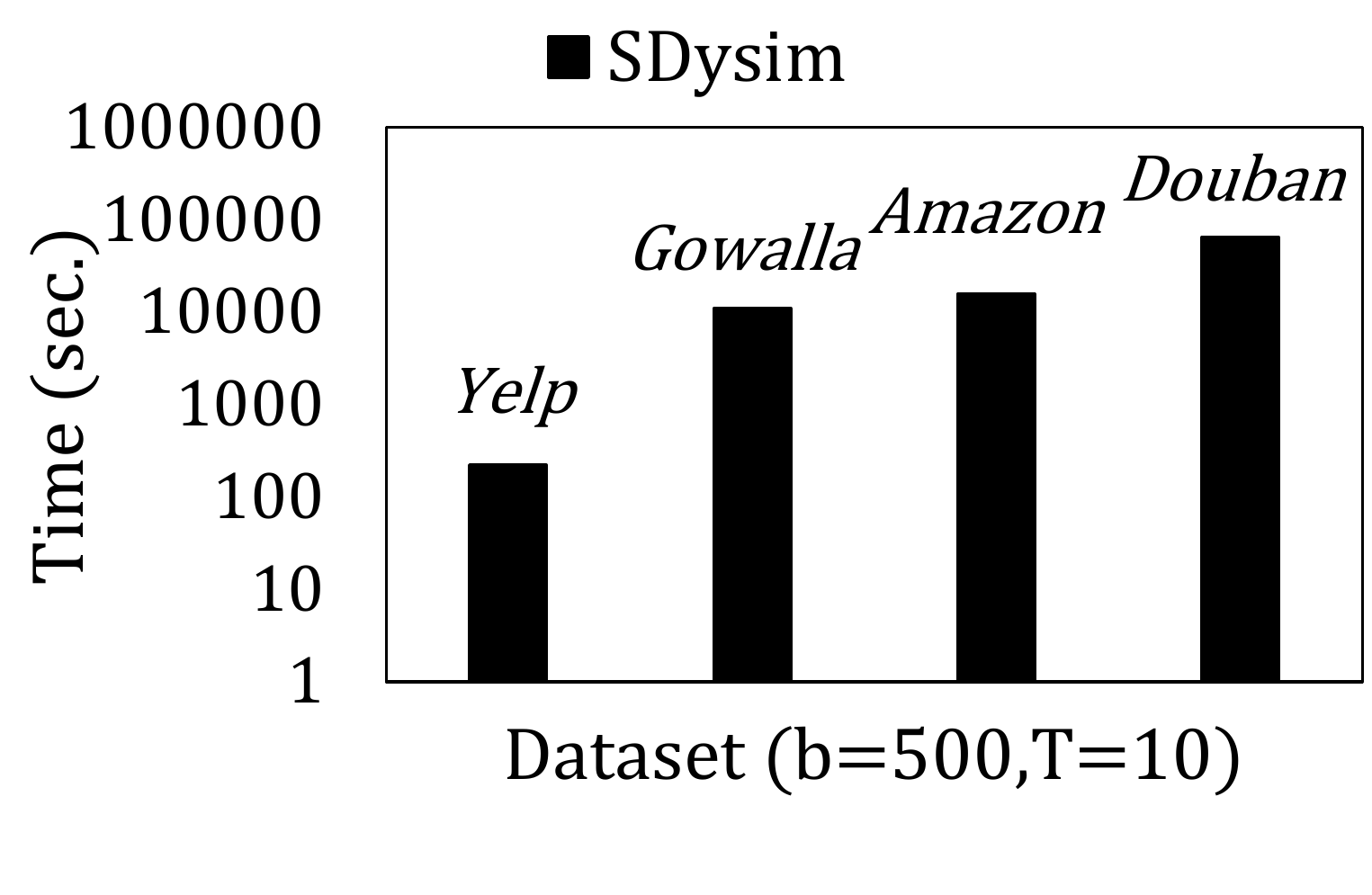}
        \label{Fig:all_time_T}
    }\hfill
    \caption{\revise{Comparisons on large datasets.}}
    \label{Fig:large}
\end{figure*}

Figs.~\ref{Fig:yelp_sigma_b}-\ref{Fig:douban_sigma_b} compare the \opt{full}{importance-aware }influence in large datasets under different budgets.\footnote{Fig.~\ref{Fig:douban_sigma_b} doesn't include HAG due to execution time longer than 12 hours.} \revise{For all datasets, \opt{short}{\salgo}\opt{full}{\algo} achieves the largest influence spread, followed by DRHGA, BGRD, HAG, and PS, because \opt{short}{\salgo}\opt{full}{\algo} is able to exploit the changes in users' preferences\opt{short}{.}\opt{full}{ and social influence strength.}} PS fails to obtain a large influence spread because it only estimates the influence of a seed alone and cannot utilize the impact of items from other promotions to find seeds. BGRD usually achieves smaller than half of the \opt{full}{importance-aware }influence compared with \opt{short}{\salgo}\opt{full}{\algo}, because it neglects the substitutable relationship and regards all items as a bundle to be promoted. 
\revise{Although DRHGA also promotes all items, it is usually better than BGRD since DRHGA is able to select appropriate users to promote each item, instead of regarding all items as a bundle in BGRD. However, as DRHGA does not choose items to be promoted, it still generate a smaller influence spread compared with \opt{short}{\salgo}\opt{full}{\algo}.}
HAG outperforms BGRD in \textit{Yelp} with low budgets and in \textit{Amazon} when the budget is relatively low to the social network size. This is because HAG greedily selects the most influential combination of user-item pairs as the seeds, instead of the most influential user to promote a bundle of items, making the solutions of HAG more cost-effective. BGRD fails to achieve a large influence spread for a large $b$ in \textit{Douban} since items (e.g., songs and books) in \textit{Douban} are usually complementary, but BGRD still allocates the budget to the same users to promote a bundle of complementary items.

Figs.~\ref{Fig:yelp_sigma_T}-\ref{Fig:amazon_sigma_T} present the \opt{full}{importance-aware }influence in large datasets under different numbers of promotions with the maximal $T$ as 40 (following \cite{sun2018multi}). \revise{\opt{short}{\salgo}\opt{full}{\algo} achieves the largest influence spread for all $T$ with significant increments as $T$ grows, because TMI of \opt{short}{\salgo}\opt{full}{\algo} first arranges the promoting order of target markets, and  \opt{short}{\salgo\ then exploits DR to prioritize items to be promoted for each target market.}\opt{full}{\algo\ then exploits SI (aware of the changes in preferences and social influence strength) to determine proper promotional timings of nominees for each target market.}} In contrast, the influence spreads grow slowly for the baselines, especially when $T \geq 20$, because they cannot arrange the promoting order holistically and fail to utilize more promotions to properly gain more adoptions.

Figs.~\ref{Fig:amazon_time_b} and~\ref{Fig:amazon_time_T}-\ref{Fig:all_time_T} compare the execution time under different budgets and different numbers of promotions, respectively. As shown in Fig.~\ref{Fig:amazon_time_b}, when $b$ varies, \opt{short}{\salgo}\opt{full}{\algo} requires the least execution time for most cases. HAG suffers from finding numerous combinations of seeds for a large budget. PS requires much time to search for maximum influence paths to evaluate the influence of a user. \revise{Although DRHGA only selects users, it takes more time than BGRD since the selection process is repeated for each item.} As $b$ becomes larger, the execution time of \opt{short}{\salgo}\opt{full}{\algo} only slightly increases since TMI quickly selects influential nominees by MCP according to the cost and increment on \opt{full}{important-aware }influence for each candidate nominee. PS is less sensitive to $b$ since it employs a discounting strategy to estimate a seed's influence under the impact of selected seeds. \revise{On the other hand, as shown in Fig.~\ref{Fig:amazon_time_T}, \opt{short}{\salgo}\opt{full}{\algo} requires a low overhead to find promotional timings \opt{short}{since it assigns the promotions by TMI and DRE, which are less sensitive to $T$}\opt{full}{due to an efficient search with pruning in TDSI}, whereas the baselines greedily assigning the promotional timings tend to suffer from larger $T$.} To show the scalability of \opt{short}{\salgo}\opt{full}{\algo}, Fig.~\ref{Fig:all_time_T} compares the execution time of \opt{short}{\salgo}\opt{full}{\algo} on different datasets (in the order of the number of users in the social network). The time increases not only as the number of users increases but also as that of items increases (e.g., so the time on \textit{Gowalla} and \textit{Amazon} are similar) due to the propagation of item impact.
\label{para:time_comparison}

\subsection{Ablation Study}
\label{sec:ablation_study}
\revise{Fig.~\ref{Fig:ablation_test} compares \opt{short}{\salgo}\opt{full}{\algo}, \opt{short}{\salgo}\opt{full}{\algo} without target markets (i.e., w/o TM), and \opt{short}{\salgo}\opt{full}{\algo} without item priority (i.e., w/o IP). We have the following three observations. First, the influence spread is smaller when target markets are not identified, since the selected nominees may promote substitutable items to the same users in consecutive promotions, which detracts from users' preferences for the posterior items to be promoted. By contrast, \opt{short}{\salgo}\opt{full}{\algo} effectively avoids the antagonism of the substitutable relationship by identifying and prioritizing the target markets.
Second, the influence spread of \opt{short}{\salgo}\opt{full}{\algo} without item priority is also smaller than that of \opt{short}{\salgo}\opt{full}{\algo}, because all items in a target market are promoted simultaneously, and therefore the promotion of an item is hardly facilitated by promoting its complementary items first. In contrast, \opt{short}{\salgo}\opt{full}{\algo} determines the item priority by exploiting DR, which carefully measures the impact from previously promoted items on an item and also the potential impact from this item on other items in subsequent promotions. 
Third, as $T$ increases, the gaps between \opt{short}{\salgo}\opt{full}{\algo} and \opt{short}{\salgo}\opt{full}{\algo} w/o TM/IP increase. This is because the number of promotions in \opt{short}{\salgo}\opt{full}{\algo} w/o TM/IP is limited, i.e., at most the number of items/target markets, implying that more promotions are not beneficial for a larger influence spread. By contrast, \opt{short}{\salgo}\opt{full}{\algo} effectively schedules the promotional timings of different target markets and different items to exploit the propagation of item impacts.
}

\begin{figure*}
    \centering
    \subfigure[Diff. $b$ (\textit{Yelp}.)]{
        \centering
        \includegraphics[width=0.23\textwidth]{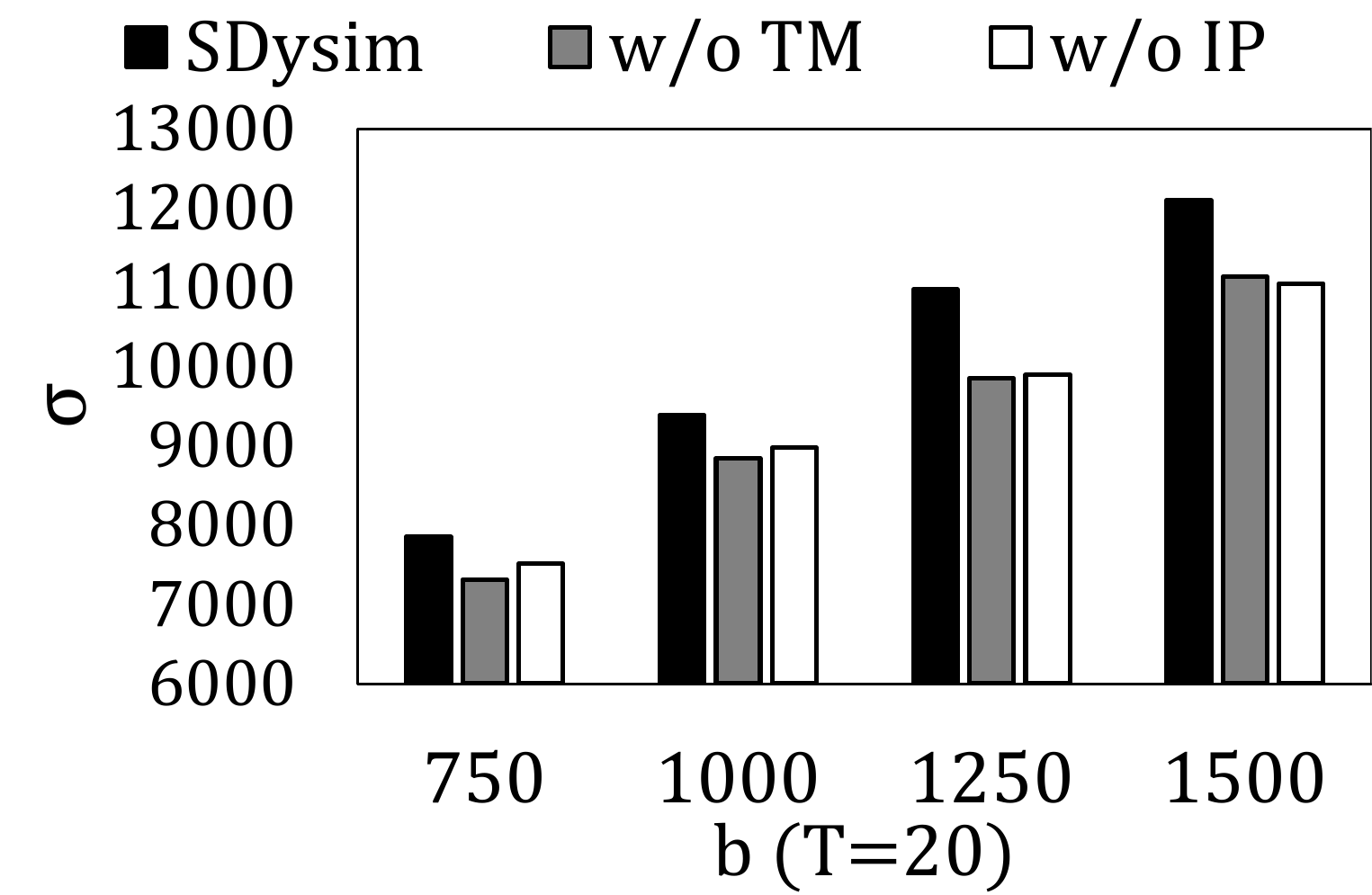}
        \label{Fig:yelp_ablation_test_budget}
    }%
    \subfigure[Diff. $T$ (\textit{Yelp}).]{
        \centering
        \includegraphics[width=0.23\textwidth]{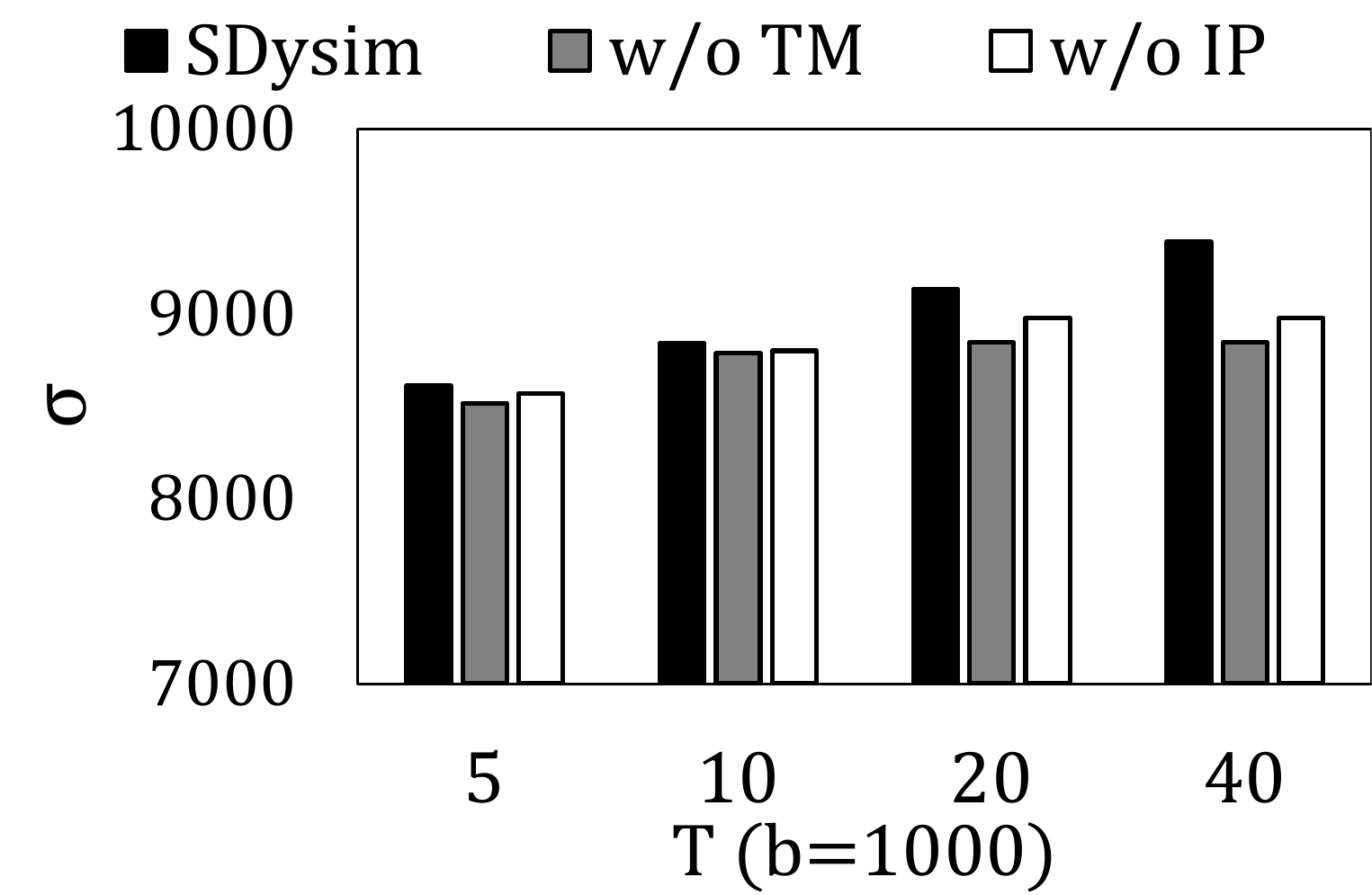}
        \label{Fig:yelp_ablation_test_T}
    }%
    \subfigure[Diff. $b$ (\textit{Amazon}.)]{
        \centering
        \includegraphics[width=0.23\textwidth]{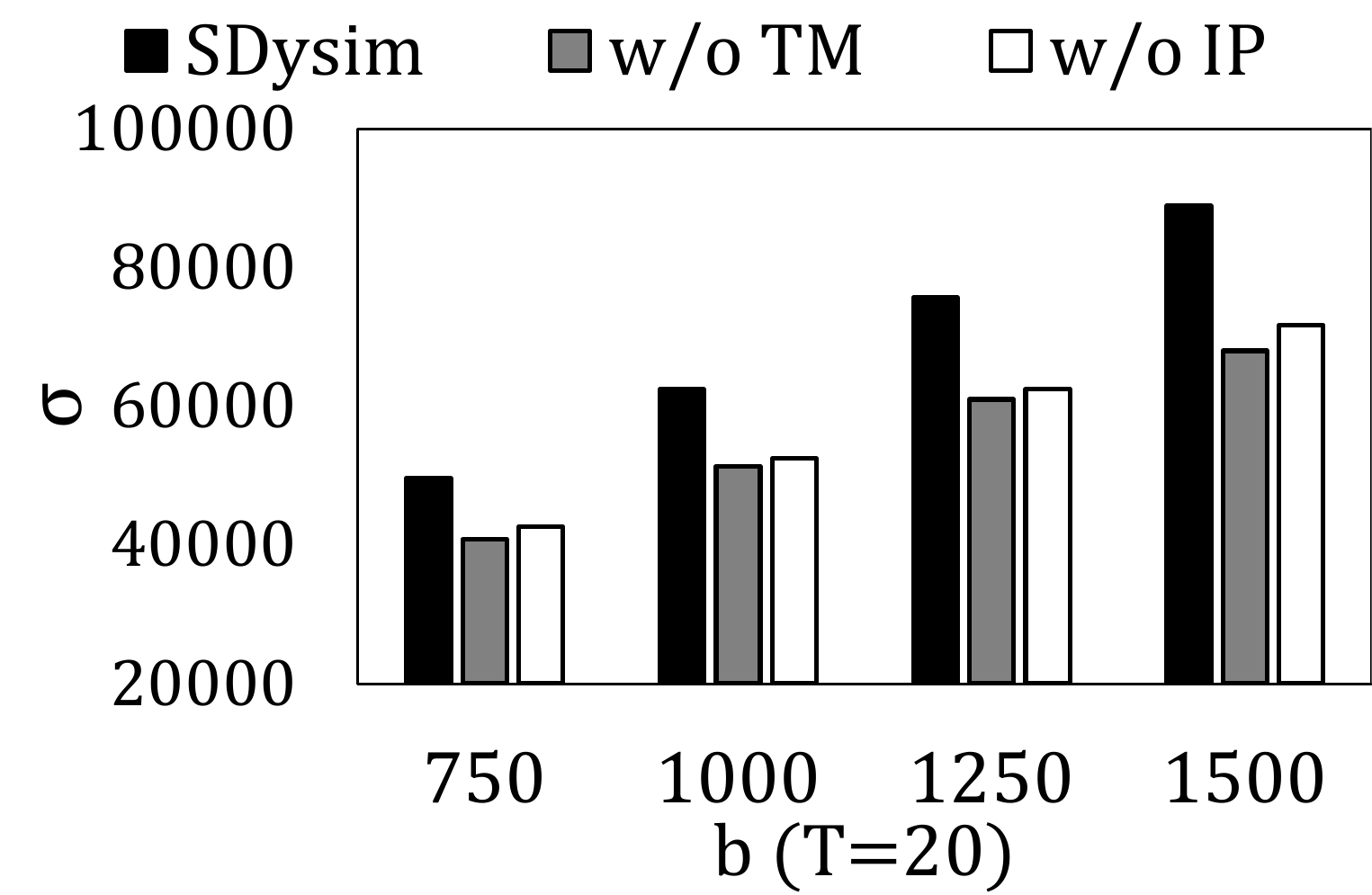}
        \label{Fig:amazon_ablation_test_budget}
    }%
    \subfigure[Diff. $T$ (\textit{Amazon}).]{
        \centering
        \includegraphics[width=0.23\textwidth]{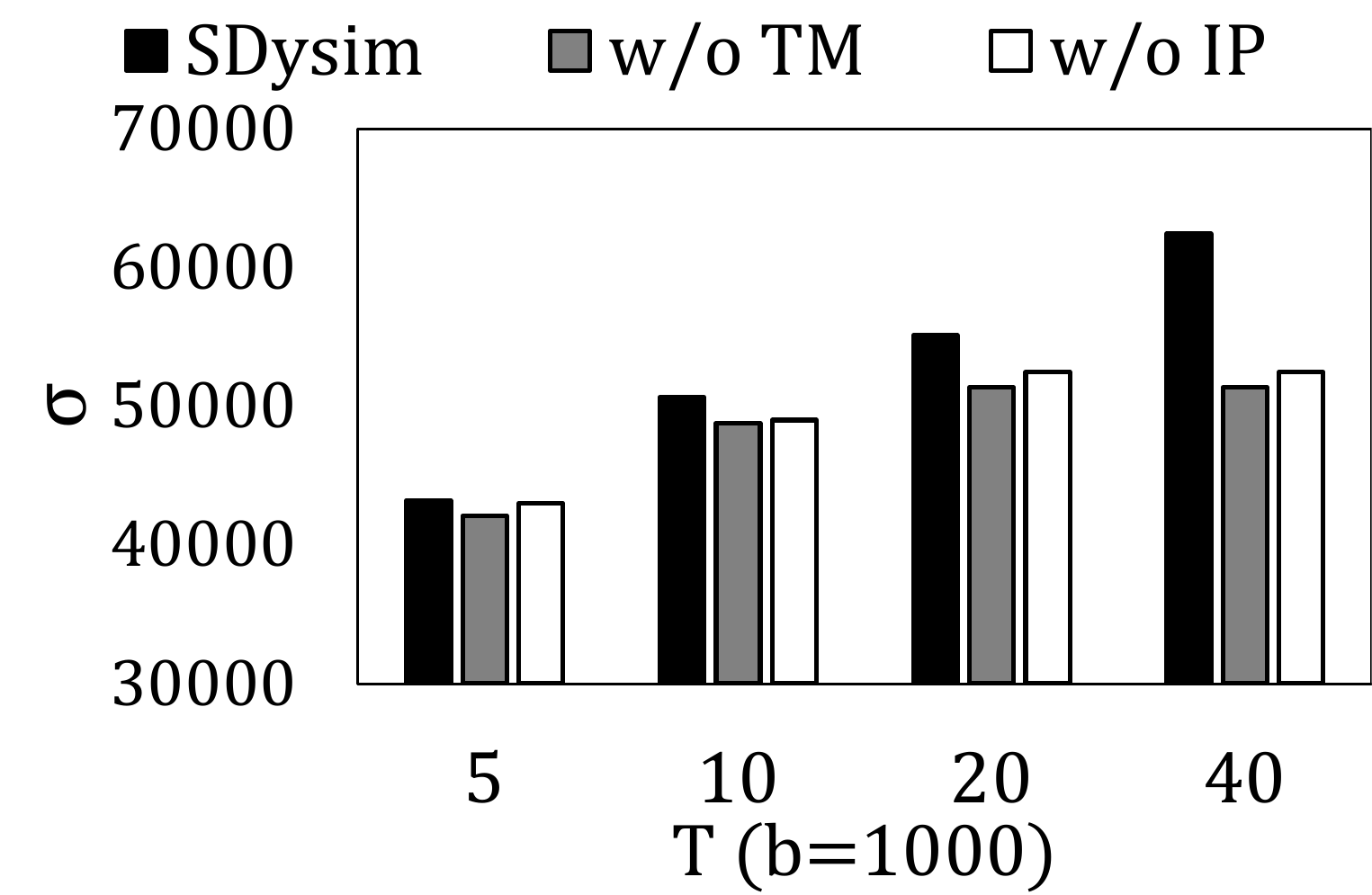}
        \label{Fig:amazon_ablation_test_T}
    }\hfill
    \caption{\revise{The ablation study.}}
    \label{Fig:ablation_test}
\end{figure*}

\subsection{Comparison of Different Market Orders}
\label{sec:market_order}
\revise{To compare with Antagonistic Extent (AE), we leverage the following metrics to evaluate additional promotional orders of target markets: profitability (PF) \cite{simkin1998prioritising}, size of the market (SZ) \cite{simkin1998prioritising}, relative market share (RMS) \cite{farris2010marketing}, and random (RD).
PF and SZ are two of the most common criteria to prioritize target markets in the marketing research field. PF is the expected adoptions under the promotion from the corresponding nominees minus the cost of the nominees. SZ is the number of customers in the target market. A target market with a larger PF or SZ is preferred to be promoted earlier.
RMS is widely used to assess the value of a firm's item in the product management field. RMS of an item $x$ is defined as the ratio of $x$'s market share to the largest market share of its substitutable item, where the market share is evaluated by the number of users preferring the item most. The target market that promotes items with a higher RMS is prioritized. }

\revise{Fig.~\ref{Fig:market_order} manifests that AE and PF usually achieve the largest influence spread, followed by SZ, RMS, and RD. AE and PF outperform the others since AE prioritizes the target markets that have less substitutable relationship on the subsequent target markets, while PF prioritizes the target markets with more profits to ensure their influence spread. When there exist excessively large target markets (e.g., identified by plenty of nominees), PF is suggested as the ordering metric, since PF can accurately prioritize these large target markets to maximize the influence. In general cases, AE is usually a better metric to prioritize the target markets since the impact from the substitutable items promoted by prior target markets is minimized. By contrast, SZ, RMS, and RD, without carefully examining the relationships of items promoted in other target markets, cannot avoid the antagonism of the substitutable relationship. The results manifest that promoting target markets with a smaller AE or a larger PF earlier in TMI is beneficial to achieve a larger influence spread. }

\begin{figure*}
    \centering
    \subfigure[Diff. $b$ (\textit{Yelp}.)]{
        \centering
        \includegraphics[width=0.23\textwidth]{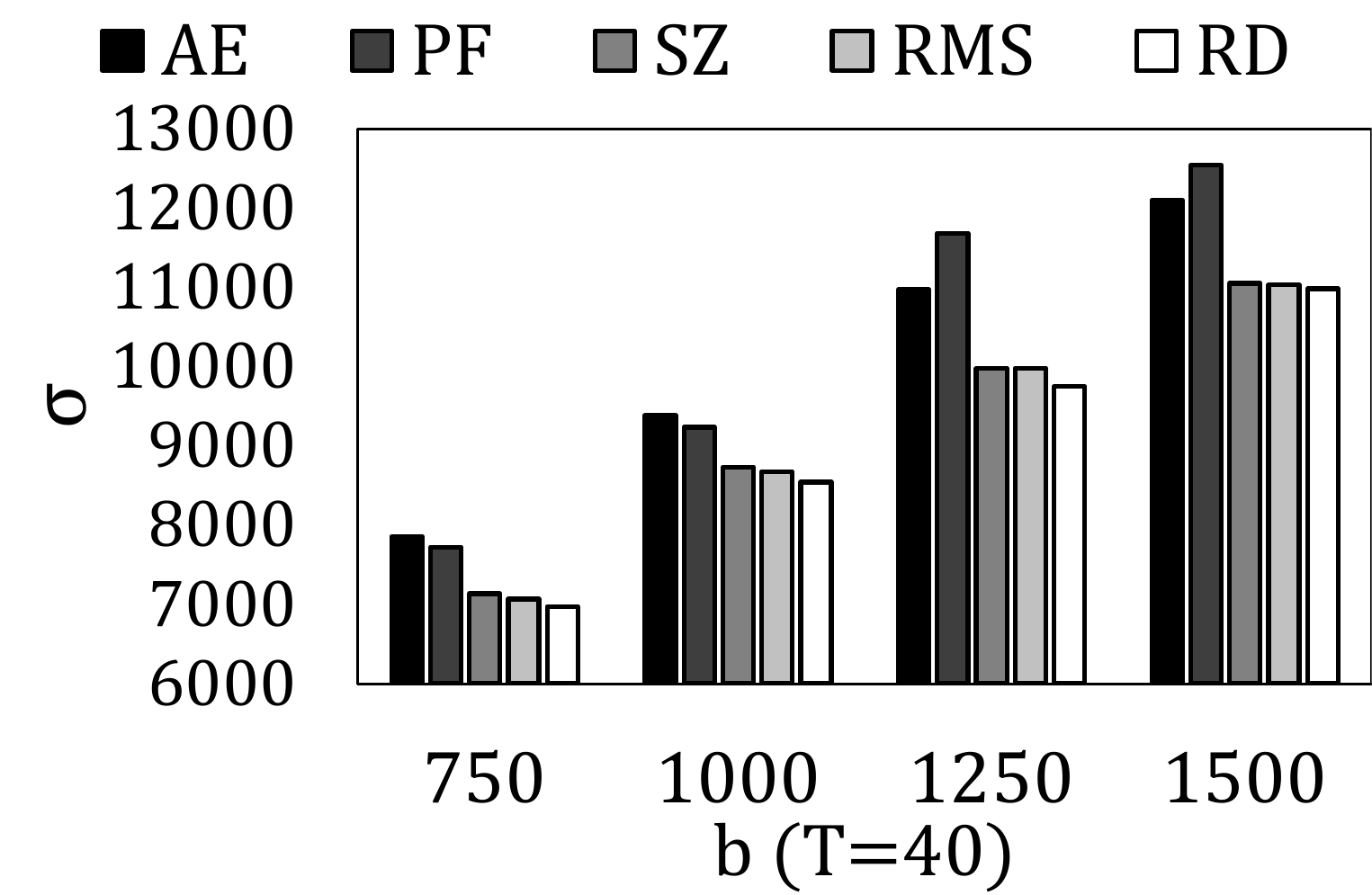}
        \label{Fig:yelp_market_order_budget}
    }%
    \subfigure[Diff. $T$ (\textit{Yelp}).]{
        \centering
        \includegraphics[width=0.23\textwidth]{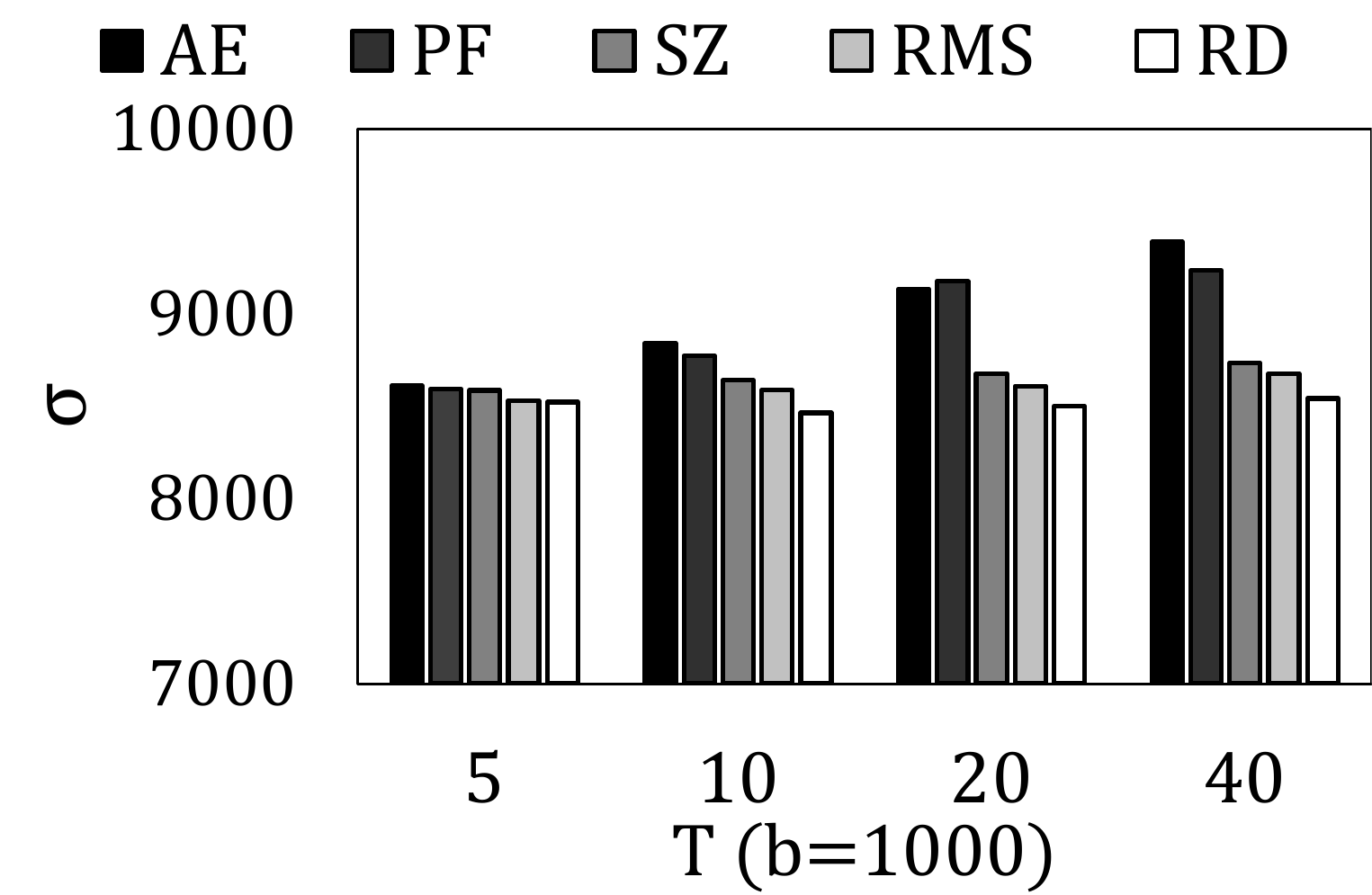}
        \label{Fig:yelp_market_order_T}
    }%
    \subfigure[Diff. $b$ (\textit{Amazon}.)]{
        \centering
        \includegraphics[width=0.23\textwidth]{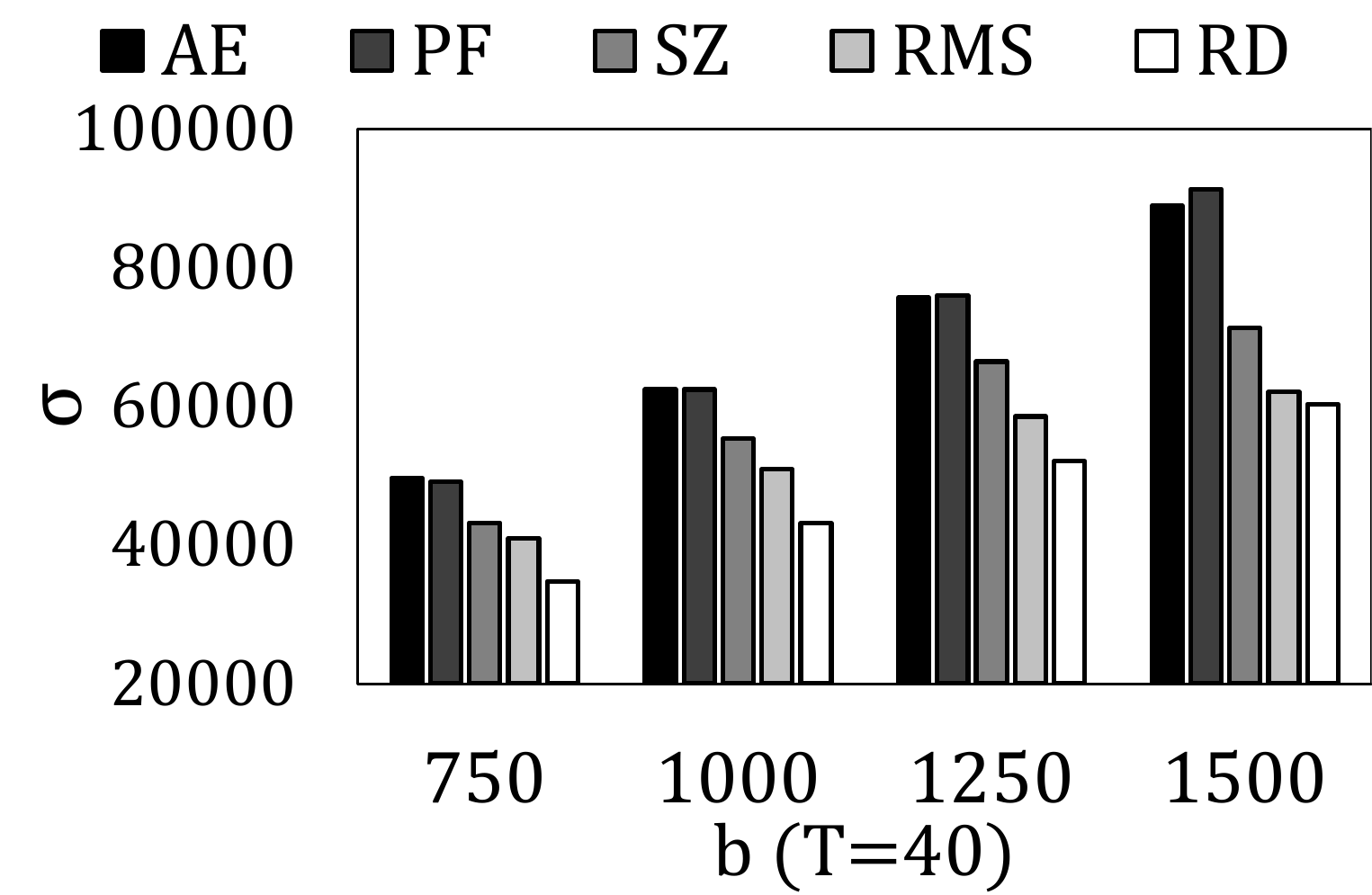}
        \label{Fig:amazon_market_order_budget}
    }%
    \subfigure[Diff. $T$ (\textit{Amazon}).]{
        \centering
        \includegraphics[width=0.23\textwidth]{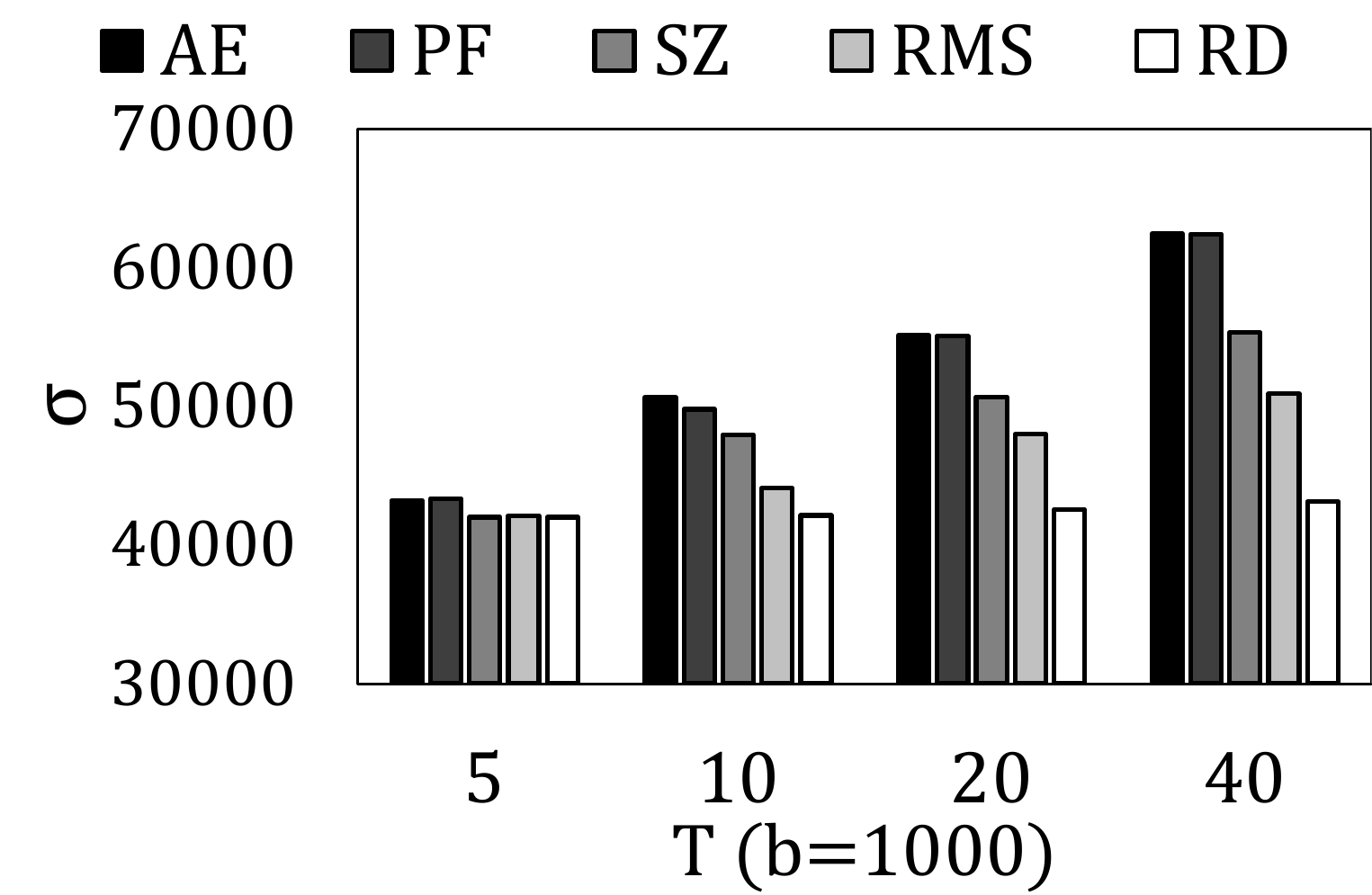}
        \label{Fig:amazon_market_order_T}
    }\hfill
    \caption{\revise{Comparisons of different market orders.}}
    \label{Fig:market_order}
\end{figure*}

\subsection{Empirical Study}
\label{sec:user_study}
In this study, we have recruited five classes for promoting courses by viral marketing to evaluate the effectiveness of \opt{short}{\salgo}\opt{full}{\algo} in real-world settings. There were 30 elective courses for computer science college students, including artificial intelligence (AI), objective-oriented programming (OOP), and big data, to name a few. The goal of the campaigns is to encourage the students in Taiwan University to select those courses, i.e., maximizing the total number of students selecting the elective courses. The statistics of all classes are presented in Table~\ref{T:class}. 

To construct KG of these courses, we crawled their syllabuses from Taiwan University, and extracted keywords of courses, related compulsory courses, and research fields of teachers. The meta-graphs were defined according to the curriculum guidelines in Taiwan.\footnote{\revise{\url{https://cirn.moe.edu.tw/Upload/file/32077/83646.pdf}.}} 
Following \cite{Nguyen2016Cost}, the costs of hiring users to promote courses are set as users' out-degree over their initial preferences for courses, since users who are more influential and who prefer the course less may need more incentive to be the seeds. 

To evaluate the effectiveness of different approaches, we have launched campaigns based on the following approaches: 1) \opt{short}{\salgo}\opt{full}{\algo}, 2) BGRD \cite{banerjee2019maximizing}, 3) HAG \cite{hung2016social}, and 4) PS \cite{teng2018revenue}. 
In this study, the budget and the number of promotions were set to 50 and 3, respectively. For \opt{short}{\salgo}\opt{full}{\algo}, \opt{short}{relevance measurement (including the learning of personal weightings on meta-graphs and the constructions of personal item networks) and preference estimation are updated according to \cite{shi2019semrec} and \cite{zhao2017improving}, respectively.}\opt{full}{relevance measurement (including the learning of personal weightings on meta-graphs and the constructions of personal item networks), preference estimation, influence learning and item associations are learned and updated based on \cite{shi2019semrec}, \cite{zhao2017improving}, \cite{zhang2019learning}, and \cite{zhao2017improving}, respectively.} TMI of \opt{short}{\salgo}\opt{full}{\algo} follows \cite{chen2017people} and \cite{chen2010scalable} to cluster nominees and explore influenced users, respectively. 

Fig.~\ref{Fig:user_study} reports the total number of students selecting the elective courses for different approaches in each class. For all classes, \opt{short}{\salgo}\opt{full}{\algo} induces the most students who selected those courses, followed by BGRD, HAG, and PS. These results validate that \opt{short}{\salgo}\opt{full}{\algo} is able to encourage students to select those courses by carefully evaluating the dynamic changes in the relationships between courses. For instance, we observe that a student in Class A initially regarded the complementary relevance between AI and software design for cloud computing (SDCC) as 0.1. After he selected AI and big data, the complementary relevance between AI and SDCC increased to 0.6 (derived according to \cite{shi2019semrec}). He then selected SDCC accordingly. In Class D, another student initially reported that the influence from one of her classmates is 0.2. During the promotions, both of them selected cloud computing and IoT, which increased the classmate's influence to this student to 0.7 (derived according to \cite{zhang2019learning}). Then, this student selected big data after being informed that the classmate selected big data as well. By contrast, BGRD, HAG, and PS do not capture the dynamic changes in the relationships between courses and the ripple effect, resulting in fewer students selecting the elective courses in the end.

Besides, although BGRD is able to select influential students in each class, all courses are promoted as a bundle without considering their relationships. For example, in Class B, BGRD selects a student to promote python and C++ in a bundle, but the two courses were usually regarded as substitutable for most students (i.e., the average substitutable relevance between python and C++ was 0.7). We observe that more than two-thirds of the students who selected python did not select C++ when they were promoted C++ by their classmates. Similar to BGRD, HAG does not examine the substitutable relationship when promoting courses. In Class B, HAG also promoted OOP and C++ to the same set of students. However, more than half of the students selected only one of OOP and C++, indicating the waste of simultaneous promotions for substitutable items. PS induces the fewest students to select the elective courses, since it does not facilitate students to promote multiple courses and cannot properly utilize the course promotion from other seeds. For example, in Class C, PS selected a student to promote deep learning (DL) to a set of students who were very interested in DL (i.e., their average initial preference for DL was 0.9). As DL and natural language processing (NLP) were regarded as highly complementary for this set of students (i.e., the average complementary relevance between DL and NLP was 0.75), a good strategy is to let the students selected by PS to promote NLP as well. However, PS did not promote any other course to this set of students in Class C.
The above results lead to conclusions consistent with the experiments in Sec.~\ref{sec:exp_comparison}, indicating that exploring the dynamic personal perceptions of item relationships, dynamic preference for items, dynamic social influence strength, and item associations is the cornerstone of influence maximization under a sequence of promotions on relevant items.

\begin{table}[t]
    \caption{\revise{The statistics of recruited classes.}}
    \label{T:class}
\centering
    \begin{tabular}{|c||c|c|c|c|c|}
        \hline
        Class ID & A & B & C & D & E \\ \hline
        \# of users & 33 & 26 & 22 & 20 & 20 \\
        \# of edges & 293 & 420 & 387 & 227 & 308 \\
        \hline
    \end{tabular}
\end{table}

\begin{figure}[t]
    \centering
    \includegraphics[width=0.30\textwidth]{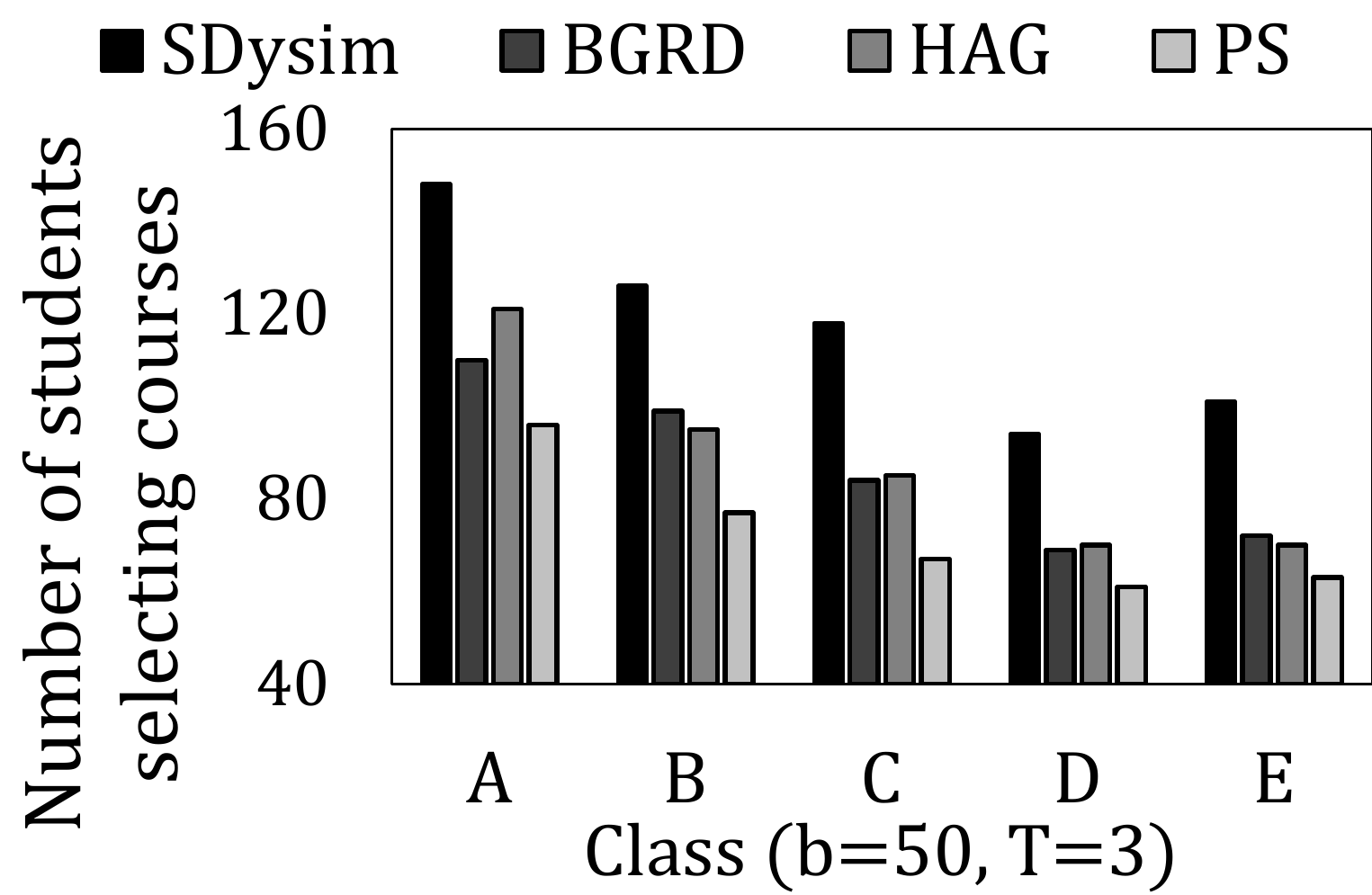}
    \caption{\revise{The empirical study: the total number of students selecting the elective courses for each classes.}}
    \label{Fig:user_study}
\end{figure}

\opt{full}{\subsection{Case Study}
From the experiments, we find several interesting cases in \textit{Amazon} and study them in detail as follows.

1) For $b=500$ and $T=20$, we find that User \#277 inclines to adopt items with greater importance via item associations. She is promoted and adopts a cinema camera Canon EOS C500 and two of its lenses in the 5th, 6th, and 9th promotions, respectively.
Then, in the 13th promotion, if she is promoted a camera Premier DS-3081S, she adopts another camera Canon EOS M via item associations (instead of Premier DS-3081S) with a very high probability. We observe that the average substitutable relevance between Premier DS-3081S and Canon EOS M is increased from 0.7 to 0.93 if she adopts the two lenses in separate promotions after adopting Canon EOS C500, since she may regard the lenses with similar functions as substitutable. Her perception of the substitutable relationship triggers the adoption of a more expensive camera Canon EOS M, as cameras are high-end items, and users usually adopt only one among the substitutable items. The changes in perceptions facilitating item associations indeed lead to larger importance-aware influence in subsequent promotions.

2) For $b=200$, User \#16900 has different purchase decisions between settings $T=1$ and $T=10$. 
We observe that her preferences for Kindle and Kindle Unlimited service are 0.61 and 0.32, respectively, before any promotion.
For $T=1$, she usually adopts only Kindle if she is promoted both items at the same time. 
For $T=10$, she is likely to be promoted and adopt Kindle in the 2nd promotion. We find that as Kindle and Kindle Unlimited service are complementary to each other, the adoption of Kindle increases her preference of Kindle Unlimited service to 0.58 on average. If she is then promoted Kindle Unlimited service in the 3rd promotion, she adopts it with a high probability. Exploiting the item relationships and the changes in personal preferences in multiple promotions is able to achieve more adoptions.

3) For $b=300$ and $T=10$, we find two different results when User \#2236 promotes some item to User \#186644. If User \#2236 attempts to promote Kindle to User \#186644 in the 2nd promotion, she probably fails. If both users adopt Garmin nuvi 50 promoted by their common friend in the 5th promotion, User \#2236 is very likely to successfully promote Kindle Voyage to User \#186644 in the 7th promotion. In the case that User \#2236 and User \#186644 both adopt Garmin nuvi 50 in the 5th promotion, they become more similar and socially closer accordingly. This makes the influence strength from User \#2236 to User \#186644 increased from 0.39 to 0.47 on average, which further helps User \#2236 promote Kindle Voyage to User \#186644 in the 7th promotion.

\subsection{Sensitivity Tests}
\label{sec:sensitivity}


\begin{figure*}[t]
    \centering
    \subfigure[\textit{Yelp}.]{
        \centering
        \includegraphics[width=0.24\textwidth]{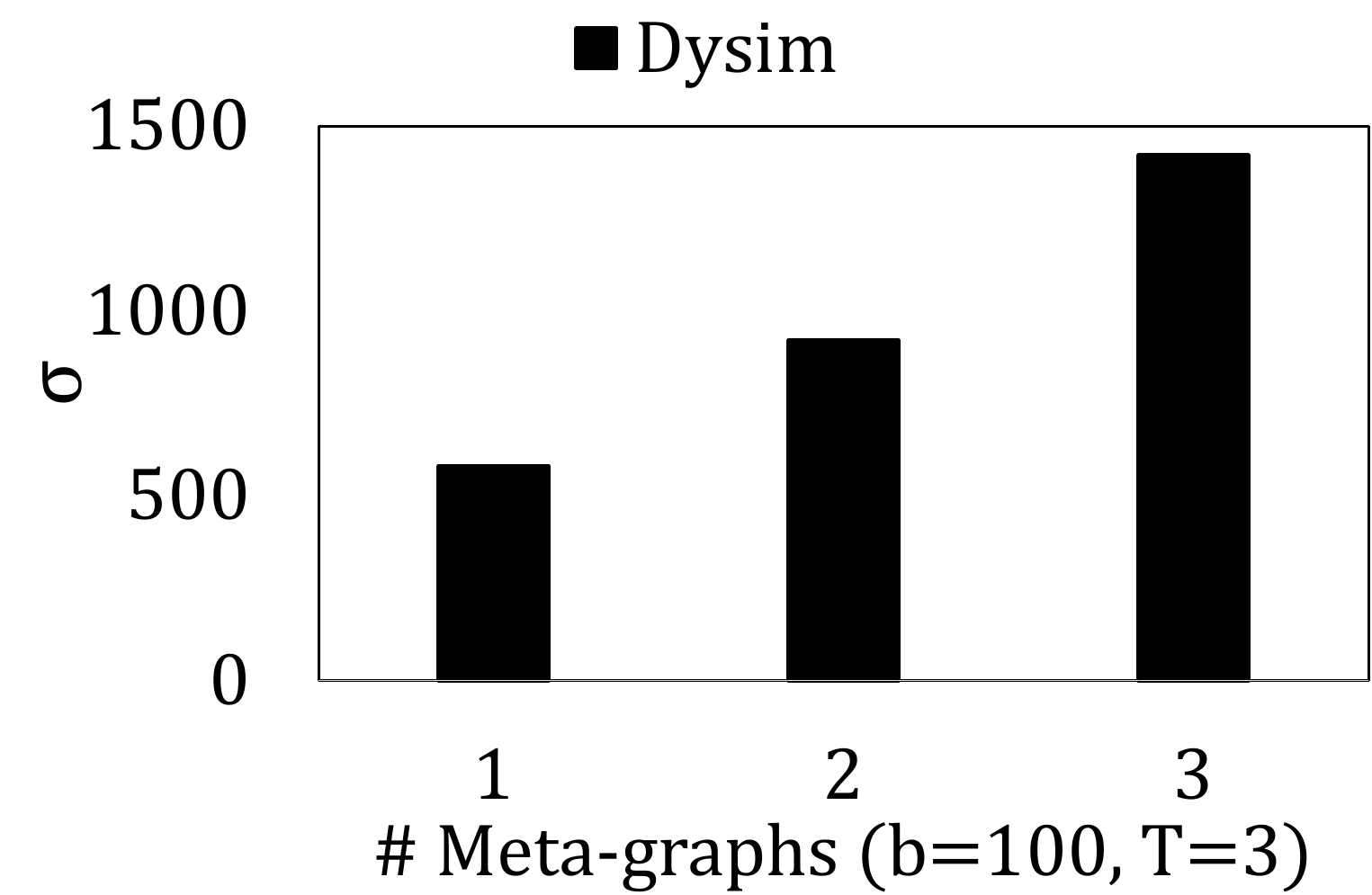}
        \label{Fig:yelp_meta}
    }%
    \subfigure[\textit{Gowalla}.]{
        \centering
        \includegraphics[width=0.24\textwidth]{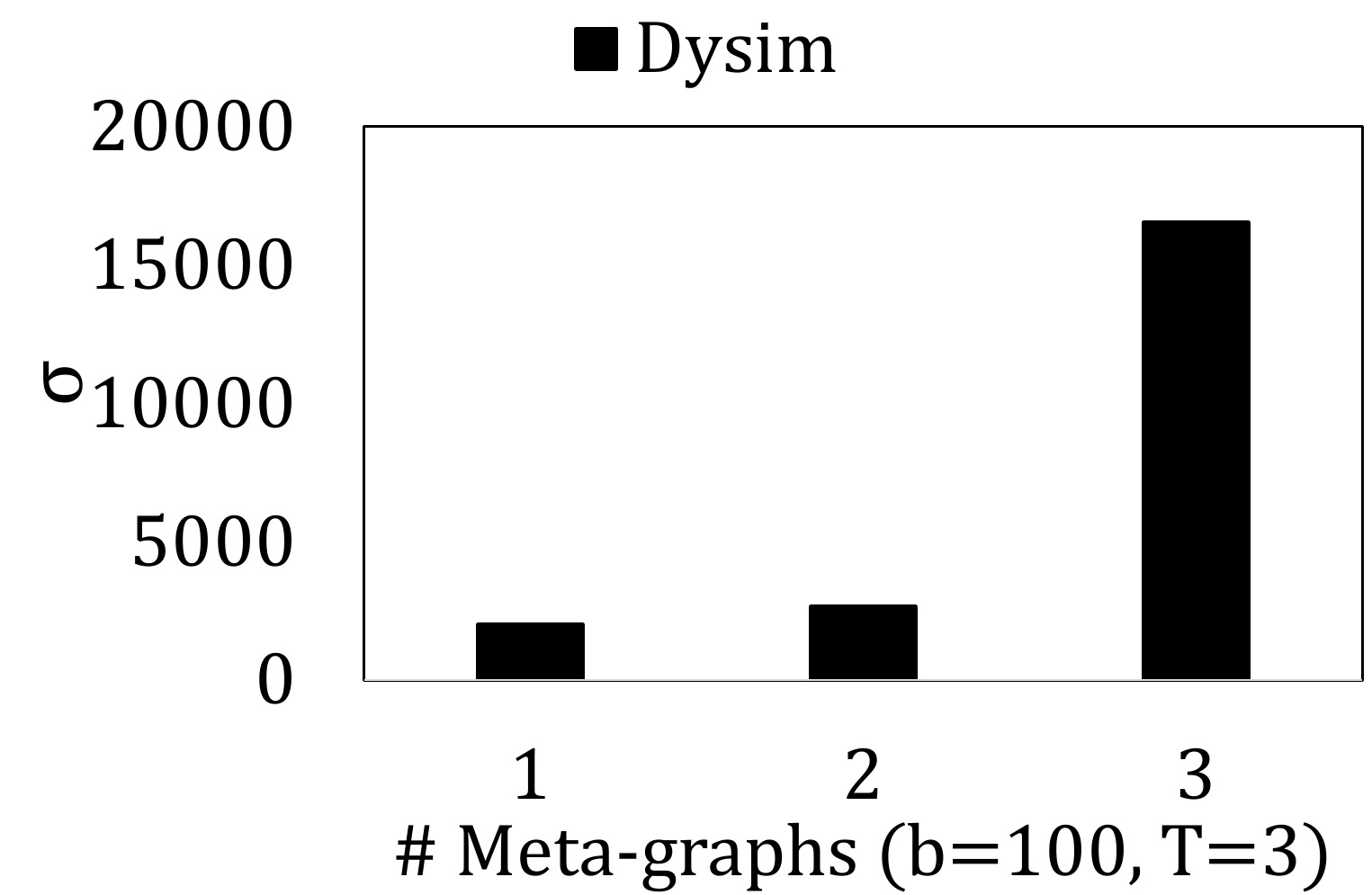}
        \label{Fig:gowalla_meta}
    }%
    \subfigure[\textit{Amazon}.]{
        \centering
        \includegraphics[width=0.24\textwidth]{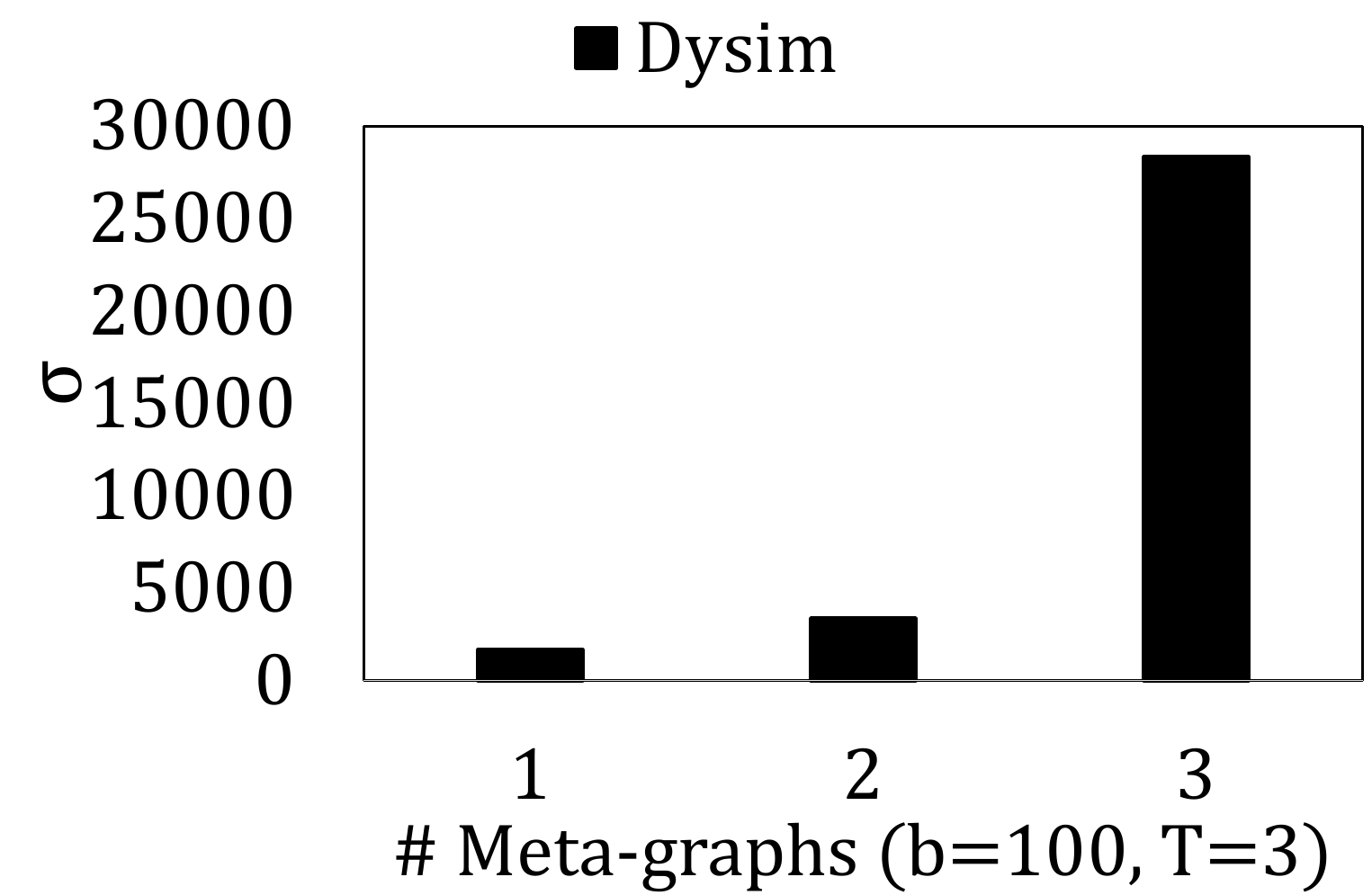}
        \label{Fig:amazon_meta}
    }%
    \subfigure[\textit{Douban}.]{
        \centering
        \includegraphics[width=0.24\textwidth]{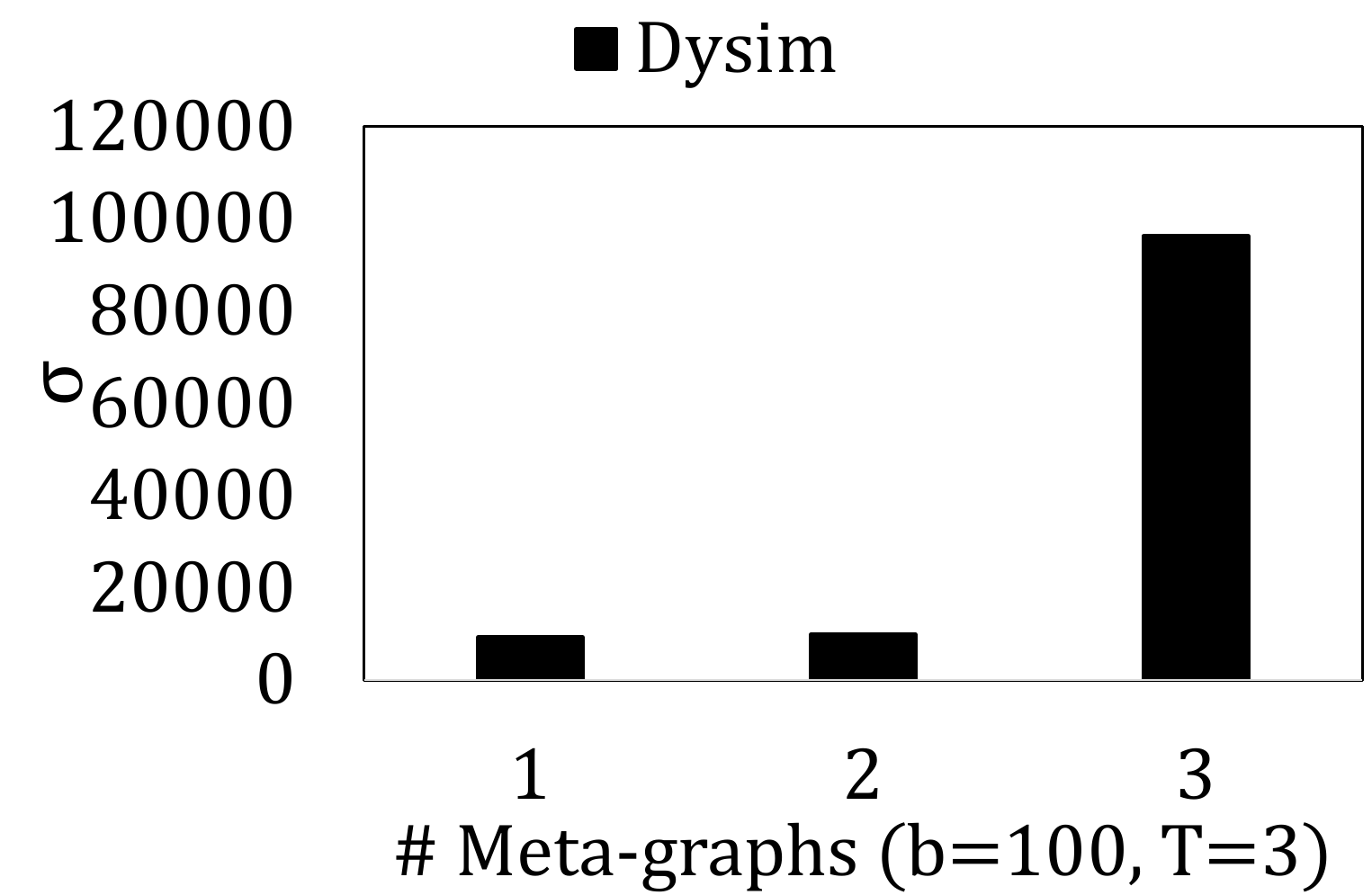}
        \label{Fig:douban_meta}
    }\hfill
    \caption{Sensitivity tests for the number of meta-graphs.}
    \label{Fig:large_meta}
\end{figure*}

\begin{figure*}[t]
    \centering
    \subfigure[\textit{Yelp}.]{
        \centering
        \includegraphics[width=0.24\textwidth]{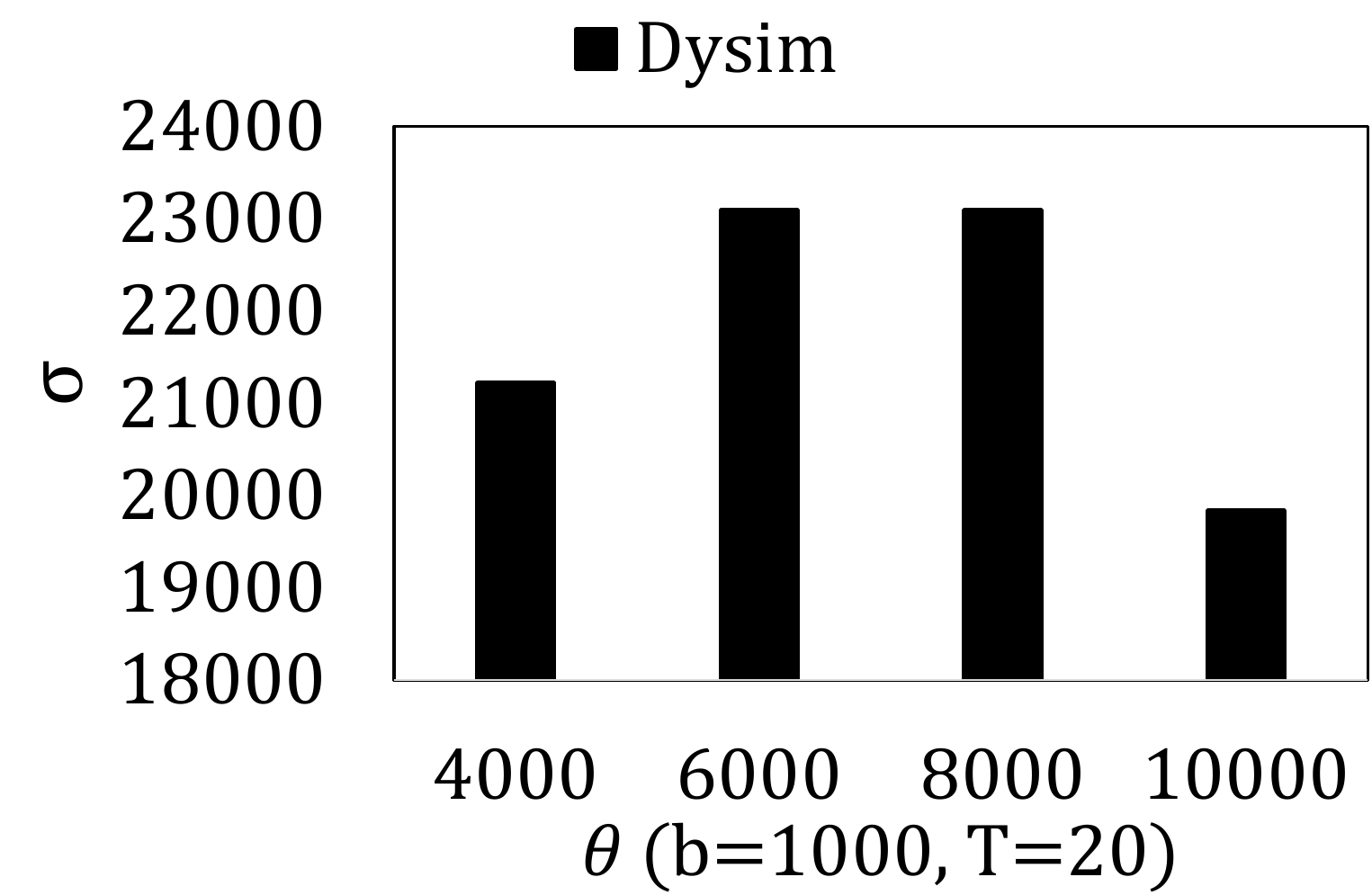}
        \label{Fig:yelp_theta}
    }%
    \subfigure[\textit{Gowalla}.]{
        \centering
        \includegraphics[width=0.24\textwidth]{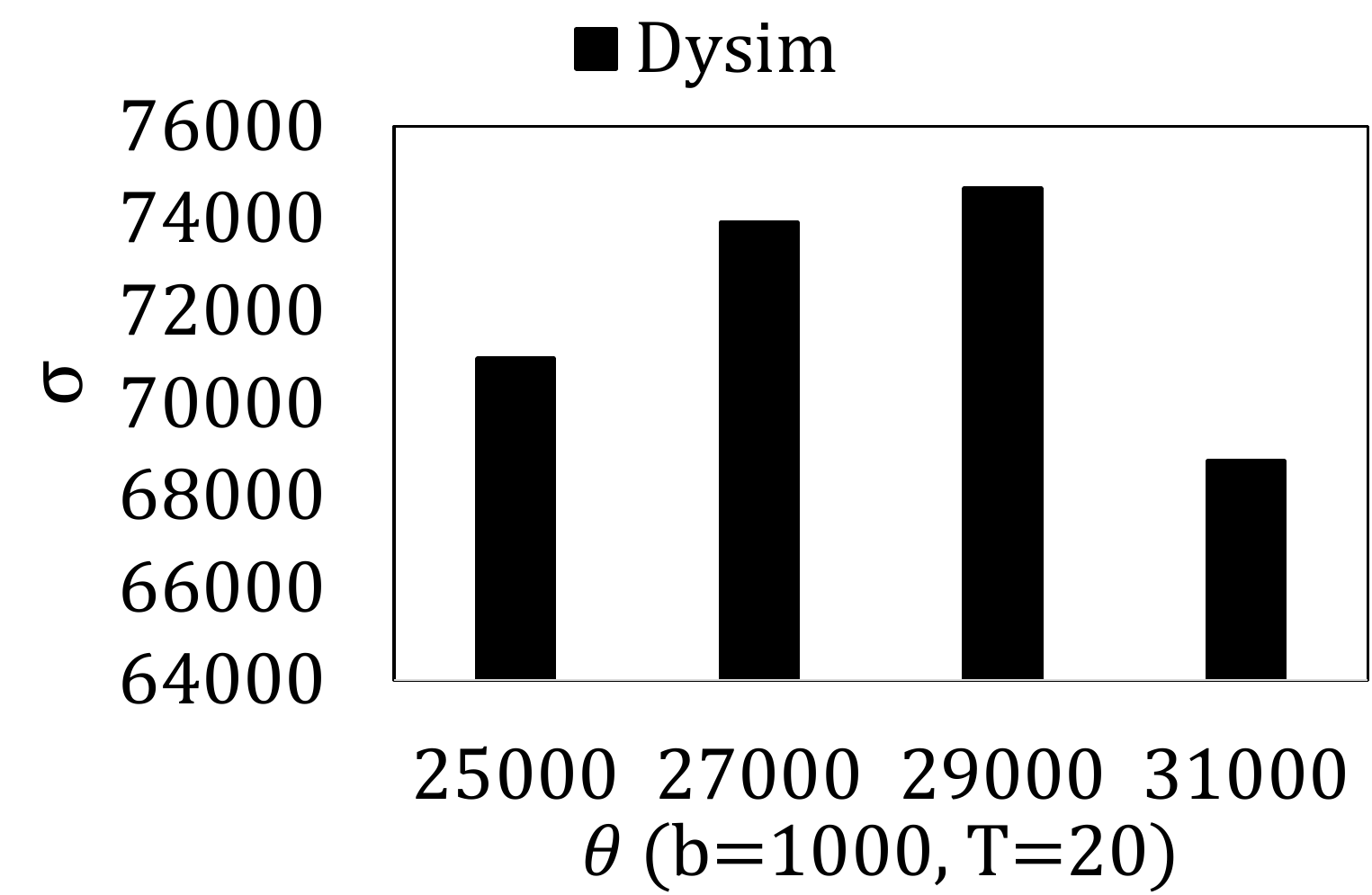}
        \label{Fig:gowalla_theta}
    }%
    \subfigure[\textit{Amazon}.]{
        \centering
        \includegraphics[width=0.24\textwidth]{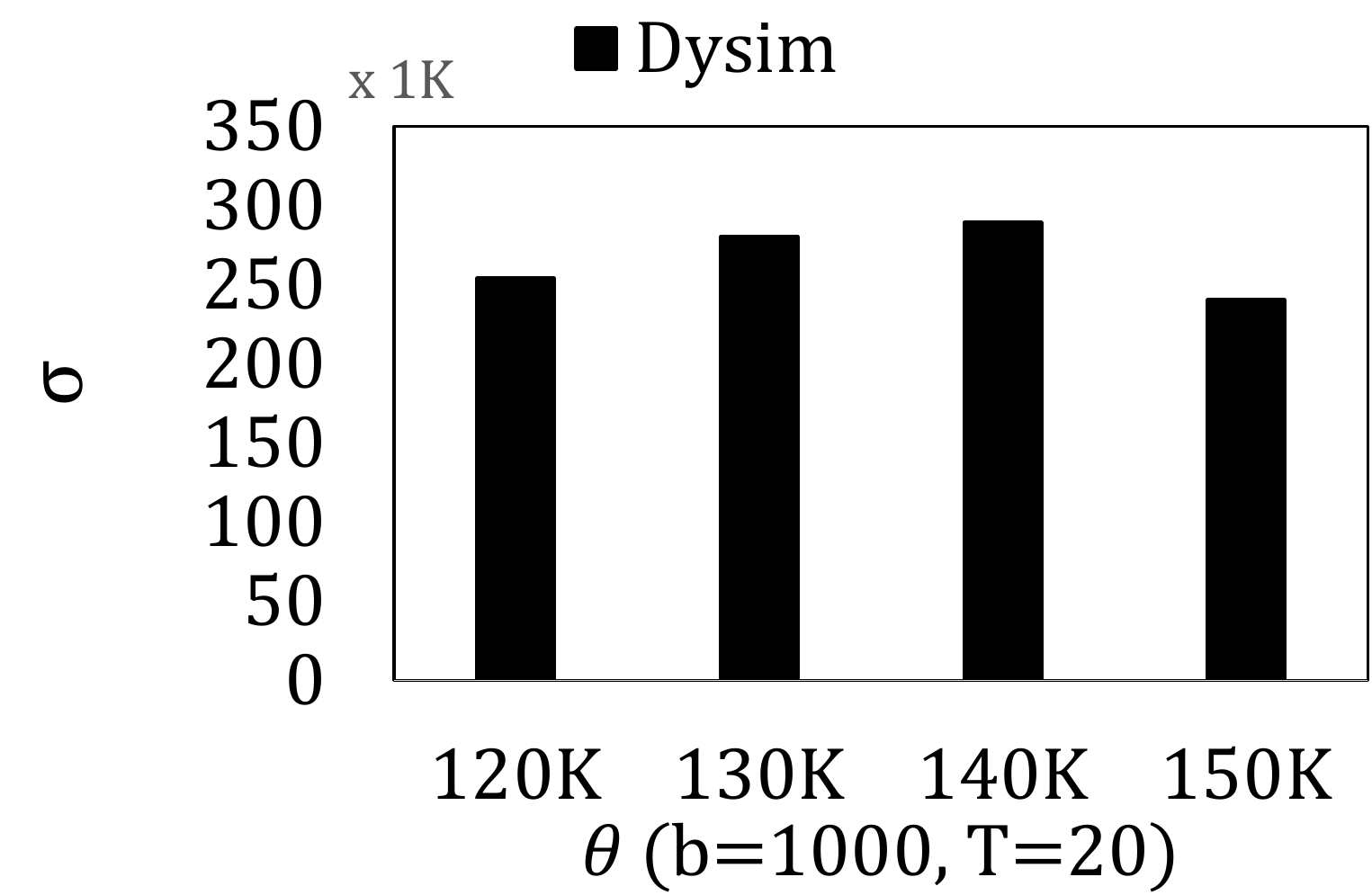}
        \label{Fig:amazon_theta}
    }%
    \subfigure[\textit{Douban}.]{
        \centering
        \includegraphics[width=0.24\textwidth]{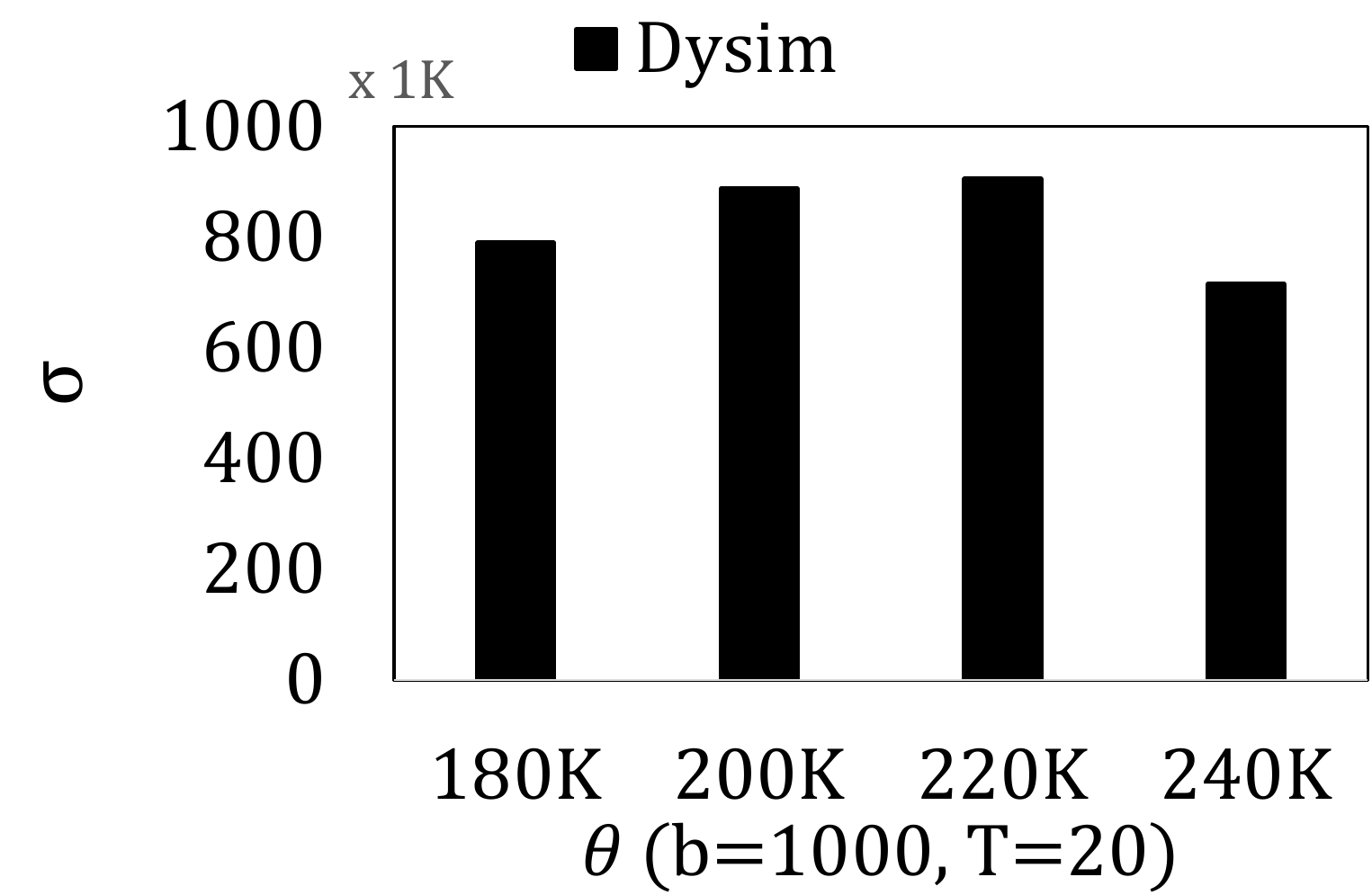}
        \label{Fig:douban_theta}
    }\hfill
    \caption{Sensitivity tests for $\theta$ in TMI.}
    \label{Fig:large_theta}
\end{figure*}

Fig.~\ref{Fig:large_meta} compares the importance-aware influence under different numbers of meta-graphs. With more meta-graphs, \algo\ achieves larger importance-aware influence by better capturing users' perceptions, demonstrating the importance of modeling item relationships via KG and meta-graphs for a sequence of promotions.
Fig.~\ref{Fig:large_theta} shows the sensitivity to the threshold $\theta$ for identifying target markets with common users in TMI. A large $\theta$ slightly reduces the importance-aware influence since target markets may promote substitutable items to their common users. Nevertheless, a small $\theta$ also slightly deteriorates the performance because the promotional duration of a target market may be insufficient (when there are too many target markets in the same $\mathcal{G}$) to foster the promotion of complementary items properly. 
}

\section{Conclusion}

To the best of our knowledge, this paper makes the first attempt to study the problem of influence maximization under a sequence of promotions for multiple relevant items. By exploring KG and meta-graphs to capture dynamic personal perceptions of item relationships, we formulate a new problem, named \problem, \opt{short}{and \revise{its fundamental problem, namely \sproblem,} }to choose items and select seed users for promotions at proper timings. We prove the hardness of \revise{\opt{short}{\sproblem\ and \problem}\opt{full}{\problem}} and design an approximation algorithm \algo\ to solve \problem. \algo\ first identifies nominees and target markets to promote complementary items to socially close users in consecutive promotions. For each target market, \algo\ prioritizes the items to be promoted by dynamic reachability of items. Then, \algo\ determines proper promotional timings with the highest substantial influence for each nominee. \opt{short}{\revise{We also design an approximation algorithm \salgo\ for \sproblem.}}
Experiments on real social networks and KGs demonstrate that \revise{\opt{short}{\salgo\ and }\algo\ can effectively achieve up to \opt{short}{10.95 times and }6.7 times\opt{short}{, respectively,} of the influence spread.} \revise{Furthermore, the empirical study validates that exploring the dynamic personal perceptions of item relationships, dynamic preference for items, dynamic social influence strength, and item associations is crucial for influence maximization under a sequence of promotions on relevant items.}

\section*{Acknowledgement}
We thank to National Center for High-performance Computing (NCHC) of National Applied Research Laboratories (NARLabs) in Taiwan for providing computational and storage resources.

\bibliographystyle{IEEEtran}
\bibliography{IEEEabrv,bibliographies/ref}


\end{document}